\documentclass[a4paper,11pt]{article}
\usepackage[utf8]{inputenc}
\makeatother

\usepackage{lmodern}
\usepackage{mathpazo}
\usepackage[dvipsnames]{xcolor}
\usepackage{fullpage}

\usepackage{amsmath}
\usepackage{amssymb}
\usepackage{amsthm}
\usepackage{mathrsfs}
\usepackage{amsfonts}
\usepackage{cases}

\newtheorem{definition}{Definition}
\newtheorem{proposition}{Proposition}
\newtheorem{theorem}{Theorem}

\newtheorem{lemma}{Lemma}

\newtheorem{assumption}{Assumption}

\theoremstyle{remark}
\newtheorem{example}{Example}
\theoremstyle{remark}
\newtheorem{remark}{Remark}

\usepackage{listings}

\usepackage[capposition=top]{floatrow}
\usepackage{algorithm}
\usepackage{algorithmic}
\usepackage{import}
\usepackage{comment}
\usepackage{natbib}
\usepackage{ascmac} % box
\usepackage{mathtools} % coloneqq
\usepackage{bbm}            % indicator function
\usepackage{fancybox}
\usepackage[at]{easylist}
\usepackage{multicol}
\usepackage{appendix}
\usepackage{hyperref}
\hypersetup{
    colorlinks=true,
    linkcolor=blue,
    filecolor=magenta,      
    urlcolor=cyan,
}
\usepackage{enumitem}
\usepackage{amsmath}
\usepackage[normalem]{ulem}
\usepackage{caption}
\usepackage{subcaption}

\usepackage{empheq} 

\newcommand{\mA}{\mathcal{A}}

\newcommand{\mE}{\mathcal{E}}

\newcommand{\mM}{\mathcal{M}}

\newcommand{\E}{\mathbbm{E}}

\newcommand{\argmax}{\mathop{\rm arg~max}\limits}
\newcommand{\argmin}{\mathop{\rm arg~min}\limits}

\DeclareMathOperator{\supp}{\mathrm{supp}}

\renewcommand{\tilde}{\widetilde}

\renewcommand{\phi}{\varphi}
\renewcommand{\epsilon}{\varepsilon}

\definecolor{C_pink}{RGB}{255, 64, 159}
\definecolor{C_Orange}{RGB}{204,85,0}
\definecolor{magenta}{RGB}{95,2,31}

\usepackage{graphicx}
\usepackage{amsmath}
\usepackage{amssymb}
\usepackage{mathtools}
\usepackage[dvipsnames]{xcolor}
\usepackage{calc}
\usepackage{caption, subcaption, floatrow}

\usepackage{amsopn}

\usepackage{subfiles}

\usepackage{tikz}

\usepackage{setspace}

\usepackage[a4paper, lmargin=0.1\paperwidth, rmargin=0.1\paperwidth, tmargin=0.1111\paperheight, bmargin=0.1111\paperheight]{geometry}
\title{Adversarial Elicitation}

\author{Andrei Iakovlev}

\begin{document}

%\begin{titlepage}
%	\centering % Center all text on the page
	
	%------------------------------------------------
	% Title
	%------------------------------------------------
	
%	\vspace*{2cm} % Add vertical space at the top
	
%	{\Huge\bfseries Adversarial Elicitation\par} % Paper title
	
%	\vspace{2.5cm} % Vertical space between title and author
	
	%------------------------------------------------
	% Author
	%------------------------------------------------
	
%	{\Large Andrei Iakovlev\par} % Author's name
	
%	\vspace{0.5cm} % Vertical space between name and institution
	
%	{\large\textit{Northwestern Unversity}\par} % Author's institution
	
%	\vfill % Pushes the following content to the bottom
	
	%------------------------------------------------
	% Advisors
	%------------------------------------------------
	
%	\begin{minipage}{0.6\textwidth} % Creates a box for the advisors
%		\centering
%		\large
%		\textbf{Advisors:} \\
%		Professor Jeffrey Ely
%		(Northwestern University) (Co-chair) \\
%		Professor Annie Liang
%		(Northwestern University) (Co-chair)\\ 
%		Professor Benjamin Golub (Northwestern University)
%	\end{minipage}
	
%	\vspace{1.5cm} % Vertical space between advisors and date
	
	%------------------------------------------------
	% Date
	%------------------------------------------------
	
%	{\large \today\par} % Use \today for the current date, or write a specific date
	
%	\vspace{1cm} % Add some space at the bottom
	
%	{\small Job Market Paper \\ \href{mailto:andrei.iakovlev@u.northwestern.edu}{andrei.iakovlev@u.northwestern.edu} \\ \href{https://sites.northwestern.edu/andreiiakovlev}{sites.northwestern.edu/andreiiakovlev}}
	
%\end{titlepage}

\maketitle

\begin{abstract}
	%\cmt{The first sentence is not as exciting, plus babbling comes to mind and trivializes the question. }
	When multiple informative equilibria are possible in a general cheap talk game, how much information can a principal \emph{guarantee} herself? To answer this question, I define the notion of worst-case implementation—implementation via the worst non-trivial equilibrium of a mechanism. Under this objective, standard full-commitment mechanisms fail, yielding the principal no more than her no-communication payoff. Partial commitment, however, can provide a strict improvement. The possibility of facing a strategic, uncommitted principal disciplines the agent's reporting incentives across all equilibria. I characterize the worst-case optimal mechanism and payoff under weak assumptions on the players' preferences. The optimal mechanism has a simple two-message structure. The agent's messages are \emph{polarizing}, designed to maximize their strategic impact on the uncommitted principal's actions. If full commitment is interpreted as decision automation, these results highlight a fundamental complementarity between automated and human decision-makers: the presence of a human aligns the agent's incentives to reveal information, while the automated system leverages these informative reports to take accurate actions. This strategic interaction is often overlooked by literature that compares the two based on standalone decision accuracy. Applications of the model include bail-setting automation, fintech lending, delegation, lobbying, and audit design.
		\end{abstract}

\section{Introduction}

Should decisions be guided by rigid rules or by human discretion? This classic question is central to institutional design, from judicial sentencing to automated lending. A commitment to a fixed rule, or mechanism, offers a powerful advantage: it allows a principal (the decision maker) to manage the incentives of an agent (the informed party) and steer the outcome to her most preferred one. This insight is the foundation of modern mechanism design. Yet, commitment is also a source of fragility. Performance of a committed rule relies on the agent following the principal-preferred strategy. This means that a rigid rule can be exploited if the agent follows a different, payoff-reducing strategy. Such a rule-bound principal is vulnerable to being \emph{gamed}. This raises a crucial question: should a principal who is concerned about such adversarial play fully commit to a decision rule, or does retaining discretion provide a valuable safeguard?

To investigate this question I study a principal-agent model with cheap talk. The agent  has a state-independent payoff (similar to \cite{lipnowski2020cheap} and \cite{partial_lipnowski2022}). The principal can commit to an action plan and choose the probability with which her commitment to the plan binds. With the remaining probability, she retains discretion to act strategically. I evaluate each pair (action plan, commitment probability) by its guarantee --- the lowest equilibrium payoff it can deliver to the principal. My question is: what is the largest guarantee and the mechanism that attains it?

What makes this question challenging is that the standard techniques of mechanism design offer little help in characterizing the guarantee. Unlike the question of optimality, which can be fully resolved by focusing on full commitment (see Section \ref{sec:benchmark}), finding the largest guarantee requires extending the search beyond full commitment, since full commitment mechanisms offer surprisingly low securable payoffs. This is because when the principal fully commits, an agent can always choose to make his messages uninformative ('babble'). This leaves the principal unable to guarantee any improvement over having no information at all. 

My central insight is that the principal can guarantee herself a better outcome by committing partially. I characterize the mechanism and level of commitment with the largest guarantee, which I call worst-case optimal. I show that under weak assumptions on the player's preferences, plus a regularity condition, the worst-case optimal mechanism has three key properties:

\begin{easylist}[itemize]
	@  the communication is coarse: the principal optimally elicits a yes-no report from the agent, regardless of the richness of the agent's private information. This is in contrast to standard optimality, which generally requires the agent to report as many messages as there are states\footnote{More precisely, standard optimality requires the agent to report as many messages as there are actions to be implemented.}.  
	@  the agent's reporting strategy is polarizing: the agent's messaging maximizes the variance in his own utility, should the principal best-respond to his messages. Less informative strategies, which plague mechanisms with full commitment, are not incentive-compatible under the worst-case optimal mechanism. 
	@ optimal commitment is necessarily partial but extensive. The rare discretionary interventions serve to shape the agent's incentives to be informative. The majority of decisions are delegated to the committed rule, which can rely on informative communication from the agent even in the worst equilibrium. 
	\end{easylist}

The model admits a primary interpretation in the context of algorithmic decision-making. In this interpretation, the principal's commitment corresponds to the use of an algorithmic decision rule, while her retention of discretion corresponds to human oversight or intervention. The commitment probability, $\chi$, represents the fraction of cases adjudicated by the algorithm, which follows the pre-specified plan $\pi$. Consequently, full commitment ($\chi=1$) corresponds to a fully automated system, while no commitment ($\chi=0$) corresponds to a system relying exclusively on human decision-makers.

Under this interpretation, the structure of the worst-case optimal mechanism reveals a strategic complementarity between human and algorithmic decision-makers. The two play distinct and mutually reinforcing roles in eliciting information. The possibility of facing a strategic human, who will form posterior beliefs and best-respond, disciplines the agent's reporting incentives across all equilibria. This threat of strategic interpretation compels the agent to adopt an informative messaging strategy. The algorithmic component, in turn, leverages these informative reports to execute the committed action plan with precision, thereby realizing the gains from the elicited information. This analysis suggests that a singular focus on the standalone decision accuracy of humans versus algorithms, common in the existing literature (\cite{bail_kleinberg2018}, \cite{insruance_fraud_aslam2022}, \cite{insurance_aihumanmix_gao2023}), overlooks the critical role that human discretion plays in shaping the information environment in which both decision-makers operate.

This paper is organized as follows. Section 2 formally introduces the principal-agent model of cheap talk. The model is characterized by two key assumptions: the agent's payoff is state-independent, and it is strictly monotone in the principal's continuous action. This section also defines a mechanism with partial commitment and the solution concept, perfect Bayesian equilibrium, restricted to non-trivial communication involving at least two on-path messages. As a benchmark, Proposition \ref{prop:full_commitment} characterizes the optimal mechanism, which involves full commitment. This characterization is expressed in terms of the players' \emph{mutual payoff} function, which specifies the principal's payoff for a given level of utility delivered to the agent in each state. The optimal payoff, $\overline V$, is then determined geometrically via the concavification of this function.

Section 3 proceeds to the central problem of worst-case implementation. A key result, Theorem \ref{thm:carrot_stick}, establishes that under generic assumptions, the worst-case optimal mechanism requires only a binary message structure. This implies that the optimal communication protocol reduces to the principal asking the agent a \emph{yes-no} question. In addition, Theorem \ref{thm:carrot_stick} shows that the optimal mechanism must be \emph{polarizing}. Such a mechanism creates maximal separation in the agent's payoffs from the committed plan, a structure sometimes referred to as a \emph{carrot-and-stick} approach. It is then shown that polarizing mechanisms restrict the agent's equilibrium behavior to a specific class of \emph{polarizing strategies}.

When the polarizing strategy is unique, a full characterization of the worst-case optimal mechanism and payoff is derived. The solution employs a geometric approach analogous to that of the full-commitment benchmark, but applied to polarized utility functions derived from the agent's messaging strategy. Finally, the analysis establishes a condition under which the worst-case optimal payoff, $V^\star$, coincides with the optimal payoff, $\overline{V}$. When the agent's private information is binary, the unique polarizing strategy is truth-telling. In this case, the principal can fully elicit the agent's information even in the worst-case equilibrium and thereby achieve the first-best outcome.

\section{Literature}
This paper contributes to several strands of literature, primarily concerning strategic information transmission, mechanism design, and decision automation.

The analysis is situated within the extensive literature on communication with soft information, which originates with the study of cheap talk (\cite{crawford1982strategic}). This literature has largely focused on two benchmarks: Bayesian persuasion, which characterizes the sender's optimal payoff under full commitment power (\cite{kamenica2011bayesian}), and cheap talk, which analyzes communication when the sender has no commitment ability (\cite{lipnowski2020cheap}). Shifting attention to the receiver, this paper examines the intermediate space where the principal (the receiver) has endogenous control over her degree of commitment. The model also relates to work studying the role of soft information in applied settings, such as credit and insurance markets (\cite{soft_agarwal2011role}, \cite{insurance_soft_li2021}). This work focuses on finite private information and, thus, does not consider the effects of multi-dimensional communication (\cite{cheap_talk_persuasion_chakraborty2010}, \cite{multi_cheap_multiple_battaglini2002}, \cite{multi_cheap_checking_ball2025}).

The paper contributes to the theory of mechanism design. The canonical approach in this field assumes the designer's full commitment to a direct mechanism, a paradigm applied in static (\cite{mechanism_design_myerson1981optimal}), dynamic (\cite{mechanism_design_dynamic0_bergemann2019dynamic}, \cite{mechanism_design_dynamic_pavan2014dynamic}), and information design settings (\cite{kamenica2011bayesian}). While some work considers partial commitment (\cite{partial_and_revelation_bester2001contracting}), this paper introduces the objective of optimal worst-case implementation. A central finding is that, under this objective, the designer benefits from partial commitment, and the optimal mechanism requires only a two-message report, regardless of the size of the agent's type space. This contrasts sharply with the standard revelation principle (\cite{mechanism_design_myerson1981optimal}, \cite{partial_and_revelation_bester2001contracting}). The analysis extends certain methodological tools, such as the geometric approach (\cite{kamenica2011bayesian}), to the partial commitment setting and develops alternatives for others, like the revelation principle, that do not directly apply.

The concept of worst-case implementation is new to the implementation literature. It is related to, but distinct from, the "price of anarchy," which compares the best and worst Nash equilibria of a given game (\cite{worst_equilibria_koutsoupias1999}). It is a weaker solution concept than full implementation, which requires all equilibria of a mechanism to yield a specific outcome (\cite{implementation_maskin_maskin1999nash}). The focus on robustness connects this paper to the literature on robust mechanism design, which considers implementation that is robust to uncertainty over the agent's private information (\cite{bergemann2016informationally}) or preferences (\cite{robust_bergemann2023}). In contrast, this paper focuses on robustness to equilibrium selection for a fixed economic environment. The analysis also complements recent work on persuasion with partial sender commitment (\cite{partial_lipnowski2022}, \cite{partial_bayesian_min2021}) by analyzing the problem from the perspective of the principal who controls her own commitment.

Finally, the model offers insights for several applied fields. It is relevant to the classic settings of delegation (\cite{delegation_dessein2002authority}, \cite{delegaion_alonso2008optimal}) and lobbying (\cite{lobbying_rulesofcommunication_austen1990information}). The leading application is to the literature on decision automation. This topic is particularly salient in fintech, with the rise of algorithmic products (\cite{fintech_buchak2018}) and attendant debates on discrimination (\cite{fintech_discrimination_bartlett2022}); in insurance, where AI is used for underwriting and fraud detection (\cite{insruance_fraud_aslam2022}, \cite{insurance_aiSurvey_bhattacharya2025}); and in medicine, with the development of diagnostic algorithms (\cite{medicine_rajpurkar2017chexnet}, \cite{development_medicine_gulshan2016}). Much of this literature compares the flat accuracy of algorithmic and human decision-makers (\cite{bail_kleinberg2018},\cite{underwriting_jansen2025}, \cite{insurance_aihumanmix_gao2023}, \cite{automation_iakovlev2025value}). My contribution is to provide a strategic rationale for mixed human-AI systems, showing how the presence of a discretionary human decision-maker can discipline an agent's incentives to reveal information to an automated system. This focus on the agent's incentives distinguishes the analysis from work that studies how AI oversight affects the principal's own cognitive process (\cite{aioversight_almog2024}).

\section{The Model}

\label{sec:model}

  There are two players: a principal (she) and an agent (he). The agent privately observes a state of the world $\theta\in \Theta$ and sends a costless message $m\in \mM$ to the principal. Both $\Theta$ and $\mM$ are finite, and $|\mM|\ge |\Theta|=n$. The state $\theta$ follows a distribution $\boldsymbol{\rho}\in\Delta(\Theta)$, which is commonly known. 

  After observing the agent's message, the Bayesian principal updates her beliefs about $\theta$ and takes an action $a\in\mA\equiv[0,1]$. The agent and the principal have continuous utility functions $u_A:\mA \rightarrow [0,1]$ and $u_P:\mA\times\Theta\rightarrow [0,1]$, where $u_A(a)$ is strictly monotone\footnote{The results extend when strict monotonicity is not imposed. The corresponding adjustment is discussed in footnote \label{ftn:monotone_cont}.} and normalized to $u_A(0)=0$ and $u_A(1)=1$. Both players aim to maximize their expected payoffs and, for convenience, actions $0$ and $1$ are the principal's optimal actions in some $\theta,\theta'\in\Theta$.

Prior to communication, the principal chooses a mechanism $Q=(\pi,\chi)$. The \emph{credibility level} $\chi\in[0,1]$ is the probability that the principal is bound by her committed action plan, $\pi=\{\pi_m\}_{m\in\mathcal{M}}$, where each $\pi_m\in\Delta(\mathcal{A})$. With probability $1-\chi$, the principal is uncommitted and selects an action to maximize her expected utility, conditional on her posterior beliefs $\mu(\cdot|m)$.

An equilibrium of the game induced by $Q$ consists of an agent's messaging strategy, $\sigma_A:\Theta\rightarrow\Delta(\mathcal M)$, and an uncommitted principal's action strategy, $\sigma_P:\mathcal M\rightarrow \Delta(\mathcal{A})$. The agent's expected payoff from sending message $m$ is given by:
\[U_A(m)=\chi\mathbb{E}_{\pi_m}[u_A(a)]+(1-\chi)\mathbb{E}_{\sigma_P(m)}[u_A(a)]\]
A strategy profile $( \sigma_A, \sigma_P)$ constitutes an equilibrium if it satisfies the following conditions:

\paragraph{Equilibrium.} An equilibrium of the game induced by $Q$ consists of an agent's messaging strategy, $\sigma_A:\Theta\rightarrow\Delta(\mathcal M)$, and an uncommitted principal's action strategy, $\sigma_P:\mathcal M\rightarrow \Delta(\mathcal{A})$. The agent's expected payoff from sending message $m$ is given by:
\[U_A(m)=\chi\mathbb{E}_{\pi_m}[u_A(a)]+(1-\chi)\mathbb{E}_{\sigma_P(m)}[u_A(a)]\]
A strategy profile $( \sigma_A, \sigma_P)$ constitutes an equilibrium if it satisfies the following conditions:

 %\footnote{A typical equilibrium messaging strategy is a mixing one. My model admits a purification argument, which lets me interpret such mixing strategies as the principal's ambiguity over how agents of different types resolve uncertainty via small payoff shocks.} 

\

\begin{enumerate}
\item \emph{Agent-Optimality.} For each state $\theta \in \Theta$, the agent's strategy $\sigma_A(\cdot|\theta)$ is supported on the set of messages that maximize his expected utility:
\[ \argmax_{\tilde m\in \mathcal M}U_A(\tilde m) \tag{$\text{AO}$} \label{eqm:ind}\]

\item \emph{Principal-Optimality.} For each on-path message $m$, the uncommitted principal's strategy $\sigma_P(\cdot|m)$ is supported on the set of actions that maximize her expected utility, given the posterior belief $\mu(\cdot|m)$ derived from $\sigma_A$ and the prior $\rho$:
\[\argmax_{\tilde a\in\mathcal{A}} \mathbb{E}_{\mu(\cdot|m)}[ u_P(\tilde a,\theta)] \tag{PO} \label{eqm:br}\]
For any off-path message, $\sigma_P$ must be a best response to some belief.

\item  \emph{No perfect pooling.} The ex-ante probability of each message is less than one: $\sum_\theta\sigma_A(m|\theta)\rho(\theta)<1$ for all $m\in\mathcal M$.
\end{enumerate}

 %In addition to standard equilibrium conditions, I require that $P(m)>0$ for all $m$. In a binary message case this condition filters out babbling equilibria, but does not exclude non-informative messaging. Indeed, since exchanging a single on-path message is equivalent to not having communication at all, by focusing on equilibria with $P(m)>0$ I restrict attention to scenarios with non-trivial message exchanges, whether or not the messages are informative\footnote{The argument for $|\mM|>2$ will go differently.}. In practice, there is strong evidence that agents tend to over-communicate in Cheap Talk games (Cai, Wang, 2006) including sending informative messages when only a babbling equilibrium should be possible. 

\

The agent-optimality and principal-optimality conditions are standard requirements for a Perfect Bayesian Equilibrium. The No Perfect Pooling condition is imposed for exposition. The presence of pooling equilibria in particular, and babbling equilibria in general, is a standard occurrence in cheap talk games. The family of mechanisms, studied in this paper, cannot eliminate the pooling equilibrium. Instead, my results show that the principal can achieve a discrete improvement upon the pooling equilibrium.  \footnote{There exists a perturbed version of the present model, in which the No Perfect Pooling condition can be replaced by focusing on neologism-proof equilibria (\cite{neologism_farrell1993meaning}).} This restriction is consistent with empirical evidence that suggests aversion to pooling strategies (\cite{nobabbling_empirical_cai2006}). The main results are robust to alternative refinements, such as restricting attention to equilibria with strictly positive informational context.\footnote{That is, if we exclude babbling equilibria.}

%The modification is adding value $\varepsilon$ to the agent's payoff from any (!) message as long as there are two or more messages. This does not affect any equilibria or any on-path payoffs (a+\varepsilon vs b+\varepsilon is equivalent to a vs b), but with this modification for any positive \varepsilon the pooling equilibrium is not neologism-proof (and possibly fails other similar refinements) in the polarizing mechanism sequence. This is because in the original model the agent's payoffs in the pooling and the polarizing equilibrium both converge to the same payoff $\E_\pi[u]$ as $\chi\rightarrow 1$. In the modified model, the agent's payoff from any on-path message in the polarizing equilibrium converge to $\E_\pi[u]+\varepsilon$, while the pooling payoff converges to $\E_\pi[u]$. This means that for $\chi$ sufficiently close to zero, the agent would strictly prefer to coordinate on the polarizing equilibrium. And since in this equilibrium the agent's are indifferent between the messages, the agent that keeps using the message with the largest committed payoff will not have incentives to mimic the deviating one.

To simplify the exposition, the principal's off-path beliefs are not explicitly modeled. The analysis proceeds under the assumption of pessimistic beliefs: any off-path message $m$ induces the uncommitted principal to take an action that yields the agent his lowest possible payoff. Under this assumption, the agent's expected payoff from an off-path deviation is determined entirely by the committed plan: $U_A(m) = \chi\mathbb{E}_{\pi_m}[u_A(a)]$. This specification does not affect the qualitative nature of the results.

  \paragraph{Implementation.} %Denote by $V_B$ is the principal's \emph{baseline payoff}, achieved when the players don't communicate:
	% \[V_B:=\max_{\Tilde a\in\mA}\E_{\boldsymbol\rho}[u_P(\Tilde a,\theta)]\]	 

	 Denote by $\mE(Q)$ the set of equilibria induced by $Q\equiv( \pi,\chi)$.  Say that a messaging strategy $\sigma_A$ is feasible under a mechanism $Q$ if it is a part of some equilibrium, induced by $Q$, so   $( \sigma_A,\sigma_P)\in\mE(Q)$ for some $\sigma_P$.  Denote the set of messaging strategies, feasible under $Q$ by $\Sigma(Q)$. I take the convention that if $\mathcal E(Q)=\varnothing$, then the principal's payoff from using $Q$ is her no communication payoff.
	 
	 For any mechanism $Q$ and a strategy profile $E=(\sigma_A,\sigma_P)$ denote the principal's expected payoff by
%\E_{Q,P}[u_P(a,\theta)]
\[V(Q,E)=\chi \sum_{\theta,m}\rho(\theta)\sigma_A(m|\theta)\E_{\pi_m}[u_P(a,\theta)]+(1-\chi)\sum_{\theta,m}\rho(\theta)\sigma_A(m|\theta)\mathbb E_{\sigma_P(m)}[u_P(a,\theta)]\]

\paragraph{Worst-case implementation. } This paper focuses on mechanisms that are robust to equilibrium selection. To formalize this I define the concept of worst-case implementation.

\begin{definition}[Worst-case implementation]
	A payoff $v$ is \emph{worst-case $\chi$-implementable} for $\chi<1$ if it constitutes the principal's lowest equilibrium payoff under some mechanism $Q=(\pi,\chi)$:
\[v=\inf_{E\in\mathcal E(Q)}V(Q,E)\]	
A payoff $v$ is \emph{worst-case $1$-implementable} if it is the limit of worst-case $\chi_n$-implementable payoffs $v_n$ for a sequence of commitment levels $\chi_n\rightarrow1$:
\[v=\lim_{n\rightarrow\infty} v_n\]
\end{definition}

The definition of worst-case $1$-implementability ensures that the set of worst-case implementable payoffs is closed.\footnote{
Ensuring closedness is important, since  the correspondence that maps a mechanism $Q$ into the set of the principal's equilibrium payoffs suffers a discontinuity at $\chi=1$  (Lemma \ref{lem:chi_around_1}). } Considering the limiting payoffs as implementable reflects the concept of virtual implementation, where an outcome can be arbitrarily approximated by the implementable ones of nearby mechanisms (\cite{virtual_abreu1992}, \cite{bergemann2016informationally}). The central object of the present analysis is the \emph{worst-case optimal payoff}, defined as the supremum over all worst-case $\chi$-implementable payoffs%\footnote{Without the No Perfect Pooling equilibrium condition any mechanism has at least one cheap talk equilibrium. Namely, when $\chi>0$, sending a message that grants the largest committed payoff, independently of the type, is an equilibrium. The subsequent analysis qualitatively extends to this case under two adjustments. First, the mechanisms should be compared lexicographically as follows. If two mechanisms have the same worst case payoff, then exclude the respective equilibrium from each mechanism and compare the two mechanisms based on the new infimum. Second, instead of maximizing the value function in Theorem \ref{thm:characterization_general}, the worst-case optimal payoff is computed at the no-communication optimum $u_\rho$. \label{ftn:lexicographic}} 
for $\chi \in [0,1]$:
\[V^\star=\sup_Q\inf_{E\in\mathcal E(Q)}V(Q,E)\]

In the next section I characterize the optimal payoff $\overline V$ using the standard mechanism design approach. The optimal mechanism and the optimal payoff admit a simple geometric characterization. The geometric approach, developed in Section \ref{sec:benchmark} will be helpful in analyzing worst-case implementation in Section \ref{sec:worst-case_implementaation}.

%\cmt{Refinement.} There are two ways to deal with the pooling equilibiurm. 

\label{sec:exp}

\section{Illustrative example}

%\begin{example}

\label{exp:socrates} 

This section demonstrates that by properly choosing an action plan a partially committing principal can guarantee that the agent reveals his information in any feasible equilibrium, which is a powerful tool for worst-case implementation. The state and message spaces are binary, $\Theta=\{\theta_l,\theta_h\}$ and $\mathcal{M}=\{l,h\}$, with prior $P(\theta=\theta_h)=\rho$. The agent's utility is linear in the action, $u_A(a)\equiv a$. The uncommitted principal's best response to a belief $\mu=P(\theta=\theta_h)$ is a strictly increasing function $a(\mu)$, with $a(0)=0$ and $a(1)=1$.\footnote{A strictly monotone $a(\mu)$ can be generated by strictly single-peaked principal's utilities, such as in the Crawford-Sobel model. The Socrates Effect extends beyond beyond linear $u_A$ and monotone $a(\mu)$. A simple sufficient condition for the Socrates Effect is that the principal takes actions $0$ or $1$ only when she knows the state is $\theta_l$ or $\theta_h$.}

Consider a \emph{naive mechanism} $Q_N$ defined by a commitment probability $\chi=1/2$ and committed actions $\pi(l)=0$ and $\pi(h)=1$, where, in abuse of notation, $\pi(m)$ represent deterministic actions. This mechanism induces a unique, fully revealing equilibrium.

\begin{lemma}[Socrates Effect]
The naive mechanism $Q_N$ induces a unique non-trivial equilibrium, in which the agent's strategy is separating:
\[ \sigma_A(m=h|\theta_l)=1\]
\[\sigma_A(m=l|\theta_h)=1\] 
\label{lem:socrates}
\end{lemma}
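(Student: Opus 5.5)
The plan is to exploit the agent's state-independent preferences. Because of the No Perfect Pooling condition and $|\mathcal M|=2$, both messages $l$ and $h$ must be sent with positive ex-ante probability in any non-trivial equilibrium. Since $u_A$ does not depend on $\theta$, every type that sends a message must find it optimal. If both messages are on path, then by (AO) some type finds $l$ optimal and some type finds $h$ optimal, and since all types face the same payoffs $U_A(\cdot)$, we get the indifference condition $U_A(l)=U_A(h)$. This single equation will pin down the posteriors.

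Let $\mu_l,\mu_h$ be the posteriors after $l$ and $h$. Both are well defined by Bayes' rule because both messages are on path. By (PO) and the assumption that the best response is the function $a(\mu)$, the uncommitted principal plays $a(\mu_m)$ after $m$. Then $U_A(m)=\tfrac12\pi(m)+\tfrac12 a(\mu_m)$, and the indifference condition becomes
\[ a(\mu_l)=1+a(\mu_h). \]
Since $a$ takes values in $[0,1]$, this forces $a(\mu_l)=1$ and $a(\mu_h)=0$. Strict monotonicity of $a$ with $a(0)=0$, $a(1)=1$ then gives $\mu_l=1$ and $\mu_h=0$. In words, message $l$ is sent only by $\theta_h$ and message $h$ only by $\theta_l$. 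Because both messages have positive probability and each state's total messaging mass is one, every type sends its unique message with probability one. This yields exactly $\sigma_A(h|\theta_l)=1$ and $\sigma_A(l|\theta_h)=1$.

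It remains to check existence. Take the separating profile above with $\sigma_P(l)=1$ and $\sigma_P(h)=0$. The posteriors are degenerate, so (PO) holds. Both messages give the agent $\tfrac12$, so (AO) holds for every type. There are no off-path messages, and no message has ex-ante probability one because $\rho\in(0,1)$.

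The only delicate point, and the one I expect to be the main obstacle, is whether the uncommitted principal's best-response set could contain several actions, so that $\sigma_P$ mixes. I plan to handle it with a weaker bound that uses only the endpoints. The expected $u_A$ under $\sigma_P(m)$ lies in $[0,1]$ and equals $1$ (resp.\ $0$) only if the principal plays action $1$ (resp.\ $0$) almost surely. The paper's footnote condition states that the principal plays these actions only when she knows the state is $\theta_h$ (resp.\ $\theta_l$). Together these deliver the same conclusion without requiring $a(\cdot)$ to be single-valued.
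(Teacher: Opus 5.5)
Your proof is correct and follows the paper's argument: both messages are on path, so the agent must be indifferent, $U_A(l)=U_A(h)$, which with $\pi(l)=0,\ \pi(h)=1$ gives $a(\mu_l)=1+a(\mu_h)$, hence $\mu_l=1$, $\mu_h=0$ and the stated separating strategy. Your explicit check that the separating profile is an equilibrium is a useful addition the paper leaves implicit; the paragraph on mixed best responses is not needed, because the example assumes the best response $a(\mu)$ is single-valued.
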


\textbf{Proof.}
In any non-trivial equilibrium, the agent must be indifferent between sending message $l$ and $h$. This indifference condition, $U_A(l)=U_A(h)$, expands to:
\[\frac{1}{2} \pi(l)+\frac{1}{2}a(\mu(l))=\frac{1}{2}\pi(h)+\frac{1}{2}a(\mu(h))\]
Substituting the committed actions $\pi(l)=0$ and $\pi(h)=1$ yields $a(\mu(l))-a(\mu(h))=1$. Since the range of $a(\mu)$ is $[0,1]$, this equality holds only if $a(\mu(l))=1$ and $a(\mu(h))=0$. Given that $a(\mu)$ is strictly increasing with $a(0)=0$ and $a(1)=1$, this requires the posterior beliefs to be exactly $\mu(l)=1$ and $\mu(h)=0$. These beliefs are consistent only with the specified separating strategy. \qed

The equilibrium strategy is counter-intuitive: the high type reports $l$ and the low type reports $h$. This messaging behavior is \emph{polarizing}—it induces the uncommitted principal to form extreme posterior beliefs and consequently take the most dispersed actions. Such polarizing strategies play a central role in the worst-case optimal mechanisms, as demonstrated in Section \ref{sec:worst-case_implementaation}.

This example highlights a central trade-off inherent to partial commitment: the balance between the ex-post optimality afforded by discretion and the incentive provision required for information revelation. The benefit of the mechanism materializes when the principal retains discretion; informed by the separating equilibrium, she selects her first-best action. The cost is borne when her commitment binds, as inducing separation requires committing to an action plan that is ex-post suboptimal—for instance, taking a suboptimal action $0$ in response to message $l$, which is known to come from the high type.

This equilibrium structure is sustained only at a precise level of commitment. If the principal's discretion was to increase (i.e., $\chi < 1/2$), the agent's indifference condition would be violated. The payoff from reporting $l$, $U_A(l)=1-\chi$, would strictly exceed the payoff from reporting $h$, $U_A(h)=\chi$, causing the separating equilibrium to unravel. Any resulting equilibrium must necessarily be less informative. To restore indifference, the agent's strategy must induce less extreme posterior beliefs, which in turn leads to less dispersed actions from the uncommitted principal, compensating for the increased weight on her strategic response. This reveals a fundamental constraint that the principal faces: an increase in her discretion must be met by a decrease in the informativeness of any equilibrium messaging strategy.

%By Lemma \ref{lem:cheaptalk}, any mechanism with full/abscent commitment admits a babbling equilibrium. 

%\end{example}

The preceding analysis, culminating in Lemma \ref{lem:cheaptalk} and the Socrates Effect of Example \ref{exp:socrates}, establishes a clear trade-off for the principal. Her choice of commitment level can be framed as a selection among three classes of mechanisms, each with a distinct profile of achievable payoffs and worst-case guarantees.

\begin{itemize}
	\item \emph{Full Commitment ($\chi=1$)}: This class offers the highest potential payoff, $\overline V$, but provides the weakest security. As established in Lemma \ref{lem:cheaptalk}, the existence of a babbling equilibrium means its guaranteed payoff cannot exceed the no-communication baseline, $V_B$.

\item \emph{No Commitment ($\chi=0$)}: This class restricts outcomes to those supportable in a cheap talk game. It offers neither the prospect of achieving a high payoff nor a guarantee better than $V_B$.

\item \emph{Partial Commitment} ($\chi \in (0,1)$): This class provides a solution to the limitations of the polar cases. While, as we will see later, it may sacrifice the ability to achieve the optimal payoff $\overline V$, its defining feature is the capacity to eliminate uninformative equilibria. By doing so, it can secure a worst-case payoff strictly greater than $V_B$. 
\end{itemize}

\section{Benchmarks}

%\label{sec:eqm}

%some aspects of equilibrium multiplicity. The requirement (SP) ensures that the principal has a unique best response to any posterior (see \cmt{Hart, Kremer, and Perry, 2017}). I also rule out saturation scenarios, such as the following example.

%\begin{example}
	%Suppose the agent has a binary type $\theta\in\{\theta_1,\theta_2\}$ and the principal takes action $a=1$ if her posterior belief about state $\theta_1$ exceeds $0.5$ and she takes action $a=0$ other\emph{wise} (\cite{kamenica2011bayesian}). Then (NS) does not hold. The principal's best response to any belief $\mu$ in a neighborhood of $\mu(\theta_1)=1$  with a radius of $0.5$ is $a(\mu)=1$, violating the first part of (NS).
%\end{example} 
 \label{sec:benchmark}

This section establishes two benchmarks for the subsequent analysis. The principal's no-communication payoff, $V_B$, is the largest payoff she can achieve by selecting an action based on her prior information:
	 \[V_B:=\max_{\Tilde a\in\mA}\E_{\boldsymbol\rho}[u_P(\Tilde a,\theta)]\]	 
The principal's optimal payoff, $\overline V$, is the largest payoff achievable across all mechanisms and all equilibria they might induce:
\begin{align}
\overline{V}:=\sup_{Q}\sup_{E\in\mathcal E(Q)} V(Q,E) \notag
\end{align}

The characterization of $\overline V$, formalized in Proposition \ref{prop:full_commitment}, is simplified by the revelation principle. It is sufficient to restrict the analysis to direct, full-commitment mechanisms ($\mathcal{M}\equiv \Theta$, $\chi=1$) that induce a truth-telling equilibrium. Given the agent's state-independent utility, this requires that the mechanism render the agent indifferent among all possible messages. Formally, the committed plan must satisfy the agent-optimality condition:

\begin{equation}
	\mathbb E_{\pi_m}[u_A(a)]=\mathbb E_{\pi_{m'}}[u_A(a)] \quad \forall m,m'\in\mathcal M\equiv \Theta\label{eqm:IC1} \tag{AO$_1$}
\end{equation}
To solve for the principal's best outcome under this constraint, it will be convenient to introduce
 \[\gamma_\theta(u):=u_P(u_A^{-1}(u),\theta)\]
 
 This function defines the principal's utility in state $\theta$ from the unique action that delivers an expected payoff of $u$ to the agent.\footnote{Uniqueness of this action follows from the assumption that $u_A$ is strictly monotone. All the results in the paper can be extended beyond the case of strictly monotone $u_A $, in which case one defines $\gamma_{\theta}(u)$ as the upper envelope of the graph of the correspondence $ u_P(u_A^{-1}(u),\theta)$, where $u_A^{-1}$ denotes the set of actions that solve $u_A(a)=u$. Defined this way, the correspondence $\gamma_{\theta}(u)$ is upper hemicontinuous. Instead of applying the Berge's maximum theorem (which needs continuous $\gamma$), one can apply its generalization that requires UHC $\gamma$ to show that the best response correspondence is UHC.} The maximal payoff the principal can achieve in state $\theta$ for a given agent utility $u$, allowing for randomization over actions, is given by the concave envelope of this function, denoted $\gamma_\theta^{co}(u)$.

 %By making the agent indifferent, the mechanism removes any incentive to lie. To solve for the principal's best outcome under this constraint, I reframe her problem using the mutual payoff function, $\gamma_{\theta}(u):=u_P(u_A^{-1}(u),\theta)$. This function captures the principal's essential trade-off, defining the utility the principal receives in a given state $\theta$ from an action that delivers a payoff of $u$ to the agent\footnote{The characterization can be extended beyond the case of strictly monotone $u_A $, in which case we define $\gamma_{\theta}(u)=\max u_P(u_A^{-1}(u),\theta)$.}. The principal's optimal payoff is then found by taking the concave hull of this function, a procedure that identifies the best possible outcome achievable through randomization, as formalized in Proposition \ref{prop:full_commitment}.

 %The Agent-optimality condition (\ref{eqm:ind}) coincides with (\ref{eqm:IC1}) when $\chi=1$ and the agent uses all messages on-path. The latter is a necessary condition for truth-telling. Maximizing the principal's payoff over the mechanisms that satisfy (\ref{eqm:IC1}) results in Proposition \ref{prop:full_commitment}.  Define the mutual payoff function in state $\theta$ as $\gamma_{\theta}(u):=u_P(u_A^{-1}(u),\theta)$ - the utility that the principal gets from delivering a payoff $u$ to the agent\footnote{The characterization can be extended beyond the case of strictly monotone $u_A $, in which case we define $\gamma_{\theta}(u)=\max u_P(u_A^{-1}(u),\theta)$.}. And denote by $\overline f(x)$ the concave hull of a function $f(x)$.  

 \begin{proposition}
 	\[\overline{V}=\max_{u\in[0,1]}\sum_{\theta\in\Theta}\rho_\theta\gamma_\theta^{co}(u)\] 
 	\label{prop:full_commitment}
 \end{proposition}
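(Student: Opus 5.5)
The plan is to prove two inequalities, $\overline V\le \max_u\sum_\theta\rho_\theta\gamma_\theta^{co}(u)$ and the reverse, and to work throughout in the agent's utility space. Since $u_A$ is strictly monotone and continuous with $u_A(0)=0$, $u_A(1)=1$, it is a homeomorphism of $[0,1]$. Hence any distribution over actions $\pi_m\in\Delta(\mA)$ corresponds one-to-one to a distribution $\lambda_m\in\Delta([0,1])$ over agent utilities. The principal's state-$\theta$ payoff is then $\E_{\lambda_m}[\gamma_\theta(u)]$, and the agent's payoff is $\E_{\lambda_m}[u]$.

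\textbf{Upper bound.} Fix any mechanism $Q=(\pi,\chi)$ and any equilibrium $E=(\sigma_A,\sigma_P)$. For each message $m$, combine the committed lottery and the uncommitted response into the single action lottery
\[
\nu_m=\chi\pi_m+(1-\chi)\sigma_P(m).
\]
The agent's payoff from $m$ is $U_A(m)=\E_{\nu_m}[u_A(a)]$. By (\ref{eqm:ind}), every message sent with positive probability in any state yields the common value $u^*=\max_{\tilde m}U_A(\tilde m)$. The step that requires care is that every \emph{on-path} message gives the agent the same utility $u^*$, whatever the state. Given this, the principal's payoff is
\[
V(Q,E)=\sum_\theta\rho_\theta\sum_m\sigma_A(m|\theta)\,\E_{\nu_m}[u_P(a,\theta)].
\]
Pass each $\nu_m$ to utility space. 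It becomes a lottery $\lambda_m$ with mean $u^*$, so by the definition of the concave envelope,
\[
\E_{\lambda_m}[\gamma_\theta]\le\gamma^{co}_\theta(u^*).
\]
Averaging over $m$ with weights $\sigma_A(m|\theta)$ gives $V(Q,E)\le\sum_\theta\rho_\theta\gamma_\theta^{co}(u^*)$, which is at most the maximum over $u$. This argument bypasses the revelation principle, because the state-independent payoff equalizes the agent's utility across on-path messages. It does not need the No Perfect Pooling condition.

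\textbf{Achievability.} First note that the maximum exists. Each $\gamma_\theta$ is continuous on a compact interval, so $\gamma_\theta^{co}$ is concave and finite on $[0,1]$, and upper semicontinuous. Hence a maximizer $u^\circ$ exists. For each $\theta$, take a lottery $\lambda_\theta$ with mean $u^\circ$ and $\E_{\lambda_\theta}[\gamma_\theta]=\gamma_\theta^{co}(u^\circ)$. By Carathéodory in dimension one, two-point lotteries suffice, and attainment uses continuity of $\gamma_\theta$ on a compact domain. Then use the direct mechanism $\mM\supseteq\Theta$ with $\chi=1$ and $\pi_\theta=\lambda_\theta\circ u_A$. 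Condition (\ref{eqm:IC1}) holds because every $\lambda_\theta$ has mean $u^\circ$. Set the extra messages to $\pi_m=\delta_0$; these are off path and give a weakly lower payoff.

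Truth-telling is then an equilibrium. With $\chi=1$, $\sigma_P$ is irrelevant to payoffs and can be chosen as any best response. No message has ex-ante probability one, because $n\ge2$: the actions $0$ and $1$ are optimal in distinct states. The equilibrium payoff is $\sum_\theta\rho_\theta\gamma_\theta^{co}(u^\circ)$, so the supremum is attained.

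The main obstacle is the bookkeeping in the upper bound for arbitrary $\chi<1$, with mixed $\sigma_P$ and the pessimistic off-path convention. What must be checked is that collapsing commitment and discretion into the single lottery $\nu_m$ is legitimate. This holds because $V(Q,E)$ is linear in the action distribution for each message. The envelope bound then applies at the same $u^*$ in every state.
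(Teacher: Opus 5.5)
Your proof is correct. The achievability half matches the paper's, but your upper bound takes a different route.

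The paper invokes the revelation principle to restrict attention to direct, full-commitment, truth-telling mechanisms. It then solves the resulting separable problem state by state as a concavification, citing Kamenica--Gentzkow.

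You instead bound the payoff of an arbitrary mechanism $Q=(\pi,\chi)$ and an arbitrary equilibrium directly. You merge the committed and discretionary lotteries into $\nu_m=\chi\pi_m+(1-\chi)\sigma_P(m)$. State independence of $U_A$ then gives a common on-path utility $u^*$, and Jensen applied to $\gamma^{co}_\theta$ finishes the bound.

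In effect, this proves the reduction the paper takes for granted. The same collapsing step, $\tilde\pi=\chi\pi+(1-\chi)\sigma_P$, reappears in the paper's proof of Theorem~1. Having it explicit is worthwhile here: $\overline V$ is a supremum over all $\chi\in[0,1]$, and passing from a partially committed principal who best-responds to a full-commitment direct mechanism needs exactly this argument.

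Your route also buys three things the paper leaves implicit:
\begin{itemize}
\item it shows the bound uses neither No Perfect Pooling nor any property of $\sigma_P$;
\item it establishes that the maximum exists, via upper semicontinuity of $\gamma^{co}_\theta$;
\item it shows $\gamma^{co}_\theta(u^\circ)$ is attained by a two-point lottery.
\end{itemize}
The paper's version is shorter and foregrounds the analogy with Bayesian persuasion, but it rests on a reduction it does not prove.

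Two small bookkeeping points. First, your no-perfect-pooling check for truth-telling needs $\rho_\theta<1$ for every $\theta$. That follows from the prior having full support, which the paper implicitly assumes, not from $n\ge 2$ alone. Second, for mechanisms with $\mathcal E(Q)=\varnothing$, where the paper's convention assigns the payoff $V_B$, you should note the trivial inequality $V_B=\max_{u}\sum_\theta\rho_\theta\gamma_\theta(u)\le\max_{u}\sum_\theta\rho_\theta\gamma^{co}_\theta(u)$.
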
 
 
 Proposition \ref{prop:full_commitment} establishes that the optimal mechanism is found by selecting a single expected utility level, $u$, to offer the agent across all reports. For each truthfully reported state $\theta$, the principal implements an action plan (potentially a lottery) that achieves the maximal payoff $\gamma_\theta^{co}(u)$ subject to delivering this common expected utility $u$ to the agent. While the agent's expected utility is constant across states, the lotteries need not be.

The characterization of the optimal payoff, $\overline V$, draws a close parallel to the classic Bayesian persuasion (BP) framework. Here, the agent's realized payoffs, $u_A(a)$, act as the receiver's posteriors in BP, while the action plans, $\pi_\theta$, correspond to distributions over these posteriors. The incentive constraint (\ref{eqm:IC1}) mirrors the familiar Bayes-plausibility condition. However, two key distinctions set this problem apart. First, instead of choosing a single distribution of posteriors consistent with a prior, the principal must select a set of action distributions $\{\pi_\theta\}_{\theta}$ that are mutually consistent (i.e. deliver the agent the same expected payoff $u$). Second, and more crucially, the agent's expected payoff $u$,  that parallels the prior in BP, is not a fixed parameter but the principal's choice variable. This is why the optimal payoff $\overline V$ results from an optimization over the incentives, $u$, offered to the agent. 

The next example demonstrates that when the principal does not directly benefit from randomizing actions, her primary channel to utilize the agent's information is by leveraging the agent's risk-tolerance.

 %This optimal payoff is determined by the mutual payoff function $\gamma_{\theta}(u)$, which distills the trade-offs between the principal's and the agent's objectives. The global shape of the principal's utility, $u_P(a, \theta)$, dictates her attitude towards risk—whether she benefits from randomizing her action without altering its expectation. In turn, the shape of the agent's utility, $u_A(a)$, determines the principal's ability to shift her expected actions without violating the agent's reporting incentives. Example \ref{exp:screening_intro} isolates this second effect, demonstrating how the principal can leverage the agent's preferences to her advantage.\footnote{The function $\gamma$ traces the utility possibility frontier, also referred to as the Pareto frontier.}  

\begin{example}[Audit]
To illustrate the optimal mechanism under full commitment, consider an audit problem. A risk-neutral audit company (P) decides on a score $a\in[0,1]$ it gives to a bank (A). The company's utility is $u_P(a,h) = a$ and $u_P(a,l) = 1-a$; she wants to give a high score only the high-standards bank ($\theta=h$). The bank, whose type is private, is risk-averse with utility $u_A(a) = \sqrt{a}$ and strictly prefers a higher score. The company's prior belief is $P(\theta=h)=\rho< 0.5$.

To apply Proposition \ref{prop:full_commitment} we identify the functions $\gamma_h(u)=u^2$ (strictly convex) and $\gamma_l(u)=1-u^2$ (strictly concave). The corresponding concave envelopes  are $\gamma_h^{co}(u)=u$ and $\gamma_l^{co}(u)=1-u^2$.Since $\gamma_h(u)$ has a non-trivial envelope, the company  optimally randomizes actions when the state is $h$. Similarly, since $\gamma_l(u)$ coincides with its envelope, the company optimally takes a deterministic action when the state is $l$. Therefore, the optimal audit policy must have the "firm-but-fair" form. It promises the bank an expected payoff of $\overline u$ regardless of the report\footnote{Maximizing $\rho \overline u+(1-\rho)(1-\overline u^2)$ with respect to $\overline u$ reveals that the optimal mechanism delivers $\overline u=\frac{\rho}{2(1-\rho)}$.}. If the bank reports being a high type, the audit company commits to the lottery:
\begin{equation*}
    \overline\pi_h:\begin{cases}
		a=1\quad \text{w.p.}\quad \overline u\\
		a=0 \quad \text{w.p.}\quad 1-\overline u
	\end{cases}
\end{equation*}
If it reports being a low type, the company commits to the deterministic action:
\begin{equation*}
    \overline\pi_l:\begin{cases}
		a=\overline u^2\quad \text{w.p.}\quad 1
	\end{cases}
\end{equation*}
On average, the company is more favorable to a bank that reports type $h$ (the expected action $\overline u$ is greater than $\overline u^2$). This higher reward, however, comes with a substantial risk—the possibility of a zero score. For the risk-averse bank, this risk exactly offsets the appeal of the higher expected score, making it indifferent between the two reports:
\begin{equation*}
	\mathbb E_{\overline \pi_h}[u_A(a)]=\overline u\cdot \sqrt 1+(1-\overline u)\cdot \sqrt0=\overline u=\mathbb E_{\overline \pi_l}[u_A(a)]
\end{equation*}

This carefully constructed risk-reward balance makes truth-telling incentive compatible. By leveraging the bank's risk aversion, the "firm-but-fair" policy secures the audit company a payoff of $\overline{V}=1-\rho+\frac{\rho^2}{4(1-\rho)}$, a strict improvement over its  no-communication baseline $V_B=1-\rho$.

	  	\label{exp:screening_intro}
\end{example}

 %\cmt{Add this line somewhere else} On top of that, shifting focus to the mutual payoffs reveals the channel through which the principal extracts benefit. 

In the following sections I move attention from optimality to worst-case optimality. As we will see, finding the worst-case optimal mechanism requires shifting focus from full commitment to partial. This shift qualitatively transforms the analysis. In particular, the agent-optimality constraint (\ref{eqm:ind}) is no longer solely a function of the committed plan $\pi$; it now additionally incorporates the principal's strategic response when she retains discretion:
$$\chi \mathbb E_{ \pi_m}[u_A(a)]+(1-\chi) \mathbb E_{ \sigma_P(m)}[u_A(a)]=const$$
In any equilibrium, the principal's discretionary strategy, $\sigma_P$, is a best response to the agent's messaging strategy, $\sigma_A$. This introduces a critical feedback loop: the agent's strategy now implicitly shapes his own incentive-compatibility constraint, a feature entirely absent under the full commitment case  (see (\ref{eqm:IC1})). This interdependence has two main implications for the analysis.

First, mechanisms optimized for full commitment become surprisingly fragile. Any plan $\{\pi_m\}_m$ designed to be incentive-compatible under full commitment will yield a securable payoff no better than the no-communication baseline once discretion is introduced (Lemma \ref{lem:chi_around_1}). The search for the worst-case optimal mechanism must therefore extend beyond this restrictive class of mechanisms.

Second, the move to worst-case implementation invalidates the standard focus on truth-telling equilibria. In a general mechanism $Q$, truth-telling is not necessarily a feasible strategy. On top of that, the principal must account for \emph{any} equilibrium the agent might select. Consequently, she loses the ability to fine-tune each action plan $\pi_m$ to optimally shape the trade-off $\gamma(a,\theta)$ for each specific state $\theta$.

%\cmt{Reposition to the introduction}

 %%%%%%%%%%%%%%%%%%%%%%%%%%%%%%% ------------IMPLEMENTATION-------------------
 
\section{Worst-case implementation}
\label{sec:worst-case_implementaation}

\subsection{Full commitment fails worst-case optimality}

This section establishes the limitations of mechanisms at the extremes of the commitment spectrum. While an optimal full-commitment mechanism can achieve the payoff $\overline V$ in a principal-preferred equilibrium, its worst-case performance is no better than receiving no information. Partial commitment can improve worst-case performance.

\begin{lemma}
Consider some mechanism $Q=(\{\pi_m\}_m,\chi)$. If $\chi\in \{0,1\}$, then $\inf_{E\in\mathcal{E}(Q)}V(Q,E)\le V_{B}$.
\label{lem:cheaptalk}
\end{lemma}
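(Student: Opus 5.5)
The plan is to exhibit, for each polar commitment level, a \emph{babbling} equilibrium that satisfies the No Perfect Pooling condition and gives the principal at most $V_B$. Where no such equilibrium exists, I will argue that $\mathcal E(Q)=\varnothing$, so the convention assigns $V_B$. The common device is a state-independent messaging strategy that randomizes over two messages, each with ex-ante probability strictly below one. This is possible because $|\mathcal M|\ge|\Theta|\ge 2$; the case $n=1$ is trivial, since then every payoff is at most $V_B$.

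\emph{Case $\chi=1$.} The uncommitted principal never moves, so $U_A(m)=\mathbb E_{\pi_m}[u_A(a)]$ depends only on the plan. Let $M^\ast=\arg\max_m \mathbb E_{\pi_m}[u_A(a)]$.
\begin{itemize}
\item If $|M^\ast|=1$, (AO) forces every type to send the unique maximizer. That violates No Perfect Pooling, so $\mathcal E(Q)=\varnothing$ and the principal's payoff is $V_B$ by convention.
\item If $|M^\ast|\ge 2$, pick distinct $m_1,m_2\in M^\ast$ and let every type send each with probability $1/2$. (AO) holds, and no message has probability one. Take $\sigma_P$ to be any best response to the posteriors; it is payoff-irrelevant because $1-\chi=0$.
\end{itemize}
Since $\sigma_A$ is independent of $\theta$, the principal's payoff in the second sub-case is
\[
V(Q,E)=\tfrac12\sum_{i=1,2}\mathbb E_{\pi_{m_i}}\Big[\sum_\theta\rho(\theta)u_P(a,\theta)\Big].
\]
For each fixed $a$, the inner sum is at most $V_B$ by definition of $V_B$. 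Hence $V(Q,E)\le V_B$.

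\emph{Case $\chi=0$.} This is the pure cheap-talk game. Again let all types mix equally over two messages $m_1,m_2$. The posteriors after both messages equal the prior $\boldsymbol\rho$, so let the principal play a prior-optimal action $a_\rho\in\arg\max_a\mathbb E_{\boldsymbol\rho}[u_P(a,\theta)]$ after both. Both on-path messages then give the agent $u_A(a_\rho)$, so (AO) holds among them.

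The step needing care is off-path messages, which must be unattractive and must satisfy the requirement that $\sigma_P$ best-respond to some belief. The pessimistic-beliefs convention makes the uncommitted principal play action $0$ there. This is legitimate because action $0$ is optimal in some state $\theta$, hence a best response to the degenerate belief on $\theta$. The agent then gets $u_A(0)=0\le u_A(a_\rho)$ from deviating, so (AO) holds. (With $\chi=0$ the plan $\pi$ is irrelevant.) The resulting payoff is exactly $V_B$.

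Combining the two cases, in every instance either $\mathcal E(Q)$ is empty or it contains an equilibrium with principal payoff at most $V_B$. Hence $\inf_{E\in\mathcal E(Q)}V(Q,E)\le V_B$. The main subtlety is the $\chi=1$ sub-case with a unique maximizing message, which is handled by the empty-set convention rather than by a construction.
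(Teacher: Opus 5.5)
Your proof is correct and follows the paper's argument. In both polar cases you exhibit a babbling equilibrium that gives the principal at most $V_B$, and at $\chi=1$ with a unique committed maximizer you fall back on the empty-equilibrium convention, exactly as the paper does. Your explicit state-independent $\tfrac12$--$\tfrac12$ randomization over two messages (so that No Perfect Pooling holds) and your treatment of off-path messages at $\chi=0$ make precise details the paper leaves implicit. In particular, the paper's appendix phrases the $\chi=0$ case through the bound $V_{CT}$, whereas your construction delivers the required bound $V_B$ directly.
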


The proof follows from analyzing the two polar cases.

\textit{No commitment.} When $\chi=0$, the game reduces to cheap talk with transparent motives (\cite{lipnowski2020cheap}). Any such game possesses a babbling equilibrium where the agent's messages are uninformative. The principal's best response to uninformative messages is to disregard them, which results in her no-communication payoff, $V_B$.

\textit{Full commitment.} When $\chi=1$, any mechanism also admits a babbling equilibrium. The agent can always adopt a strategy in which his messages are uncorrelated with his type. The principal, bound by her commitment, must implement an action plan based on these uninformative messages, which again cannot yield a payoff greater than $V_B$.

The existence of such payoff-reducing equilibria is a structural feature of mechanisms with $\chi \in \{0,1\}$. In contrast, as demonstrated by Section \ref{exp:socrates}, a properly chosen mechanism with partial commitment can eliminate these equilibria and guarantee that only strictly informative ones survive. But even mechanisms with partial commitment cannot generally isolate a single equilibrium, as the next example demonstrates.

%This value, $V^\star$, represents the principal's maximin payoff\footnote{This objective spiritually corresponds to $\nu$-generalized regret with $\nu=0$ (See \cite{robust_bergemann2023}).} and is necessarily bounded by the benchmark: $V^\star\le \overline{V}$. The following example demonstrates that this inequality can be strict, which occurs when any mechanism admitting a truth-telling equilibrium also admits a less informative one.

\begin{example}
\label{exp:multiple}
	
To illustrate this, we can extend the setting of Example \ref{exp:screening_intro}. Let the set of bank types be $\Theta=\{l,i,h\}$, with prior probabilities $\rho_l = 0.5$ and $\rho_i=\rho_h=0.25$. The analysis is restricted to mechanisms $Q$ that admit a truth-telling equilibrium, supported by a strategy $\sigma^\tau$. The message space is $\mathcal{M}=\{m_l,m_i,m_h\}$, ordered without loss of generality such that the bank's payoffs under commitment are non-decreasing: $u_{m_l}^\pi\le u_{m_i}^\pi\le u_{m_h}^\pi$.
	
	The following lemma establishes that any such mechanism also admits another, less informative equilibrium strategy.
	
	\begin{lemma}
		For any mechanism $Q$ that admits a truth-telling strategy $\sigma^\tau\in\Sigma(Q)$, there exists a feasible strategy $\sigma_A\in\Sigma(Q)$ such that:

	\begin{table}[h!]
	\begin{tabular}{cccc}
		$\sigma_A(m|h)$  = & ($\overset{m_l}{0}$, & $\overset{m_i}{1}$, & $\overset{m_h}{0}$)\\
		\\
		$\sigma_A(m|i)$	=	&	($0$,	&$1$,	&	$0$)\\
		\\
		$\sigma_A(m|l)$	=	&($0$,	&$0.5$,	&$0.5$)
	\end{tabular}
	\end{table}

	\label{lem:multiple_equilibria}
	\end{lemma}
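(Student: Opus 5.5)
The plan is to build the equilibrium by hand. I take the strategy $\sigma_A$ displayed in the statement, compute the posteriors it induces, and then find an uncommitted-principal strategy $\sigma_P$ that satisfies (\ref{eqm:br}) and makes the agent indifferent between the two on-path messages $m_i$ and $m_h$. The off-path message $m_l$ has to be handled separately.

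\emph{Step 1: what truth-telling tells us.} Write $u^\pi_m=\mathbb E_{\pi_m}[u_A(a)]$ and let $a^*(\mu)$ be the uncommitted principal's best response to the belief $\mu$.
\begin{itemize}
\item Under a truth-telling strategy, every message is on path and induces a degenerate posterior. So (\ref{eqm:ind}) says $\chi u^\pi_m+(1-\chi)u_A(a^*(\delta_{\theta(m)}))$ is the same constant $C$ for all $m$, where $\theta(m)$ is the type that sends $m$.
\item A principal who knows the state is $l$ takes action $0$, so that message gives the agent discretionary payoff $0$, the smallest possible. Indifference then forces it to carry the largest committed payoff.
\item Given the ordering $u^\pi_{m_l}\le u^\pi_{m_i}\le u^\pi_{m_h}$, the message sent by type $l$ can therefore be taken to be $m_h$, after relabelling ties if needed. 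This gives $C=\chi u^\pi_{m_h}$.
\item Combining the indifference conditions with $u_A\le 1$ gives the range restriction I will need later:
\[\chi u^\pi_{m_i}\le C\le \chi u^\pi_{m_i}+(1-\chi).\]
\end{itemize}

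\emph{Step 2: posteriors and the equilibrium conditions under $\sigma_A$.}
\begin{itemize}
\item Message $m_h$ is sent only by half of type $l$, so its posterior is $\delta_l$. The discretionary action is $0$ and $U_A(m_h)=\chi u^\pi_{m_h}=C$, exactly as under truth-telling.
\item Message $m_i$ is sent by all of $h$, all of $i$ and half of $l$, each with mass $0.25$. Its posterior is therefore uniform on $\Theta$.
\item Message $m_l$ is off path. Under pessimistic beliefs $U_A(m_l)=\chi u^\pi_{m_l}\le \chi u^\pi_{m_h}=C$, so no type wants to deviate to it.
\item No Perfect Pooling holds because $m_i$ has probability $0.75<1$.
\end{itemize}
The one remaining condition is
\[\chi u^\pi_{m_i}+(1-\chi)\,\mathbb E_{\sigma_P(m_i)}[u_A(a)]=\chi u^\pi_{m_h}.\]
In other words, $\sigma_P(m_i)$ must be a best response to the uniform belief that delivers agent utility exactly $(C-\chi u^\pi_{m_i})/(1-\chi)$. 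By Step 1 this target lies in $[0,1]$.

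\emph{Step 3: the main obstacle.} The difficulty is showing that this target utility is attained by some best response at the uniform posterior. The fact I would exploit is that under the uniform belief the $h$ and $l$ terms of the principal's objective cancel: $\tfrac13 a+\tfrac13(1-a)$ is constant in $a$. The objective therefore reduces to $\tfrac13 u_P(a,i)$.
\begin{itemize}
\item What I would check first is whether type $i$'s payoff in the extended audit example makes the principal indifferent, or nearly so, over a sufficiently wide range of actions. If it does, the best-response set $\arg\max$ is large.
\item Mixing over that set then yields a convex range of agent utilities. An intermediate-value argument shows that the range contains the target value.
\end{itemize}
If the argmax at the uniform belief turns out to be too small, my fallback is to perturb $\sigma_A$ minimally. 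The weights $0.5/0.5$ on type $l$ would become a variable $t$, and I would apply the intermediate value theorem to the indifference gap as a function of $t$, using continuity of $a^*$ from Berge's theorem. The displayed strategy in the statement would then correspond to the value of $t$ that solves this equation in the example's parametrization.
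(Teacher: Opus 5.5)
Your Steps 1 and 2 are fine. Step 3, however, is left open, and neither route you suggest closes it.

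The missing observation is that your target value is not an arbitrary number in $[0,1]$. Under truth-telling, the discretionary payoff is increasing in the sender's type: $u_A(a^*(\delta_l))=0<u_A(a^*(\delta_i))<1=u_A(a^*(\delta_h))$. The indifference conditions therefore order the committed payoffs in reverse. With the labelling $u^\pi_{m_l}\le u^\pi_{m_i}\le u^\pi_{m_h}$, type $h$ sends $m_l$, type $i$ sends $m_i$, and type $l$ sends $m_h$ (relabelling ties as you did for $l$).

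The truth-telling indifference for $m_i$ then reads $C=\chi u^\pi_{m_i}+(1-\chi)u_A(a^*(\delta_i))$. So your target $(C-\chi u^\pi_{m_i})/(1-\chi)$ equals exactly $u_A(a^*(\delta_i))$. You already showed that at the uniform posterior the $h$ and $l$ terms cancel, so the best response there is $\arg\max_a u_P(a,i)=a^*(\delta_i)$. The target is therefore attained automatically.

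Put differently, $\sigma_A$ reproduces message by message the discretionary payoffs of truth-telling: $0$ on $m_h$ and $u_A(a^*(\delta_i))$ on $m_i$. The indifference is inherited for every mechanism, whatever $\pi$ and $\chi$ are. This is exactly the paper's argument. (For $\chi=1$ your formula divides by zero, but then all committed payoffs are equal and any non-pooling strategy is feasible.)

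Both of your proposed completions are misdirected.

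\begin{itemize}
\item \emph{Wide argmax.} The example, as the paper treats it, has $u_P(\cdot,i)$ with a unique strict maximizer (the paper uses $\gamma_i''<0$ there; compare Assumption 1). So the set of agent utilities at the uniform posterior is the singleton $\{u_A(a^*(\delta_i))\}$. An intermediate-value argument has nothing to work with unless you already know this singleton is the target, which is precisely the missing step.
\item \emph{Free weight $t$.} Replacing $0.5$ by a free weight $t$ and solving for it would prove a different statement. The lemma asserts that the specific strategy with weights $0.5/0.5$ is feasible in every mechanism admitting truth-telling. Your $t$ would a priori depend on $(\pi,\chi)$, and your hope that it comes out as $0.5$ ``in the example's parametrization'' is unsupported.
\end{itemize}

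In fact $0.5$ is not a coincidence of parameters. It is the weight that makes type $l$'s contribution to $m_i$ equal $\rho_h=0.25$, which is what produces the uniform posterior and hence the $h$--$l$ cancellation you found. With any other $t$ that cancellation fails, and the best response at $m_i$ moves away from $a^*(\delta_i)$.
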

	Under the strategy $\sigma_A$, types $h$ and $i$ pool on the intermediate message $m_i$, while the low type $l$ randomizes between messages $m_i$ and $m_h$. The fact that only the low type ever reports the high message $m_h$ is reminiscent of the Socrates Effect (Lemma \ref{lem:socrates}). This strategy is strictly less informative (in the Blackwell sense) than the fully revealing strategy $\sigma^\tau$.

The existence of this alternative feasible strategy explains why the principal generally cannot worst-case implement the benchmark payoff $\overline V$. Since, by Lemma \ref{lem:multiple_equilibria}, the strategy $\sigma_A$ is available to the agent in any mechanism that admits truth-telling, the principal's guaranteed payoff is bounded by the outcome under $\sigma_A$. To worst-case implement $\overline V$, the principal would need to achieve this payoff even when the agent employs the less informative $\sigma_A$. This is generally impossible, as the benchmark mechanism typically requires the fine-grained information revealed only by $\sigma^\tau$.

The logic underpinning Lemma \ref{lem:multiple_equilibria} is that the strategy $\sigma_A$ is constructed to hold the agent's equilibrium payoffs constant. The beliefs induced by the on-path messages under $\sigma_A$ are such that the uncommitted principal's best responses, and thus the agent's payoffs from facing the uncommitted principal, are identical to those under the truth-telling strategy $\sigma^\tau$. Because the agent's total expected payoff $U_A(m)$ is a convex combination of payoffs from the committed and uncommitted principal, and the indifference condition holds for $\sigma^\tau$, it must also hold for $\sigma_A$.\footnote{To show that $\sigma_A$ is feasible,  the proof also establishes that when the agent plays $\sigma_A$, he has no incentives to deviate to the off-path $m_l$.} This argument is independent of the specific committed plan $\pi$, and therefore applies to any mechanism admitting a truth-telling equilibrium.
	\end{example}

The multiplicity of equilibria identified in Example \ref{exp:multiple} is a structural feature of mechanisms that commit to distributions over actions. It is useful to distinguish between two sources of equilibrium multiplicity in general. The first arises when distinct equilibria induce different distributions of actions from the uncommitted principal. The second, more persistent form occurs when distinct equilibria induce different posterior beliefs but the \emph{same} distribution of uncommitted actions.

Partial commitment can resolve the first type of multiplicity by altering the agent's committed payoffs across these different action distributions. It cannot, however, resolve the second. The latter form persists because the agent's incentive-compatibility constraint is a function of the principal's overall action distribution. If multiple distinct messaging strategies generate the same best-response actions from the uncommitted principal, then any committed plan that renders one of these strategies incentive-compatible must also do so for the others. Fundamentally, this strategic equivalence stems from the mapping from the high-dimensional space of posterior beliefs to the lower-dimensional space of actions.

%The next result generalizes this insight. It establishes that the existence of such alternative, less-informative equilibria places a sharp bound on the amount of information the principal can guarantee. Specifically, the maximum information that can be secured in the worst case is that which can be conveyed through a binary message structure. To state this result we need to introduce two genericity assumptions regarding the players' preferences and the strategic environment.

\subsection{Characterization}

This section characterizes the worst-case optimal mechanism under three assumptions. 

\begin{assumption}[Genericity 1]
	For any $\mu\in\Delta(\Theta)$, the function $\sum_{\theta\in\Theta}\mu(\theta)u_P(a,\theta)$ has at most two maximizers. Moreover, it has a unique maximizer for all $\mu$ outside a set of Lebesgue measure zero. Each $u_P(\theta)$ has a unique maximizer. 
	\label{assp:genericity_1}
\end{assumption}

The first assumption requires that the principal has a unique optimal action for almost every posterior belief. This assumption provides a sufficient condition for a key intermediate result: if a mechanism admits any non-trivial equilibrium, it must also admit a two-message equilibrium. The main conclusions of the paper are robust to the relaxation of this assumption, which is imposed primarily for analytical convenience.%\footnote{An alternative, more direct assumption requires that if under an information structure the uncommitted principal delivers the agent payoffs $u_1,...,u_M$, then there exists a binary information structure that delivers the agent $u_i$ and $u_j$ for some distinct $i,j\le M$.}

For a \emph{two-message strategy }$\sigma_A$ let 
\[\Delta(\sigma_A)=\sup_{\sigma_P\textit{ - BR to }\sigma_A}\left |\mathbb E_{\sigma_P(m_1)}[u_A(a)]-\mathbb E_{\sigma_P(m_2)}[u_A(a)]\right |\]
 denote the largest payoff spread that strategy $\sigma_A$ can induce from the uncommitted principal. The set of \emph{polarizing messaging strategies}, $\Sigma^\star$, is defined as the set of strategies that achieve the maximum possible spread:
\begin{align}
	\Sigma^\star	=\argmax_{\sigma_A-\text{two message strategy}} \Delta(\sigma_A)	\notag
\end{align}
Let $\overline \Delta = \max_{\sigma}\Delta(\sigma)$ denote this maximal spread. The quantity $\Delta(\sigma_A)$ can be interpreted as a measure of the strategic impact of the agent's message, and a polarizing strategy is one that maximizes this impact.\footnote{This difference is related to the informativeness of the agent's messaging, especially when the principal faces a monotone decision problem. See, for example, \cite{athey1998value}.} It is also useful to introduce mechanisms that make polarizing messaging incentive-compatible.

\begin{definition}[Polarizing Mechanism] A \emph{polarizing mechanism} $Q$ is a two-message mechanism with partial commitment ($\chi<1$) that admits a polarizing messaging strategy in equilibrium: $\Sigma^\star\cap \Sigma(Q)\neq \varnothing$.
\end{definition}

\noindent As established in Lemma \ref{lem:pol_mech_properties}, polarizing mechanisms exist and can be simply characterized. I now characterize the worst-case optimal mechanism under the following two assumptions.

\begin{assumption}[Genericity 2]
Let $(u,v)$ be a pair of payoffs to the agent resulting from the principal's best responses to the two messages of some binary information structure $\sigma$. Let $\mathcal U$ be the set of all such Bayes-plausible payoff pairs. The set $\mathcal U$ has a unique polarizing point $(u^\star,v^\star)$ (up to relabeling of the entries) that solves:
\[(u^\star,v^\star)\in \argmax_{(u,v)\in \mathcal U}|u-v|\]
\label{assp:genericity_2}
\end{assumption}

Assumption \ref{assp:genericity_2} is generically satisfied when the strategic players are sensitive to their beliefs.\footnote{The set $\mathcal U$ is compact and varies continuously with the primitives $u_A$ and $u_P$. If the primitives happen to induce a non-generic case, an arbitrarily small perturbation will almost surely restore the uniqueness property.} To see why this assumption is generically satisfied, think of all payoff pairs the agent can induce from the uncommitted principal via binary communication. %Assumption \ref{assp:genericity_2} requires that if some binary information structure maximally spreads the agent's payoffs, then the agent's payoffs mu. 
This assumption is violated if the agent has two messaging strategies, that happen to generate two \emph{distinct} payoff pairs for the agent, that both achieve the same, \emph{maximal} payoff difference. 

The following assumption provides a sufficient condition for the subsequent results by restricting the analysis to environments where credible communication requires commitment. While stronger than necessary, it serves to simplify the exposition.

\begin{assumption}
The agent cannot improve his payoff via pure cheap talk. That is, in any cheap talk equilibrium the agent's payoff does not exceed his payoff under no communication.
\label{assp:no_cheap_talk}
\end{assumption}

%The agent who uses two messages cannot benefit from both of them (compared to the prior). Formally, for any binary messaging strategy $\sigma_A$ the agent's uncommitted payoffs  $u_A^{\sigma_{P}(\sigma_A)}(m)\le u_\rho$ for some on-path $m$.

Assumption \ref{assp:no_cheap_talk} is a non-trivial restriction in environments with three or more states ($n \ge 3$) and ensures that the principal cannot extract decision-relevant information without leveraging some degree of commitment.\footnote{This is equivalent to requiring that the the agent's payoff as a function the principal's posterior beliefs coincides with its quasi-concave envelope at $\rho$ (\cite{lipnowski2020cheap}).}$^,$\footnote{A possible practical interpretation is that if the agent can strictly benefit from revealing information via cheap talk, he would have done so before the principal resorts to commitment to extract additional information.} The analytical framework developed in this paper can be extended to settings where this assumption is relaxed.\footnote{In such an environment, the principal's problem is augmented by a trade-off between leveraging information that is credibly revealed via a cheap talk equilibrium and eliciting information through the commitment component of the mechanism. I conjecture that a modified version of Theorem \ref{thm:carrot_stick} exists for this case.}

\begin{theorem}[Carrot-Stick Principle]
Suppose Assumptions \ref{assp:genericity_2} and \ref{assp:no_cheap_talk} hold. Then the (virtual) worst-case optimal mechanism exists. Moreover, to worst-case implement $V^\star$ it is sufficient to focus on polarizing mechanisms:
\[V^\star=\sup_{Q - polarizing}\inf_{E\in \mathcal E(Q)}V(Q,E)\]	
\label{thm:carrot_stick}
\end{theorem}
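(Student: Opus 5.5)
The plan is to show that for any mechanism $Q=(\pi,\chi)$ with $\chi<1$ whose worst-case payoff exceeds $V_B$, there is a polarizing mechanism whose worst-case payoff is at least as large. The sup over polarizing mechanisms is trivially at most $V^\star$. By Lemma \ref{lem:cheaptalk}, mechanisms with $\chi\in\{0,1\}$ give at most $V_B$, and limits as $\chi\to1$ are handled by the virtual definition. So it suffices to treat $\chi\in(0,1)$.

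The first step is a reduction to two messages. I would use the intermediate result flagged after Assumption \ref{assp:genericity_1}: if $Q$ admits a non-trivial equilibrium, it also admits a two-message equilibrium. I would adapt it to show that the worst case of $Q$ is bounded above by the worst case of a two-message mechanism that keeps only two of the committed lotteries. Concretely, take the two messages $m_1,m_2$ used in the binary equilibrium and set $\tilde Q=(\{\pi_{m_1},\pi_{m_2}\},\chi)$. The indifference condition in $\tilde Q$ reads
\[\chi\bigl(u^\pi_{m_1}-u^\pi_{m_2}\bigr)=(1-\chi)(v-u),\]
where $(u,v)\in\mathcal U$ are the uncommitted payoffs. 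Hence the equilibria of $\tilde Q$ are exactly the binary strategies whose uncommitted payoff pair has difference $d:=\chi(u^\pi_{m_1}-u^\pi_{m_2})/(1-\chi)$. Deviations to dropped messages only relax constraints under pessimistic beliefs, so these strategies are also equilibria of $Q$. Therefore the worst case of $Q$ is at most the worst case of $\tilde Q$.

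The second step is the core comparison: fixing two committed lotteries, I raise the spread until it equals $\overline\Delta$. Write $\delta:=|u^\pi_{m_1}-u^\pi_{m_2}|$, so the set of feasible strategies is $\Sigma_d=\{\sigma:|u-v|=d\}$, and the spread required is $d=\chi\delta/(1-\chi)$. I would choose $\chi'$ (or rescale $\pi$ toward the extremes) so that the required spread becomes $\overline\Delta$. By Assumption \ref{assp:genericity_2}, the set $\Sigma_{\overline\Delta}$ consists only of strategies inducing the unique polarizing pair $(u^\star,v^\star)$, which is a small set of highly informative strategies.

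Here Assumption \ref{assp:no_cheap_talk} enters. Since one of $u,v$ is at most $u_\rho$, every feasible equilibrium has one message pushing the uncommitted action down. This gives a monotonicity: larger required spread forces posteriors further apart, so worse equilibria are eliminated. The principal's commitment-side payoff, meanwhile, can be kept comparable by mixing $\pi$ appropriately and optimally re-choosing the committed lotteries against the polarizing strategy.

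I expect the main obstacle to be exactly this comparison. Changing $\chi$ changes the weight on the uncommitted versus committed parts, and $\Sigma_{\overline\Delta}$ is not in general a subset of $\Sigma_d$. A cleaner route I would try first is a direct bound. Whatever the polarizing strategies are, the worst-case payoff of the best polarizing mechanism is at least $\chi^\star$-weighted $\overline V$-type value computed against polarizing information. Meanwhile, in $\tilde Q$ some equilibrium strategy can be garbled toward less informative ones while preserving $d$, using the continuity of $\mathcal U$ and the connectedness argument of Lemma \ref{lem:multiple_equilibria}. So the worst case of $\tilde Q$ is bounded by a payoff using information no better than the polarizing strategy yields.

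Existence then follows by compactness. Polarizing mechanisms are parametrized by a compact set of lotteries and $\chi$, with the constraint $\chi\delta=(1-\chi)\overline\Delta$. The worst-case value is upper semicontinuous away from $\chi=1$, and at $\chi\to1$ it is covered by the virtual definition. Hence the supremum is attained.
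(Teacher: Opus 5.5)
There is a genuine gap, and it sits in the step you yourself flag as the main obstacle. You need to show that every $Q$ has an equilibrium whose payoff is weakly below the worst case of some polarizing mechanism, and neither of your routes delivers this.

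\emph{Raising the spread does not help.} Raising the required spread to $\overline\Delta$ with the committed lotteries held fixed (or merely rescaled) need not help the principal. The polarizing strategy can reverse which types send which message relative to what the lotteries were designed for. In the Socrates example of Lemma \ref{lem:socrates}, every type ends up with the wrong committed action. So removing the less informative equilibria can lower the payoff, and Assumption \ref{assp:no_cheap_talk} supplies no monotonicity of this kind. Saying the lotteries can be ``re-chosen'' to keep the payoff comparable is exactly what has to be proved.

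\emph{The ``direct bound'' stops where the work is.} Knowing that some equilibrium strategy $\sigma_A$ of $Q$ is a garbling of a polarizing $\sigma_0$ does not by itself give a polarizing mechanism that does as well. The new plan must reproduce the uncommitted principal's actions as well as the committed ones, and it must satisfy (\ref{eqm:IC1wc}). The paper's construction does both at once. Write $\sigma_A(j|\theta)=\sum_m\kappa(j|m)\sigma_0(m|\theta)$, fold discretion into the blended plan $\tilde\pi=\chi\pi+(1-\chi)\sigma_P$, and set $\pi_A(\cdot|m)=\sum_j\tilde\pi(\cdot|j)\kappa(j|m)$. Because $\sigma_A$ is an equilibrium, $\tilde\pi$ gives the agent the same expected utility on every on-path message, hence so does every mixture $\pi_A(\cdot|m)$. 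The joint distribution of $(\theta,a)$ is unchanged, so $V(\pi_A,\sigma_0)=V(Q,(\sigma_A,\sigma_P))$. Proposition \ref{prop:virtual_revelation} then implements $\pi_A$ virtually through polarizing partial-commitment mechanisms. The garbled equilibrium comes from Lemma \ref{lem:binary_blackwell}. That lemma is where Assumption \ref{assp:no_cheap_talk} is really used: it rules out that the two messages with the highest committed payoffs require a spread above $\overline\Delta$.

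\emph{You treat $\Sigma^\star$ as a point.} Assumption \ref{assp:genericity_2} pins down only the payoff pair $(u^\star,v^\star)$, not the strategy. Since $\pi_A$ is tailored to one $\sigma_0$, the adversary may pick another $\sigma\in\Sigma^\star$ with $V(\pi_A,\sigma)<V(\pi_A,\sigma_0)$. The paper proves $\Sigma^\star$ is convex and applies Kakutani's theorem to $\sigma_0\mapsto\argmin_{\sigma\in\Sigma^\star}V(\pi_A(\sigma_0),\sigma)$, so that at a fixed point the tailored strategy is itself the worst one. An alternative also works: $V(\pi,\sigma)$ is bilinear, and both $\Sigma^\star$ and the set of (\ref{eqm:IC1wc}) plans are convex and compact. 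A minimax theorem then gives $\max_\pi\min_{\sigma\in\Sigma^\star}V(\pi,\sigma)=\min_{\sigma\in\Sigma^\star}\max_\pi V(\pi,\sigma)\ge\inf_{E\in\mathcal E(Q)}V(Q,E)$ directly, with no continuous selection $\sigma_0\mapsto\sigma_A(\sigma_0)$ needed.

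\emph{Step 1 has the inclusion the wrong way round.} Removing messages removes deviations, so equilibria of $Q$ on $\{m_1,m_2\}$ are equilibria of $\tilde Q$, not conversely. Only the spread $d$ is pinned down in $\tilde Q$. An equilibrium with a lower level of uncommitted payoffs can make a dropped message with a higher committed payoff a profitable deviation. The kept pair must contain the message with the highest committed payoff, as in the paper.

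\emph{Existence.} Your compactness argument is fine, and simpler than the paper's Eilenberg--Montgomery route. It works once the problem is written as maximizing the continuous function $\pi\mapsto\min_{\sigma\in\Sigma^\star}V(\pi,\sigma)$ over the compact set of (\ref{eqm:IC1wc}) plans. However, it presupposes the sufficiency claim above.
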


Theorem \ref{thm:carrot_stick} reveals a sharp contrast between standard and worst-case optimality: whereas the optimal mechanism uses $n$
 messages (Section \ref{sec:benchmark}), the worst-case optimal mechanism uses only two. This finding underscores the insufficiency of the standard revelation principle as a tool for worst-case implementation. Although the revelation principle guarantees that a full-commitment mechanism can achieve the worst-case payoff in some of its equilibria, it offers no guidance as to which of these equilibria can be implemented as a worst case of a partial commitment mechanism. Proposition \ref{prop:virtual_revelation} establishes a version of the revelation principle, useful to my setting.

\subsection{Proof Sketch}
The analysis of the worst-case equilibrium in two- versus many-message mechanisms involves two countervailing effects. First, allowing more messages expands the set of equilibria, which may worsen the worst-case outcome. Second, for any fixed pair of on-path messages, a richer message space can restrict the set of supportable two-message equilibria. Specifically, off-path messages $\tilde m$ create an outside option for the agent, so any equilibrium must yield him a payoff exceeding $\chi \mathbb E_{\pi_m}[u_A(a)]$.

The logic of the proof can be illustrated by first considering the simplified case where the set of polarizing strategies $\Sigma^\star$ is a singleton, $\{\sigma^\star\}$. For an arbitrary mechanism $Q$, it can be shown that there exists an equilibrium strategy $\sigma_A \in \Sigma(Q)$ that is Blackwell-dominated by $\sigma^\star$. This establishes that the polarizing strategy is more informative than some equilibrium strategy of any given mechanism. Lemma \ref{lem:pol_mech_properties} shows that under any polarizing mechanism $Q'$, the agent's unique equilibrium strategy is $\sigma^\star$. Because $\sigma^\star$ is more informative than $\sigma_A$, a polarizing mechanism $Q'$ can be constructed to yield a payoff at least as high as that obtained under $(Q, \sigma_A)$.

This direct argument does not extend to the general case where $\Sigma^\star$ is not a singleton. While it remains true that for any mechanism $Q$ there exists $\sigma_A \in \Sigma(Q)$ and $\sigma^\star \in \Sigma^\star$ such that $\sigma^\star$ Blackwell-dominates $\sigma_A$, one cannot guarantee that this particular $\sigma^\star$ corresponds to the worst-case equilibrium in a polarizing mechanism designed to exploit it, as it was in the singleton $\Sigma^\star$ case. Another polarizing strategy, $\tilde{\sigma} \in \Sigma^\star$, could yield a lower payoff in $Q'$ than $\sigma_A$ does in $Q$.

The full proof proceeds in several steps. First, Assumption \ref{assp:genericity_2} is used to establish the convexity of the set of polarizing strategies, $\Sigma^\star$. This property allows for the application of Kakutani's fixed point theorem to a properly constructed correspondence from $\Sigma^\star$ to itself. The existence of a fixed point then guarantees that for any arbitrary mechanism $Q$, there exists a polarizing mechanism whose worst-case payoff is no lower than the payoff from some equilibrium of $Q$.

The proof for the existence of a worst-case optimal mechanism follows a similar logic. We can invoke the Eilenberg-Montgomery fixed point theorem to show that the correspondence 

\[\argmax_{Q-\text{(\ref{eqm:IC1wc})}}\min_{\sigma\in\Sigma^\star}V(Q,\sigma)\] 
that maps the set of polarizing mechanisms on itself has a fixed point. The main crux in proving this is verifying that this correspondence has contractible values. This can be shown by using results from algebraic topology. 

\qed

\subsection{Geometric characterization}

%\begin{proposition}[yes-no principle] Suppose Assumptions \ref{assp:genericity_2} and \ref{assp:no_cheap_talk} hold. To worst-case implement $V^\star$ it is sufficient to focus on two-message mechanisms:
%\[V^\star=\sup_{Q-\textit{two message}}\inf_{E\in\mathcal E(Q)}V(Q,E) \]
%\label{prop:yes-no}	
%\end{proposition}

 %In the remainder of this section, the analysis is restricted to a binary message space, $\mathcal M=\{yes,no\}$. The characterization of the worst-case optimal mechanism requires identifying a specific class of agent strategies: those that are polarizing. A polarizing strategy is one that maximizes the difference in the agent's expected payoffs from the uncommitted principal's best-response actions.

Theorem \ref{thm:carrot_stick} allows us to restrict attention to two messages: $\mathcal M\equiv \{yes,no\}$. The following lemma provides two properties of polarizing mechanisms that are central to the characterization of the worst-case optimal payoff.

\begin{lemma}
	1. A two-message mechanism $Q=((\pi_{yes},\pi_{no}),\chi)$ with $\chi<1$ is polarizing if and only if it satisfies the condition:
	\[\frac{\chi}{1-\chi}\left|\mathbb E_{\pi_{yes}}[u_A^\pi(a)]-\mathbb E_{\pi_{no}}[u_A^\pi(a)]\right|=\overline\Delta\]
	2. For any polarizing mechanism $Q$, the set of equilibrium messaging strategies is precisely the set of polarizing strategies: $\Sigma(Q) = \Sigma^\star$.
	\label{lem:pol_mech_properties}
\end{lemma}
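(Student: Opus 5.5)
The whole proof runs through the agent's indifference condition in a two-message mechanism. Write $u_m^\pi:=\mathbb E_{\pi_m}[u_A(a)]$ for the committed payoffs and $D:=u^\pi_{yes}-u^\pi_{no}$. Under the paper's convention, the uncommitted principal gives the lowest payoff after an off-path message. Also, by No Perfect Pooling, any equilibrium of a two-message mechanism has both messages on path. So any equilibrium is a binary strategy $\sigma_A$ together with a best response $\sigma_P$, and it satisfies
\[
\chi u^\pi_{yes}+(1-\chi)\,\mathbb E_{\sigma_P(yes)}[u_A]=\chi u^\pi_{no}+(1-\chi)\,\mathbb E_{\sigma_P(no)}[u_A].
\]
Equivalently,
\[
\frac{\chi}{1-\chi}D=\mathbb E_{\sigma_P(no)}[u_A]-\mathbb E_{\sigma_P(yes)}[u_A].
\]
Conversely, any binary $\sigma_A$ with a best response $\sigma_P$ satisfying this identity is an equilibrium: both messages are on path and the agent is indifferent between them.

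\textbf{Part 1.} For the ``only if'' direction, take a polarizing strategy $\sigma^\star\in\Sigma(Q)$. The identity gives
\[
\frac{\chi}{1-\chi}|D|=\bigl|\mathbb E_{\sigma_P(no)}-\mathbb E_{\sigma_P(yes)}\bigr|\le\Delta(\sigma^\star)=\overline\Delta.
\]
I still need equality. The spread $\bigl|\mathbb E_{\sigma_P(no)}-\mathbb E_{\sigma_P(yes)}\bigr|$ is itself an element $|u-v|$ with $(u,v)\in\mathcal U$, and Assumption \ref{assp:genericity_2} pins down the maximizing pair. I would use this to show that every best response to a polarizing strategy attains the spread $\overline\Delta$, not merely the supremum.

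The argument I have in mind goes as follows. Since $\sigma^\star$ maximizes the spread, each of its two posteriors must sit at a belief where the principal's best-response payoff to the agent is extremal. Suppose some best response, possibly a mixture between two maximizers under Assumption \ref{assp:genericity_1}, gave a strictly smaller spread. Then I would perturb $\sigma^\star$ so that the posterior at that message moves slightly and selects the other maximizer. This produces two distinct spread-maximizing pairs in $\mathcal U$, which contradicts uniqueness.

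I expect this to be the \textbf{main obstacle}. The risk is that the selection from the best-response correspondence at a polarizing strategy is not automatically the extremal one, and I would lean on upper hemicontinuity plus uniqueness of $(u^\star,v^\star)$ to rule this out. If the step fails, I would fall back on reading $\Delta$ with the sup and then choosing $\sigma_P$ appropriately.

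For the ``if'' direction, suppose $\frac{\chi}{1-\chi}|D|=\overline\Delta$. Pick $\sigma^\star\in\Sigma^\star$ and a best response $\sigma_P$ attaining $\overline\Delta$; a maximizer exists by compactness, via Berge. Relabel messages if needed, so that the message with the higher committed payoff receives the lower uncommitted payoff. Then the identity holds, so $\sigma^\star\in\Sigma(Q)$ and $Q$ is polarizing.

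\textbf{Part 2.} Let $Q$ be polarizing. The inclusion $\Sigma^\star\subseteq\Sigma(Q)$ follows from the ``if'' construction applied to each $\sigma\in\Sigma^\star$. Orientation matters here: every polarizing strategy must be used with the labeling that matches the sign of $D$. I would handle this by treating a strategy and its message-relabeling as distinct elements of $\Sigma^\star$, each appropriately oriented, and invoking Assumption \ref{assp:genericity_2} to fix which message carries $u^\star$.

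For $\Sigma(Q)\subseteq\Sigma^\star$, take any $\sigma_A\in\Sigma(Q)$ with best response $\sigma_P$. The identity gives
\[
\bigl|\mathbb E_{\sigma_P(no)}-\mathbb E_{\sigma_P(yes)}\bigr|=\frac{\chi}{1-\chi}|D|=\overline\Delta.
\]
Hence $\Delta(\sigma_A)\ge\overline\Delta$, and so $\sigma_A\in\Sigma^\star$. Assumption \ref{assp:no_cheap_talk} plays no role here; it will be needed only later, when comparing against mechanisms with more messages.
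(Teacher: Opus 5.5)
Your reduction to the two-message indifference identity is the paper's own route. The paper never proves this lemma separately: its reasoning is the identity (\ref{eq:two_message_1}) from the proof of Lemma \ref{lem:binary_blackwell}, plus the remark under ``Polarizing mechanisms are sufficient'' that, by Assumption \ref{assp:genericity_2}, the uncommitted payoffs in any equilibrium of a polarizing mechanism are fixed at $(u^\star,v^\star)$. Your ``if'' direction is fine.

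\textbf{The gap.} The step you flagged carries the whole lemma, and your argument for it does not work. The ``only if'' direction needs the equilibrium best response to a polarizing strategy to attain $\overline\Delta$. So do both inclusions of Part 2, since they use $\frac{\chi}{1-\chi}|D|=\overline\Delta$. Your perturbation moves the posterior so that it selects the non-extremal maximizer. That produces a pair in $\mathcal U$ with spread strictly below $\overline\Delta$, which is not a second spread-maximizing pair. So uniqueness in Assumption \ref{assp:genericity_2} is never contradicted: that assumption restricts only the argmax of $|u-v|$ on $\mathcal U$, not the other best responses at the same posteriors.

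\textbf{A counterexample.} The claim actually fails under the stated assumptions. Take $n=2$, $u_A(a)=a$, $\mu=P(h)$, $\rho=\tfrac12$, $u_P(a,h)=a$, and $u_P(a,l)=\max\{1-4a,\,0.8-4|a-\tfrac12|,\,0\}$.
\begin{itemize}
\item The best response is $a=0$ for $\mu<\tfrac27$, $a=\tfrac12$ on $(\tfrac27,\tfrac{8}{13})$, and $a=1$ above $\tfrac{8}{13}$. At $\mu=\tfrac27$ the principal is indifferent between $0$ and $\tfrac12$.
\item Assumptions \ref{assp:genericity_1}--\ref{assp:no_cheap_talk} hold, with $(u^\star,v^\star)=(0,1)$.
\item The strategy with posterior $\tfrac27$ after \emph{yes} and $1$ after \emph{no} is polarizing. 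Paired with a response that mixes $0$ and $\tfrac12$ at $\tfrac27$, it is an equilibrium of every mechanism with $\frac{\chi}{1-\chi}D\in[\tfrac12,1]$.
\item Such a $Q$ is polarizing by definition. Yet it violates the equality whenever that value is below $1$, and truthful reporting, which lies in $\Sigma^\star$, is not in $\Sigma(Q)$.
\end{itemize}

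\textbf{The repair.} Your fallback of choosing $\sigma_P$ is unavailable in the ``only if'' direction, because the hypothesis $\sigma^\star\in\Sigma(Q)$ hands you the $\sigma_P$. The paper closes this step by fiat, through its reading of Assumption \ref{assp:genericity_2}. To make your proof correct, build this in explicitly, in either of two ways:
\begin{itemize}
\item read ``admits a polarizing strategy in equilibrium'' as ``admits $(\sigma^\star,\sigma_P)\in\mathcal E(Q)$ with $\sigma_P$ attaining $\Delta(\sigma^\star)$''; or
\item assume that best responses at polarizing posteriors are unique in agent payoff.
\end{itemize}
With either, your argument goes through verbatim.

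\textbf{Relabeling.} Part 2 can only hold up to relabeling of messages, and no appeal to Assumption \ref{assp:genericity_2} changes that. $\Sigma^\star$ is closed under swapping \emph{yes} and \emph{no}, because $\Delta$ uses an absolute value, while the identity fixes the sign of the uncommitted spread. Lemma \ref{lem:socrates} already shows this: $Q_N$ satisfies the equality with $\overline\Delta=1$, but truthful reporting is not in $\Sigma(Q_N)$. State the result modulo relabeling rather than trying to prove the literal inclusion.
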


Part 1 of the lemma provides a simple equality that defines the entire class of polarizing mechanisms (analogous to (\ref{eqm:IC1}) for direct mechanisms). For a given level of commitment $\chi$, the committed plan $(\pi_{yes},\pi_{no})$ must be designed such that the incentives it provides, weighted by $\chi$, exactly balance the maximal possible incentive spread from the uncommitted principal, $\overline\Delta$, weighted by $1-\chi$. 

Part 2 establishes that polarizing mechanisms solve the equilibrium selection problem for the worst-case concerned principal. They restrict the agent's equilibrium behavior exclusively to the set of polarizing strategies, $\Sigma^\star$. This property, combined with Theorem \ref{thm:carrot_stick} and Proposition \ref{prop:virtual_revelation}, allows the general problem of finding $V^\star$ to be recast into a more tractable form. Let $V(\pi,\sigma_A)$ be the principal's committed payoff from action plans $\pi$ when the agent's strategy is $\sigma_A$.

\begin{proposition}
Suppose Assumptions \ref{assp:genericity_2} and \ref{assp:no_cheap_talk} hold. The worst-case optimal payoff $V^\star$ is the solution to the following optimization problem:
 	\[V^\star=\sup_{\pi-(\text{\ref{eqm:IC1wc}})}\inf_{\sigma\in\Sigma^\star} V(\pi,\sigma)\]
 	where 
 	\[\mathbb E_{\pi_{yes}}[u_A(a)]=\mathbb E_{\pi_{yes}}[u_A(a)]\quad \label{eqm:IC1wc}\tag{AO$_1^\star$}\]
\label{prop:wcI_full_commitment}
\end{proposition}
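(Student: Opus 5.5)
The plan is to prove the two inequalities separately. Throughout I read (\ref{eqm:IC1wc}) as the two-message analogue of (\ref{eqm:IC1}), namely $\mathbb E_{\pi_{yes}}[u_A(a)]=\mathbb E_{\pi_{no}}[u_A(a)]$. By Theorem \ref{thm:carrot_stick}, $V^\star$ is the supremum of worst-case payoffs over polarizing mechanisms. By Lemma \ref{lem:pol_mech_properties}, for any polarizing $Q$ the worst case is taken over $\Sigma(Q)=\Sigma^\star$ only. So both directions reduce to comparing polarizing mechanisms $Q=(\pi,\chi)$ with plans $\pi$ satisfying (\ref{eqm:IC1wc}), evaluated on the fixed set $\Sigma^\star$.

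\emph{Lower bound ($\ge$).} Fix $\pi$ satisfying (\ref{eqm:IC1wc}) and a sequence $\chi_n\to1$.
\begin{enumerate}
\item By Assumption \ref{assp:genericity_2}, every $\sigma\in\Sigma^\star$ induces the same polarizing payoff pair $(u^\star,v^\star)$ from the uncommitted principal. Labelling the messages so that $yes$ is the message with the lower uncommitted payoff (this labelling step still needs checking), perturb $\pi$ to $\pi^n$. The perturbation mixes into $\pi_{yes}$ a small weight on an action with higher $u_A$, or into $\pi_{no}$ a small weight on a lower one. Its size is chosen so that the committed spread is exactly $\frac{1-\chi_n}{\chi_n}\overline\Delta$, with the sign that offsets the uncommitted spread.
\item By part 1 of Lemma \ref{lem:pol_mech_properties}, $Q_n=(\pi^n,\chi_n)$ is polarizing, and by part 2 its equilibria use exactly $\Sigma^\star$.
\item The principal's payoff is $V(Q_n,E)=\chi_n V(\pi^n,\sigma)+(1-\chi_n)W(\sigma,\sigma_P)$, with $W\in[0,1]$.
\item Since $\pi^n\to\pi$ in total variation, $V(\pi^n,\sigma)\to V(\pi,\sigma)$ uniformly in $\sigma$, because $u_P$ is bounded and the prior is fixed.
\item Hence $\inf_{E\in\mathcal E(Q_n)}V(Q_n,E)\to\inf_{\sigma\in\Sigma^\star}V(\pi,\sigma)$. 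This limit is worst-case $1$-implementable by definition, which gives $V^\star\ge$ the right-hand side.
\end{enumerate}

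\emph{Upper bound ($\le$).} Take any polarizing $Q=(\pi,\chi)$ and any equilibrium $E=(\sigma,\sigma_P)$ with $\sigma\in\Sigma^\star$. The key device is to merge commitment and discretion into a single committed plan $\pi^{\sigma}_m=\chi\pi_m+(1-\chi)\sigma_P(m)$.
\begin{itemize}
\item Both messages are on path, so (\ref{eqm:ind}) gives $U_A(yes)=U_A(no)$. This says exactly that $\pi^\sigma$ satisfies (\ref{eqm:IC1wc}).
\item By linearity, $V(\pi^\sigma,\sigma)=V(Q,E)$.
\end{itemize}
So each worst-case payoff of $Q$ is replicated by some plan satisfying (\ref{eqm:IC1wc}), but that plan depends on $\sigma$ through $\sigma_P$. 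What is needed is a single plan $\hat\pi$ satisfying (\ref{eqm:IC1wc}) with $\inf_{\sigma'\in\Sigma^\star}V(\hat\pi,\sigma')\le\inf_{E}V(Q,E)$ fails in the wrong direction; rather, we need $\inf_{\sigma'}V(\hat\pi,\sigma')\ge\inf_E V(Q,E)-\varepsilon$.

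\emph{Main obstacle.} This decoupling of the merged plan from $\sigma$ is the step I expect to be hardest.
\begin{itemize}
\item \textbf{What I would try first.} Use Assumption \ref{assp:genericity_2} and Assumption \ref{assp:genericity_1}. Under them, all $\sigma\in\Sigma^\star$ induce the same agent payoff pair and, generically, the same extreme uncommitted actions. In that case $\sigma_P$ is essentially independent of $\sigma\in\Sigma^\star$, so $\hat\pi=\pi^{\sigma}$ works for every $\sigma$ at once and the inequality follows with equality.
\item \textbf{Fallback.} If the best responses genuinely vary across $\Sigma^\star$, I would reuse the fixed-point device from the proof of Theorem \ref{thm:carrot_stick}. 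The set $\Sigma^\star$ is convex, and the map $\sigma\mapsto\pi^\sigma$ is upper hemicontinuous with convex values, since mixtures of best responses are best responses. Kakutani's theorem, applied to the correspondence sending $\sigma$ to the worst-case strategies against $\pi^\sigma$, then yields a $\sigma^\ast$ such that $\hat\pi=\pi^{\sigma^\ast}$ attains its infimum over $\Sigma^\star$ at $\sigma^\ast$. This gives $\inf_{\sigma'\in\Sigma^\star}V(\hat\pi,\sigma')=V(Q,(\sigma^\ast,\sigma_P))\ge\inf_E V(Q,E)$.
\end{itemize}
Taking the supremum over polarizing $Q$ and invoking Theorem \ref{thm:carrot_stick} then closes the argument.
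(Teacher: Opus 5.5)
Your proposal is correct and follows the paper's route. Theorem \ref{thm:carrot_stick} and Lemma \ref{lem:pol_mech_properties} reduce the problem to polarizing mechanisms with $\Sigma(Q)=\Sigma^\star$; your $\chi_n\to1$ perturbation is an explicit version of the virtual replication in Proposition \ref{prop:virtual_revelation}; and the merged plan $\chi\pi+(1-\chi)\sigma_P$ is the paper's device for the upper bound. Your first decoupling argument is exactly the paper's: under Assumption \ref{assp:genericity_2}, strict monotonicity of $u_A$ and maximality of the spread force the same pure uncommitted actions for every $\sigma\in\Sigma^\star$, so the merged plan does not depend on $\sigma$, and neither the Kakutani fallback nor Assumption \ref{assp:genericity_1} is needed.
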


This result reflects the virtual nature of worst-case implementation and reduces the search for the worst-case optimal payoff to an optimization over full-commitment, incentive-compatible, two-message mechanisms, where the principal anticipates that the agent will adversarially select the worst strategy from the fixed set $\Sigma^\star$. This simplification follows from the general principle (see Section  \ref{sec:virtual_revelation}): if a class of mechanisms restricts the set of equilibrium strategies to $\Sigma^\star$, then the set of supportable outcomes is equivalent to the set of outcomes from full-commitment mechanisms where the agent is assumed to select a strategy from $\Sigma^\star$. 

The primary feature distinguishing the optimization in Proposition \ref{prop:wcI_full_commitment} from the benchmark case is the inner minimization over the set $\Sigma^\star$, which is exogenous to the mechanism's design. This minimization is necessary because, even under Assumption \ref{assp:genericity_2}, the set of polarizing strategies may not be a singleton if distinct messaging strategies induce identical best-response payoffs from the uncommitted principal. To facilitate a sharp characterization of the optimal mechanism, the analysis now proceeds under the simplifying assumption that the polarizing strategy is unique.

\begin{assumption}
	$\Sigma^\star$ is a singleton: $\Sigma^\star\equiv \{\sigma^\star\}$.
	\label{assp:unique}
\end{assumption}

Under this assumption, the agent's equilibrium strategy in any polarizing mechanism is uniquely determined to be $\sigma^\star$. This simplifies the principal's problem as she now takes $\sigma^\star$ as given.  This reduces her ex-ante uncertainty from the multi-dimensional state space $\Theta$ to the binary message space $\{yes, no\}$ with a marginal distribution $P^\star(m)=\sum_{\theta\in\Theta}\sigma^\star(m|\theta)\rho(\theta)$. We can therefore define the principal's \emph{polarized utility} for each message $m$ as the expected mutual payoff conditional on receiving that message: $\gamma_{m}(u) = \sum_\theta \mu^\star(\theta|m)\gamma_{\theta}(u)$, where the posterior $\mu^\star(\theta|m)$ is derived from the prior and the agent's strategy $\sigma^\star$. The next result characterizes the worst-case optimal mechanism in terms of concave envelopes of these polarized utilities.

\begin{theorem} Suppose Assumptions \ref{assp:no_cheap_talk} and  \ref{assp:unique} hold. Then
\[V^\star=\max_{u\in[0,1]} P^\star(yes)\gamma_{yes}^{co}(u) + P^\star(no)\gamma_{no}^{co}(u)\]
\label{thm:characterization_general}
\end{theorem}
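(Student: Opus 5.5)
Under Assumption \ref{assp:unique}, the inner infimum over $\Sigma^\star=\{\sigma^\star\}$ in Proposition \ref{prop:wcI_full_commitment} is evaluation at $\sigma^\star$. So the plan is to start from Proposition \ref{prop:wcI_full_commitment} and rewrite the objective $V(\pi,\sigma^\star)$ as a problem of exactly the form treated in Proposition \ref{prop:full_commitment}. The state space is replaced by the binary message space $\{yes,no\}$, with prior $P^\star$ and state-dependent payoffs given by the polarized utilities $\gamma_m$.

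\textbf{Step 1: rewrite the objective.} For a two-message plan $\pi=(\pi_{yes},\pi_{no})$, expand
\[V(\pi,\sigma^\star)=\sum_{m}\sum_\theta\rho(\theta)\sigma^\star(m|\theta)\E_{\pi_m}[u_P(a,\theta)]=\sum_m P^\star(m)\,\E_{\pi_m}\Big[\sum_\theta\mu^\star(\theta|m)u_P(a,\theta)\Big].\]
Because $u_A$ is strictly monotone and continuous, the map $a\mapsto u_A(a)$ is a bijection onto $[0,1]$. Hence each $\pi_m$ can be identified with a distribution $\tau_m\in\Delta([0,1])$ over agent utilities, and the integrand becomes $\gamma_m(u)$. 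The objective is therefore $\sum_m P^\star(m)\E_{\tau_m}[\gamma_m(u)]$. The constraint (\ref{eqm:IC1wc}) (read as equality of the agent's expected committed utility across the two messages) becomes $\E_{\tau_{yes}}[u]=\E_{\tau_{no}}[u]=:u$.

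\textbf{Step 2: upper bound.} Fix a common mean $u$. For each $m$, $\E_{\tau_m}[\gamma_m]\le\gamma_m^{co}(u)$ by the definition of the concave envelope. Hence every admissible $\pi$ yields at most $P^\star(yes)\gamma^{co}_{yes}(u)+P^\star(no)\gamma^{co}_{no}(u)$. Taking the supremum over $u$ gives $V^\star\le$ the right-hand side. The maximum over $u$ is attained because each $\gamma_m$ is continuous on $[0,1]$, so $\gamma_m^{co}$ is continuous (concave and upper semicontinuous on a compact interval, and bounded below by the continuous $\gamma_m$ at the endpoints).

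\textbf{Step 3: attainment.} Let $u^\star$ be a maximizer. For each $m$, Carathéodory gives a lottery $\tau_m$ on at most two points with mean $u^\star$ and $\E_{\tau_m}[\gamma_m]=\gamma_m^{co}(u^\star)$; compactness and continuity ensure the envelope is attained rather than only approached. Pull this back to action lotteries $\pi_m$ via $u_A^{-1}$. The resulting plan satisfies (\ref{eqm:IC1wc}), and its value equals the right-hand side. By Proposition \ref{prop:wcI_full_commitment} this value is worst-case (virtually) implementable, which establishes the reverse inequality.

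\textbf{Main obstacle.} The substantive difficulty is hidden in the reduction itself. One must make sure Proposition \ref{prop:wcI_full_commitment} is applied correctly with its standing hypothesis (Assumption \ref{assp:genericity_2}), since Theorem \ref{thm:characterization_general} lists only Assumptions \ref{assp:no_cheap_talk} and \ref{assp:unique}. I would argue that Assumption \ref{assp:unique} implies Assumption \ref{assp:genericity_2}: a unique polarizing strategy induces a unique maximal spread pair up to the labeling of messages.

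A second delicate point is the case $u^\star$ with $\pi_{yes},\pi_{no}$ giving equal committed utilities while the polarizing mechanism requires a spread $\overline\Delta(1-\chi)/\chi>0$ (Lemma \ref{lem:pol_mech_properties}). This is exactly why virtual implementation is needed. I would perturb the plan so that the committed spread is $\overline\Delta(1-\chi)/\chi$, which vanishes as $\chi\to1$. Continuity of $\gamma_m$ then shows the payoff loss vanishes in the limit, and I expect checking this uniformly to be the main technical step.
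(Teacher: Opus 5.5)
Your proposal is correct and follows essentially the paper's route. With $\Sigma^\star=\{\sigma^\star\}$, Proposition \ref{prop:wcI_full_commitment} turns $V^\star$ into a full-commitment problem on the binary message space with prior $P^\star$ and payoffs $\gamma_m$, which is then concavified message by message as in Proposition \ref{prop:full_commitment}; the vanishing-spread perturbation you flag as the main obstacle is already contained in Proposition \ref{prop:wcI_full_commitment} (through Proposition \ref{prop:virtual_revelation}), so once you invoke it the argument is complete at Step 3.
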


 This result clarifies the role of partial commitment in this framework: its primary function is not to alter the structure of optimal payoffs, but rather to discipline the agent's equilibrium behavior by restricting the set of feasible messaging strategies to $\Sigma(Q) = \{\sigma^\star\}$. Once this restriction is imposed, the optimal payoff can be determined using techniques analogous to the full commitment benchmark—namely, by optimizing over the agent's induced payoff $u$. 
 
\begin{example}
\label{exp:worst-case}

	To illustrate the characterization in Theorem \ref{thm:characterization_general}, consider an environment with three states, $\Theta\equiv\{l,m,h\}$, and a uniform prior. Imagine a health insurance company that faces an applicant, whose type (e.g. risk) is captured by $\theta$. The policy premium is $a\in[0,1]$ and the insurance company wants to set a higher premium to the lower-type agent, who is riskier. The mutual payoff function $\gamma(u, \theta)$ is designed to capture this preference. At each $\theta$ it is strictly single-peaked at the company's optimal policy premium. Specifically, $\gamma_{\theta}(u)$ for each state is defined as the sum of a quadratic term and a bump function, where $Bump(x,b) = e^{b^2/(x^2-b^2)}$ for $|x|\le b$ and zero other\emph{wise}. The specific forms are chosen to satisfy Assumption \ref{assp:unique}, be tractable for computations and are:
	
	\[\gamma(u,l)=0.02 \left(1 -\frac{ (u - 0.1)^2}{0.81} \right)+Bump(u-0.1,0.4)\]
	\[\gamma_{m}(u)=0.02 \left(1 -\frac{ (u - 0.5)^2}{0.25}\right)+Bump(u-0.5,0.5)\]
	\[\gamma(u,h)=0.02\left (1 - \frac{(u - 0.9)^2}{0.81}\right)+Bump(u-0.9,0.6)\]
These functions are constructed such that the applicant's payoff from the company's optimal premium against the low type is $u_{l}=u_A(a(l))=0.1$.  Similarly, $u_{m}=0.5$, and $u_{h}=0.9$. In other words, the low type (riskiest) receives the largest premium, while the high type (safest) receives the lowest premium.

In this setting, the unique polarizing strategy $\sigma^\star$ separates the type $h$ from the types $\{l,m\}$. This strategy induces the following polarized utility functions:
\[\gamma(u,yes)=\gamma(u,h)\]
\[\gamma(u,no)=\frac{1}{2}\left(\gamma(u,l)+\gamma_{m}(u)\right)\]
Both $\gamma(u,yes)$ and $\gamma(u,no)$ have concave hulls, depicted in Figure \ref{fig:polarizing}.

	\begin{figure}[h!]
		\includegraphics[scale = .6]{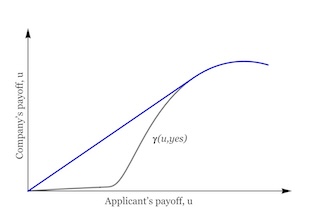}
		\includegraphics[scale = .6]{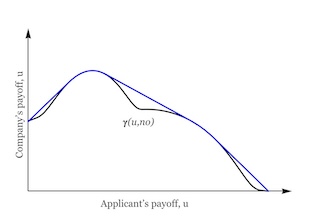}
		\caption{Blue - effective utility functions, red - their concave hulls. $\gamma(u,yes)$ is on the left, $\gamma(u,no)$ is on the right}
		\label{fig:polarizing}
	\end{figure}

The numerical solution to this example, summarized in Table \ref{tab:exp}, highlights the quantitative differences between the  worst-case optimal and optimal mechanisms\footnote{In the case of this example, there is no need to verify Assumption \ref{assp:no_cheap_talk}. The polarizing strategy delivers the agent the largest possible uncommitted payoff, $0.9$, to one of the messages, which turns out to be a sufficient condition for  polarizing mechanisms to be optimal.}. Compared to the no-communication baseline payoff of $V_B\approx0.21$, the worst-case optimal mechanism yields $V^\star\approx 0.23$, an increase of approximately 10\%. The benchmark full-commitment mechanism achieves a significantly higher payoff of $\overline{V}\approx 0.27$, a 30\% improvement over the baseline. Crucially, the benchmark, which requires non-degenerate lotteries for types $l$ and $h$ to ensure incentive compatibility, cannot be worst-case implemented.

	\begin{table}[h!]
		\begin{tabular}{l|ccc}
					& No communication & Worst-case communication & Full commitment \\
					\hline\\
					Ins's question & -- & \emph{Is your type $h$?} & \emph{What is your type?}\\
					\\
			Ins's payoff & $V_B\approx0.21$ & $V^\star\approx 0.23$ & $\overline{V}\approx 0.27$ \\
			%\hline \\
			\\
			A's payoff & $u_B\approx 0.67$ & $u^\star\approx 0.3$ & $\overline u\approx 0.51$ \\
		\end{tabular}
		\caption{Comparison of outcomes under alternative mechanisms.}
		\label{tab:exp}
	\end{table}
	
\end{example}

This section concludes by demonstrating a special case where this trade-off is resolved and the worst-case optimal payoff coincides with the full-commitment benchmark: when the agent's private information is binary.

\begin{proposition}
 Suppose $n=2$ and Assumption \ref{assp:unique} holds. Then $V^\star=\overline V$.	 \label{prop:n=2}
\end{proposition}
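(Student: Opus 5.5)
The plan is to compare the characterization of $V^\star$ in Theorem \ref{thm:characterization_general} with that of $\overline V$ in Proposition \ref{prop:full_commitment}. The goal is to show that when $n=2$ the unique polarizing strategy $\sigma^\star$ is fully revealing, so that the polarized utilities coincide with the state-by-state mutual payoffs. There is one subtlety: Theorem \ref{thm:characterization_general} also presumes Assumption \ref{assp:no_cheap_talk}, so I would first dispose of it. With two states, any binary cheap talk equilibrium must make the agent indifferent between the two messages. With $n=2$ the cheap talk payoff set is governed by the quasiconcave envelope of the agent's payoff over the one-dimensional belief space. I would argue directly that either the conclusion of Theorem \ref{thm:characterization_general} still holds, or the reduction in Proposition \ref{prop:wcI_full_commitment} goes through because $\Sigma(Q)=\{\sigma^\star\}$ by Lemma \ref{lem:pol_mech_properties}. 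If this becomes awkward, I would treat Assumption \ref{assp:no_cheap_talk} as implicitly maintained, since the proposition is meant as a corollary of Theorem \ref{thm:characterization_general}.

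\emph{Step 1: $\sigma^\star$ is (up to relabeling) truth-telling.} With $\Theta=\{\theta,\theta'\}$, a two-message strategy induces posteriors $\mu_1,\mu_2$ in $[0,1]$ that average to the prior. The agent's uncommitted payoffs are $u_A$ evaluated at the principal's best responses. By assumption, actions $0$ and $1$ are the principal's optimal actions in some states, and each $u_P(\cdot,\theta)$ has a unique maximizer (Assumption \ref{assp:genericity_1}), so the degenerate beliefs $\delta_\theta$ and $\delta_{\theta'}$ induce actions $0$ and $1$ respectively. The truth-telling strategy therefore achieves spread $u_A(1)-u_A(0)=1$, which is the maximal possible value since $u_A\in[0,1]$. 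Hence $\overline\Delta=1$ and truth-telling lies in $\Sigma^\star$. Assumption \ref{assp:unique} then forces $\Sigma^\star=\{\sigma^\tau\}$. (Alternatively, any strategy attaining spread $1$ must induce best responses $0$ and $1$, which requires the posteriors $\delta_\theta$ and $\delta_{\theta'}$, exactly as in Lemma \ref{lem:socrates}.)

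\emph{Step 2: compute the polarized utilities.} Under $\sigma^\tau$ the message ``yes'' reveals $\theta$ and ``no'' reveals $\theta'$. So $P^\star(yes)=\rho_\theta$ and $\gamma_{yes}=\gamma_\theta$, with the corresponding identities for ``no''. Substituting into Theorem \ref{thm:characterization_general} gives
\[V^\star=\max_{u\in[0,1]}\rho_\theta\gamma_\theta^{co}(u)+\rho_{\theta'}\gamma_{\theta'}^{co}(u),\]
which is exactly $\overline V$ by Proposition \ref{prop:full_commitment} with $n=2$. As a sanity check, $V^\star\le\overline V$ holds trivially, since worst-case payoffs are bounded by best-equilibrium payoffs and $\overline V$ is the supremum over all mechanisms and equilibria, including the limits used in virtual $1$-implementation.

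The main obstacle I expect is Step 1's use of the claim that degenerate beliefs induce exactly the extreme actions. The assumption only says that $0$ and $1$ are optimal in \emph{some} states, and with $n=2$ these must be the two distinct states; this needs Assumption \ref{assp:genericity_1}'s uniqueness of each state's maximizer. I would also check that no non-degenerate posterior can yield action $1$ or $0$ with positive probability, in order to pin down uniqueness. Under Assumption \ref{assp:unique} this is unnecessary, however, so I would rely on that assumption rather than prove the finer claim.
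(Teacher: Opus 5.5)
Your proposal is correct and is essentially the paper's argument: truth-telling attains the maximal spread $\overline\Delta=1$, Assumption \ref{assp:unique} makes it the only polarizing strategy, and the reduction to full commitment under truth-telling then returns $\overline V$ via Proposition \ref{prop:full_commitment}. You reach that reduction through Theorem \ref{thm:characterization_general}, while the paper reaches it through Theorem \ref{thm:carrot_stick} and Lemma \ref{lem:pol_mech_properties}. On your worry about Assumption \ref{assp:no_cheap_talk}: the paper's route relies on it just as much, but you can drop it here, since your trivial bound $V^\star\le\overline V$ gives one inequality, and the other only needs polarizing mechanisms with $\chi\to1$, whose only equilibrium strategy is truth-telling by the two-message indifference condition and Assumption \ref{assp:unique}, without any appeal to Assumption \ref{assp:no_cheap_talk}.
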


When the agent's information is binary, by keeping discretion, the principal can ensure that the agent reveals all his private information. The proof follows from the confluence of several results. Theorem \ref{thm:carrot_stick} establishes that two messages are sufficient to implement $V^\star$. When $n=2$, a two-message structure also has sufficient capacity to fully reveal the agent's private information. A truth-telling strategy induces the extreme actions $a(\theta_1)=0$ and $a(\theta_2)=1$, making it a polarizing strategy. Assumption \ref{assp:unique} then ensures this truth-telling strategy is the unique element of $\Sigma^\star$ (recall Lemma \ref{lem:socrates}). Consequently, any polarizing mechanism restricts the set of equilibrium strategies to truth-telling. The problem of finding $V^\star$ then reduces to finding the optimal full-commitment payoff under truth-telling, which, by Proposition \ref{prop:full_commitment}, is $\overline V$. As discussed before, this equivalence between $V^\star$ and $\overline V$ breaks down when $n>2$.

\subsection{Polarizing messaging}

While Theorem \ref{thm:characterization_general} provides a method for calculating the worst-case optimal payoff, it does not fully elucidate the qualitative structure of the communication $\sigma^\star$ it induces. As the next result shows, when unique, the polarizing strategy takes on a specific form.

\begin{proposition}[Separation]
There exists a polarizing messaging strategy $\sigma^\star$ with the following structure: there exist states $\theta^\star,\theta^+\in\Theta$ such that
 \[supp(P(\cdot|yes))\subseteq\{\theta^\star,\theta^+\}\]
 and \[supp(P(\cdot|no))=\Theta\setminus\{\theta^\star\}\]
\label{prop:separation}
\end{proposition}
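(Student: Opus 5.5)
We take any polarizing strategy $\sigma_A\in\Sigma^\star$ (nonempty by compactness of the set of binary information structures and upper hemicontinuity of the best response) and modify it into one with the claimed support structure without lowering $\Delta$. Write a two-message strategy through Bayes-plausible posteriors $(\mu_{yes},\mu_{no})$ with weights $(p,1-p)$ and $p\mu_{yes}+(1-p)\mu_{no}=\rho$. Let $U(\mu)$ be the agent's payoff from the uncommitted principal's best response. Since the principal optimizes within $\arg\max$, take $U^+(\mu)$ to be the largest such payoff and $U^-(\mu)$ the smallest. Both are defined by $\max/\min$ over the best-response correspondence $a^\star(\mu)$, so they are upper and lower semicontinuous respectively. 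Then $\Delta=U^+(\mu_{yes})-U^-(\mu_{no})$ after labeling $yes$ as the message inducing the higher payoff. Polarization is therefore the problem of maximizing $U^+(\mu_{yes})-U^-(\mu_{no})$ over splittings of $\rho$ into two posteriors.

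\emph{Step 1 (fix $\mu_{no}$, optimize $\mu_{yes}$).} For a fixed $\mu_{no}$, the feasible $\mu_{yes}$ are the points of the form $\mu_{yes}=\rho+t(\rho-\mu_{no})$ with $t\ge 0$, staying in $\Delta(\Theta)$. Any $\mu_{yes}$ on this ray is feasible with weight $p=1/(1+t)$, so only the direction matters. To push $\mu_{yes}$ to the boundary, I would replace $\mu_{no}$ as well. Set $\mu_{no}'=\mu_{no}$ and slide $\mu_{yes}$ outward along the ray to the face of the simplex. This keeps $\mu_{no}$ fixed, so $U^-(\mu_{no})$ is unchanged. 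However, $U^+$ along the ray need not be monotone, so this alone does not work.

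\emph{Step 2 (reverse the roles).} Instead I would fix $\mu_{yes}$ and treat $\mu_{no}$ as the free variable, giving the feasible set $\{\rho-s(\mu_{yes}-\rho)\}$. This too is one-dimensional, so I reformulate. The key idea is to fix the two \emph{actions} $a_1=a^\star(\mu_{yes})$ and $a_2=a^\star(\mu_{no})$. The set of posteriors at which $a_1$ is a best response is $C_1=\{\mu:\sum_\theta\mu(\theta)[u_P(a_1,\theta)-u_P(a,\theta)]\ge0\ \forall a\}$, and $C_2$ is defined analogously. Both are convex polytopes (intersections of half-spaces with the simplex). The spread is preserved by any splitting $\rho=p\nu_1+(1-p)\nu_2$ with $\nu_i\in C_i$, provided the selection is compatible: by Assumption \ref{assp:genericity_1} there are at most two maximizers, and we keep the agent-favorable/unfavorable one. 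The set of such triples $(p,\nu_1,\nu_2)$ is nonempty. Equivalently, the set of $\nu_1\in C_1$ for which $\rho$ lies on the segment from $\nu_1$ into $C_2$ is a polytope-like set. Linear-programming extremality then allows moving $\nu_1$ and $\nu_2$ to extreme points of the feasibility region. At an extreme point the number of tight constraints is maximal, which forces $\nu_1$ onto a low-dimensional face of the simplex.

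\emph{Step 3 (counting supports).} Parametrize by $x=p\nu_1\in\mathbb R^n_+$ with $x\le\rho$ componentwise, so that $(1-p)\nu_2=\rho-x$. The conditions $\nu_1\in C_1$ and $\nu_2\in C_2$ become homogeneous linear inequalities in $x$ and $\rho-x$, namely $\sum_\theta x_\theta d_a(\theta)\ge0$ and $\sum_\theta(\rho_\theta-x_\theta)e_a(\theta)\ge0$, which define a polytope $K$. Its extreme points have $n$ tight constraints among $x_\theta=0$, $x_\theta=\rho_\theta$, and the incentive inequalities. The claim supp$(\nu_1)\subseteq\{\theta^\star,\theta^+\}$ and supp$(\nu_2)=\Theta\setminus\{\theta^\star\}$ says that $x_\theta=0$ for all but two states and $x_{\theta^\star}=\rho_{\theta^\star}$. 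That is $n-1$ box constraints, plus one incentive constraint, namely the indifference of the principal between $a_1$ and a neighbouring action. Because $\mu\mapsto a^\star(\mu)$ is generically single-valued and the actions are a continuum, only the first-order "local" constraint near $a_1$ should bind at an extreme point, leaving exactly one nonbox tight constraint.

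\emph{Main obstacle.} The hard step is the last one: showing that an extreme point of $K$ (or a maximizer within it) uses at most one non-box tight constraint, and that the two fractional coordinates arrange as stated (one full, $x_{\theta^\star}=\rho_{\theta^\star}$, and one partial, $\theta^+$). My first attempt would replace the continuum of best-response constraints by the single requirement that $a_1$ remain a maximizer. I would argue via Assumption \ref{assp:genericity_1} that at an optimal spread the posterior $\nu_1$ lies on the boundary of $C_1$ in only one direction, since otherwise a further perturbation would increase $U^+$ or decrease $U^-$, contradicting maximality. If this fails, I would fall back on a direct perturbation: take any polarizing $\sigma$ and move mass of the $yes$ message from any third state into $no$, compensating with the two extreme states so as to keep both best responses fixed. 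I would iterate this until only two states remain in $yes$; it requires solving a $2\times$ one-constraint linear system at each step.
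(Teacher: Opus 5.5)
\textbf{The gap is in Step 3, and a finer constraint count cannot close it.}

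\emph{Moving inside $K$ cannot produce the structure.} Every $x\in K$ with $0<\sum_\theta x_\theta<1$ is itself polarizing: at $\nu_1\in C_1$, $\nu_2\in C_2$ the principal may still answer with $a_1,a_2$, so the spread is at least $|u_A(a_1)-u_A(a_2)|=\overline\Delta$. Hence under Assumption~\ref{assp:unique}, which is the setting where the paper uses this proposition, the non-degenerate part of $K$ is a single point. Passing to an extreme point of $K$, or your fallback of shifting $yes$-mass while holding both best responses fixed, then has nothing to act on. Whatever structure $\sigma^\star$ has must come from maximality of the spread as the actions themselves vary: a first-order or tangency condition for maximizing $U(\mu_{yes})-U(\mu_{no})$ over $x\in[0,\rho]$. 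Your argument never invokes this.

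Your count is also off. If $a_1,a_2\in(0,1)$ and $u_P$ is differentiable in $a$, the constraints defining $C_1$ collapse, from both sides of $a_1$, to the equality $\sum_\theta x_\theta\,\partial_a u_P(a_1,\theta)=0$. Likewise $C_2$ yields $\sum_\theta(\rho_\theta-x_\theta)\,\partial_a u_P(a_2,\theta)=0$. So two non-box constraints bind, not one. For fixed actions, a vertex of $K$ then generically has only $n-2$ tight box constraints, i.e.\ two split states.

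\emph{Tight box constraints cannot tell $x_\theta=0$ from $x_\theta=\rho_\theta$, and this is decisive.} The most this line of argument can give is that all states but at most one are sent deterministically to one message: a partition plus one link state. Your further claim, that $x_\theta=0$ for all but two states with a single $\theta^\star$ sent entirely to $yes$, does not follow. It is in fact false for $n\ge4$:
\begin{itemize}
\item Take $\Theta=\{1,2,3,4\}$, a uniform prior, $u_A(a)=a$, and $u_P(a,\theta)=1-(a-b_\theta)^2$ with $b=(0,0.1,0.9,1)$.
\item Then $U(\mu)=\mathbb{E}_\mu[b]$. With $p=P(yes)$, the spread equals $(\mathbb{E}[b\mid yes]-\mathbb{E}[b])/(1-p)$.
\item For each $p$ this is maximized by sending the highest-$b$ mass to $yes$. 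Over $p$ it is uniquely maximized at $p=\tfrac12$.
\item So the unique polarizing strategy is the partition $\{3,4\}$ versus $\{1,2\}$, with spread $0.95-0.05=0.9$. By comparison, $\{4\}$ versus the rest and $\{2,3,4\}$ versus $\{1\}$ each give $\tfrac23$.
\item Here $K=\{(0,0,\tfrac14,\tfrac14)\}$.
\item Assumptions~\ref{assp:genericity_1}--\ref{assp:unique} all hold; Assumption~\ref{assp:no_cheap_talk} holds because $U$ is linear in beliefs.
\end{itemize}
Yet under neither labeling is $\supp(P(\cdot|no))$ of the form $\Theta\setminus\{\theta^\star\}$.

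\emph{Relation to the paper.} Your Step 2 is essentially the paper's own route: fix the polarizing actions, project one supporting set onto the other through $\rho$, and take a vertex where one posterior lies on an edge of the simplex. That picture is really a three-state (planar) one. For $n=3$ every two-block partition has a singleton block, so ``partition plus at most one link state'' essentially is the stated structure. For general $n$, the natural target is the weaker conclusion that $\supp(P(\cdot|yes))\cap\supp(P(\cdot|no))$ has at most one element.
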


When combined with Assumption \ref{assp:unique}, which ensures the polarizing strategy is unique, Proposition \ref{prop:separation} reveals the canonical structure of communication in the worst-case optimal mechanism. The state space is effectively partitioned into three categories: a \emph{target state} $\theta^\star$, a \emph{link state} $\theta^+$, and a set of \emph{pooled states} $\Theta\setminus\{\theta^\star,\theta^+\}$.

The communication protocol is designed to distinguish the target state from the pooled states. The message "yes" serves as a confirmation, revealing that the true state is either the target or the link state, while ruling out all pooled states. The message "no" serves as a denial, ruling out the target state $\theta^\star$ but revealing little information about the remaining states. The link state $\theta^+$ is unique in that it is not ruled out by either message (as seen in Example \ref{exp:multiple}). In certain environments, as seen in Examples \ref{exp:screening_intro}, \ref{exp:worst-case}, the link state may be absent. In such cases, the communication protocol reduces to a direct and truthful inquiry about whether the state is $\theta^\star$.

\begin{example}

To illustrate the case involving a link state, we can modify the environment of Example \ref{exp:worst-case} by specifying a strictly concave mutual payoff function in the high type case: $\gamma(u,h)=0.5-(1-u)^2$. This change removes the insurance company's incentive to randomize the premium when the applicant is known to be $h$.

In this modified setting, the target state is $\theta^\star=h$, as it was in Example \ref{exp:worst-case}, but a link state $\theta^+=m$ now emerges. The new unique polarizing strategy $\sigma^\star$ is given by:
\[\sigma^\star=\begin{pmatrix}
		0&0.155&1\\
		1&0.845&0
	\end{pmatrix}\]
Under this strategy, the intermediate type ($\theta=m$) does not report truthfully in response to the effective query "Is your type $h$?", instead mixing between the two messages. The insurance company, however, anticipates this reporting behavior and forms its posterior beliefs accordingly, allowing itself to correctly interpret the applicant's signals.\footnote{Although the company may have multiple best responses to the applicant's messages, the equilibrium is uniquely determined by the condition that the induced payoff spread equals $\overline\Delta$.} The presence of the link state implies that the polarized utility functions are mixtures and do not coincide with the mutual payoff function of any single state.

This case presents two notable contrasts with the previous example. First, as shown in Table \ref{tab:exp2}, the worst-case optimal mechanism delivers a larger expected utility to the agent than the benchmark full-commitment mechanism does. Second, the structure of the worst-case optimal plan is inverted: randomization is now required conditional on the denial message ("no"), rather than the confirmation message ("yes").

\begin{table}[h!]
	\centering
	\begin{tabular}{l|ccc}
				& No communication & Worst-case communication & Full commitment \\
				\hline\\
				P's question & -- & \emph{Is your type $h$?} & \emph{What is your type?}\\
				\\
		P's payoff & $V_B\approx0.238$ & $V^\star\approx 0.24$ & $\overline{V}\approx 0.29$ \\
		\\
		A's payoff & $u_p\approx 0.69$ & $u_P\approx 0.63$ & $u_C\approx 0.61$ \\
	\end{tabular}
	\caption{Comparison of outcomes in the modified environment of Example \ref{exp:worst-case_2}.}
	\label{tab:exp2}
\end{table}

	\label{exp:worst-case_2}
\end{example}

The analysis in Examples \ref{exp:worst-case} and \ref{exp:worst-case_2} illustrates the two forms the worst-case optimal mechanism can take. The structure of the mechanism is determined by a trade-off in the selection of the target state, $\theta^\star$. The principal may identify a target state where randomization is particularly valuable (i.e., where $\overline\gamma_{\theta}(u) > \gamma_{\theta}(u)$). In this case, the optimal mechanism elicits a confirmation of this state and employs randomization conditional on that confirmation. Alternatively, the principal may identify a target state where randomization offers no benefit (i.e., where $\gamma_{\theta}(u)$ is concave) and separate it from the remaining states. In this case, randomization is optimally employed conditional on the denial message.

\section{Discussion}

Interpreting the model in the context of algorithmic decision-making, where $\chi$ represents the fraction of decisions handled by an algorithm and $1-\chi$ the fraction handled by a strategic human, the analysis highlights several caveats for a principal concerned with worst-case outcomes.

First, full automation ($\chi=1$) is never worst-case optimal. The strategic discipline required to induce informative reporting stems entirely from the agent's anticipation of facing a discretionary human decision-maker. While the level of automation can be arbitrarily close to one without sacrificing optimality, a non-zero probability of human intervention is essential. Second, the amount of information that can be secured is limited to that which can be conveyed by a binary signal. Providing a richer message space expands the agent's opportunities for strategic manipulation, thereby lowering the principal's worst-case payoff.

Third, for any non-negligible level of human discretion, the committed plan of the algorithm must provide countervailing incentives to the anticipated response of the human. If both the algorithm and the human are expected to reward the same message, the agent's indifference condition will be violated, leading to uninformative pooling on the high-payoff message. To maintain informative equilibria, the algorithm's committed plan must, in expectation, counteract the human's strategic response. This may require the algorithm to respond to a "good" signal with a "bad" action, and vice-versa (recall the Socrates Effect).

While this ex-post suboptimality appears costly, its magnitude diminishes as automation becomes extensive. As $\chi$ approaches one, the required difference between the algorithm's committed actions needed to offset the human's influence becomes arbitrarily small, thus minimizing the associated inefficiency. Crucially, however, the \emph{threat} of human intervention remains, ensuring that the agent's report is informative even in the limit. The algorithm can then capitalize on this guaranteed informativeness to achieve a payoff strictly above the no-communication baseline.

\section{Appendix.}

\subsection{Discontinuity of the worst-case payoffs. }
Let $V_{CT}$ denote the principal's highest achievable payoff in a cheap talk game (i.e., when $\chi=0$). The following lemma establishes that as a mechanism's commitment level approaches one, its performance guarantee degrades to that of a cheap talk game.

\begin{lemma}
	Fix any plan $\pi$. For any sequence of mechanisms $Q_n=(\pi,\chi_n)$ with $\chi_n\rightarrow1$, the highest attainable equilibrium payoff is bounded by the cheap talk payoff:
\[\lim_{n\rightarrow \infty}\sup_{E\in\mathcal E(Q_n)}V(Q_n,E)\le V_{CT}\]
\label{lem:chi_around_1}
\end{lemma}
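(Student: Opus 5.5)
The plan is to show something slightly stronger than the limit statement: for all $n$ large enough, every equilibrium of $Q_n$ gives the principal at most $V_{CT}$. The idea is that, for a fixed plan $\pi$ and $\chi_n$ close to one, the agent's indifference condition splits into two separate conditions. The committed payoffs must be equal across on-path messages, and so must the uncommitted ones. Hence every equilibrium of $Q_n$ contains a cheap talk equilibrium, and its committed component can do no better than best-responding. Throughout I take $\chi_n<1$, which is the relevant case since the lemma is about approaching full commitment.

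\emph{Step 1 (committed payoffs equalize).} Write $u^\pi_m=\mathbb E_{\pi_m}[u_A(a)]$ and $w_m=\mathbb E_{\sigma_P(m)}[u_A(a)]\in[0,1]$. Because $\pi$ is fixed and $\mathcal M$ is finite, the quantity $\delta=\min\{|u^\pi_m-u^\pi_{m'}|: u^\pi_m\neq u^\pi_{m'}\}$ is strictly positive (if there are no distinct values, this step is vacuous). In any equilibrium of $Q_n$, condition (\ref{eqm:ind}) forces every pair of on-path messages $m,m'$ to satisfy
\[\chi_n(u^\pi_m-u^\pi_{m'})=(1-\chi_n)(w_{m'}-w_m).\]
This gives $\chi_n|u^\pi_m-u^\pi_{m'}|\le 1-\chi_n$. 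Once $(1-\chi_n)/\chi_n<\delta$, it follows that $u^\pi_m=u^\pi_{m'}$, and then $w_m=w_{m'}$ for all on-path $m,m'$.

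\emph{Step 2 (the uncommitted part is a cheap talk equilibrium).} Take an equilibrium $(\sigma_A,\sigma_P)$ of $Q_n$ with $n$ as in Step 1. I claim $(\sigma_A,\sigma_P)$ is an equilibrium of the pure cheap talk game ($\chi=0$). The principal-optimality condition (\ref{eqm:br}) is identical in both games. By Step 1 the agent is indifferent among the on-path messages under the uncommitted payoffs $w_m$ alone. For off-path messages, the pessimistic off-path convention lets $\sigma_P$ give the agent his lowest payoff, so $w_{\tilde m}\le w_m$ and no deviation is profitable. No perfect pooling is inherited unchanged. Therefore the principal's uncommitted payoff is at most $V_{CT}$.

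\emph{Step 3 (the committed part cannot beat best-responding).} Fix the posteriors $\mu(\cdot\mid m)$ induced by $\sigma_A$. For each on-path $m$, $\sigma_P(m)$ maximizes $\mathbb E_{\mu(\cdot|m)}[u_P(a,\theta)]$, so $\mathbb E_{\pi_m}$ of the same objective is weakly smaller. Summing over messages with weights $P(m)$ shows the committed term of $V(Q_n,E)$ is bounded by the uncommitted term. Hence
\[V(Q_n,E)\le \sum_{\theta,m}\rho(\theta)\sigma_A(m|\theta)\mathbb E_{\sigma_P(m)}[u_P(a,\theta)]\le V_{CT}.\]
Taking the supremum over $E\in\mathcal E(Q_n)$ and then the limit in $n$ gives the lemma.

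The main obstacle is Step 2, specifically the treatment of off-path messages and the no perfect pooling requirement. Two things must be checked: that the beliefs supporting the pessimistic off-path actions are admissible in the cheap talk game, and that $V_{CT}$ is defined over the same refined equilibrium set. I would handle this through the paper's pessimistic-belief convention. If needed, I would also note that enlarging the cheap talk equilibrium set only increases $V_{CT}$, which still yields the bound. The case $\chi_n=1$ is excluded because the argument relies on the decoupling of the indifference condition, and that decoupling is exactly the source of the discontinuity at $\chi=1$.
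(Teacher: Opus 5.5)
Your proof is correct and rests on the same idea as the paper's sketch: once $\chi_n$ is close to one, every on-path message must carry the same committed payoff, so the discretionary part of any equilibrium is itself a cheap talk equilibrium (the off-path check you flag is settled by the normalization that $a=0$ is optimal in some state, so the pessimistic response is a best response to a degenerate belief). Your execution is tighter than the paper's case split on (\ref{eqm:IC1}) in three ways: the uniform gap $\delta$ also covers plans where several messages tie for the top committed payoff; Step~3 supplies the bound on the $\chi_n$-weighted committed term, which the paper's second case asserts without argument; and your restriction to $\chi_n<1$ is necessary rather than cosmetic, since at $\chi=1$ a plan satisfying (\ref{eqm:IC1}) can attain $\overline V$, which may exceed $V_{CT}$.
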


The proof considers two cases based on the properties of the committed plan $\pi$.

First, suppose the committed plan $\pi$ does not satisfy the full-commitment agent-optimality condition (\ref{eqm:IC1}). In this case, there exists a message $m_0$ that yields the agent a strictly higher payoff under the committed plan than some other message. As $\chi_n \rightarrow 1$, the agent's utility becomes dominated by the payoff from the committed plan. For $n$ sufficiently large, the agent of any type will strictly prefer to send message $m_0$. This precludes any non-trivial equilibrium, collapsing communication and yielding the principal a payoff no greater than $V_{CT}$.

Second, suppose the committed plan $\pi$ satisfies (\ref{eqm:IC1}), making the agent indifferent among all messages under the committed plan. In this scenario, the agent's choice of message is determined entirely by the strategic component of his utility, which depends on the uncommitted principal's best response. The strategic interaction is therefore equivalent to a cheap talk game played over the set of messages for which the agent is indifferent. Consequently, the principal's payoff in any equilibrium is bounded above by $V_{CT}$.

\subsection{Virtual revelation}

\label{sec:virtual_revelation}

This section establishes a useful, albeit weaker, version of the revelation principle, adapted for the analysis of partial commitment and the sets of outcomes it can support. To formalize the analysis, let the set of equilibrium outcome pairs for a mechanism $Q$ be defined as:
\[w(Q)=\{(u,v)\mid \exists (\sigma_A,\sigma_P)\in \mathcal E(Q)\textit{ s.t. }v=V(Q,(\sigma_A,\sigma_P))\text{ and }\]\[u=\chi \mathbb E_{ \pi_m}[u_A(a)]+(1-\chi)\mathbb E_{ \sigma_P(m)}[u_A(a)]\textit{ for on-path m}\}\]
This set includes all agent and principal payoffs achievable in any equilibrium of $Q$. Similarly, for a given set of messaging strategies $\Sigma$ and a full-commitment action plan $\pi$, let the set of achievable outcomes be:
\[w_{fc}(\pi,\Sigma)=\{(u,v)|\exists \sigma_A\in \Sigma\textit{ s.t. }v=V(\pi,\sigma_A)\text{ and }u= \mathbb E_{ \pi_m}[u_A(a)]\textit{ for on-path m}\}\]
The following proposition connects these two sets of outcomes.

\begin{proposition}
 
 Suppose $\Sigma = \Sigma(Q')$ for some mechanism $Q'$. Then for any full-commitment action plan $\pi$, there exists a sequence of mechanisms $Q_n$ such that $\lim_n w(Q_n) = w_{fc}(\pi,\Sigma)$.   Moreover, if $\Sigma(Q)\equiv \Sigma^\star$ then there exists $\pi$ such that $w(Q)=w_{fc}(\pi,\Sigma^\star)$.
 \label{prop:virtual_revelation}
\end{proposition}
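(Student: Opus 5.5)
Both parts of Proposition \ref{prop:virtual_revelation} come from one construction: attach to the full-commitment plan a small ``corrective'' term that exactly offsets the uncommitted principal's payoff spread, and let the commitment level go to one. Fix a full-commitment plan $\pi=\{\pi_m\}_m$ and write $u^\pi_m=\mathbb E_{\pi_m}[u_A(a)]$. I first treat the case $\Sigma=\Sigma(Q')$ with $Q'=(\pi',\chi')$, $\chi'<1$. Let $\kappa=\chi'/(1-\chi')$ and $u^{\pi'}_m=\mathbb E_{\pi'_m}[u_A(a)]$. For a sequence $\chi_n\uparrow 1$, I choose plans $\pi^n$ so that
\[\frac{\chi_n}{1-\chi_n}u^{\pi^n}_m=\frac{\chi_n}{1-\chi_n}u^{\pi}_m+\kappa\, u^{\pi'}_m+c_n ,\]
where $c_n$ is a normalizing constant. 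Concretely, $\pi^n_m$ is a mixture of $\pi_m$ with degenerate lotteries on actions $0$ and $1$, with mixing weight of order $(1-\chi_n)/\chi_n$. Since $u_A$ is continuous and strictly monotone on $[0,1]$, any expected utility in $[0,1]$ is attainable, and the perturbation vanishes as $n\to\infty$. Set $Q_n=(\pi^n,\chi_n)$.

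The key step is to check that the agent-optimality conditions under $Q_n$ match those under $Q'$, up to the term $\frac{\chi_n}{1-\chi_n}u^\pi_m$. Dividing $U_A(m)$ by $1-\chi_n$, indifference among on-path messages under $Q_n$ reads
\[\frac{\chi_n}{1-\chi_n}u^\pi_m+\kappa u^{\pi'}_m+\mathbb E_{\sigma_P(m)}[u_A].\]
Here I restrict attention to plans $\pi$ satisfying (\ref{eqm:IC1}), since only these yield nonempty $w_{fc}$ with on-path payoffs well defined. Then the first term is constant across $m$, and indifference becomes exactly the indifference condition of $Q'$. The principal-optimality condition does not depend on the plan at all. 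Off-path deviations need care: under pessimistic beliefs the off-path payoff is $\chi_n u^{\pi^n}_{\tilde m}$. I would handle this by choosing $c_n$ and the normalization so that the off-path comparison reduces to the one in $Q'$, using that the common term is constant. This gives $\Sigma(Q_n)=\Sigma(Q')=\Sigma$ for all large $n$, and the same set of uncommitted best responses.

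With the feasible strategy sets matched, convergence of payoffs is direct. For each $\sigma_A\in\Sigma$, the principal's payoff $V(Q_n,E)$ equals $\chi_n V(\pi^n,\sigma_A)$ plus a term of order $(1-\chi_n)$. Since $\pi^n\to\pi$ weakly, this converges to $V(\pi,\sigma_A)$ uniformly in $\sigma_A$, because payoffs are bounded in $[0,1]$. The agent's payoff converges to $u^\pi$ in the same way. Hence $w(Q_n)\to w_{fc}(\pi,\Sigma)$ in the Hausdorff sense.

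For the second claim, suppose $\Sigma(Q)=\Sigma^\star$ with $Q=(\pi^Q,\chi)$. By Lemma \ref{lem:pol_mech_properties}, every $\sigma\in\Sigma^\star$ induces the same pair of uncommitted payoffs $(u^\star,v^\star)$ under Assumption \ref{assp:genericity_2}. The uncommitted contribution to each player's payoff is therefore pinned down by $\sigma$, so $\chi\pi^Q_m+(1-\chi)\sigma_P(m)$ can be rewritten as a single committed lottery $\hat\pi_m$. This $\hat\pi$ satisfies (\ref{eqm:IC1}), and it gives $w(Q)=w_{fc}(\hat\pi,\Sigma^\star)$.

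I expect this last step to be the main obstacle. The principal's best response $\sigma_P(m)$ may differ across elements of $\Sigma^\star$, which would make $\hat\pi$ depend on $\sigma$. I would first try to show that the generically unique maximizer in Assumption \ref{assp:genericity_1} makes the polarizing best response common to all of $\Sigma^\star$. Failing that, I would use the convexity of $\Sigma^\star$ to select the response consistently.
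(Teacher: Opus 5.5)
Your route is the paper's. You hold the agent's indifference conditions, and hence the feasible strategy set, fixed while moving $\chi$ and the plan, and let $\chi_n\to1$. For polarizing mechanisms you fold the uncommitted response into a committed lottery.

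The first claim is handled correctly, and more explicitly than in the paper's sketch. No tuning of $c_n$ is needed for off-path messages. Fix any $\sigma_P$ and divide by $1-\chi_n$: every message under $Q_n$, on- or off-path, gets the common constant $\frac{\chi_n}{1-\chi_n}u+c_n$ plus $U_A(m)/(1-\chi')$ computed under $Q'$. So $\mathcal E(Q_n)=\mathcal E(Q')$ exactly, and $c_n$ only keeps $u^{\pi^n}_m\in[0,1]$ when $u\in\{0,1\}$.

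Two minor points. State the restriction to plans satisfying (\ref{eqm:IC1}) as the intended reading of the statement, which is how it feeds Proposition \ref{prop:wcI_full_commitment}. Do not justify it by nonemptiness of $w_{fc}$: a plan that equalizes committed utility only on a subset of messages can still give a nonempty $w_{fc}$. Also, $Q'$ with $\chi'=1$ needs separate treatment, since $\kappa$ is undefined there.

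\textbf{The genuine gap} is in the second claim, at exactly the step you flag, and neither of your fallbacks closes it.
\begin{itemize}
\item Assumption \ref{assp:genericity_1} gives a unique best response only off a null set of beliefs, and polarizing posteriors can lie on that set. Example \ref{exp:worst-case_2} notes that the company may have multiple best responses there, and resolves this through the spread condition.
\item A convexity-based selection of $\sigma_P$ cannot work. $w(Q)$ collects the outcome of \emph{every} equilibrium, including different $\sigma_P$ paired with the same $\sigma$. For a fixed $\hat\pi$, $w_{fc}(\hat\pi,\Sigma^\star)$ has one principal payoff per $\sigma$. You need the uncommitted response to be identical across all equilibria, not selected. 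The paper's convexity proof for $\Sigma^\star$ itself presupposes this.
\end{itemize}

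The missing argument combines extremality with strict monotonicity of $u_A$.
\begin{itemize}
\item In any equilibrium of a polarizing mechanism, indifference together with part 1 of Lemma \ref{lem:pol_mech_properties} forces the uncommitted spread to equal $\overline\Delta$. By Assumption \ref{assp:genericity_2}, the uncommitted payoffs are then $(u^\star,v^\star)$, with the larger one attached to the message carrying the smaller committed payoff.
\item Say $u^\star>v^\star$ is attached to $m_1$, with posterior $\mu_1$. Then $u^\star=\max u(\mu_1)$. Otherwise another best response at $\mu_1$, kept with the same $\sigma$ and the same response to $m_2$, would give a pair in $\mathcal U$ with spread above $\overline\Delta$.
\item Since $u_A$ is strictly increasing, the maximum of $\mathbb E_{\sigma_P}[u_A(a)]$ over lotteries on the best-response set at $\mu_1$ is attained only by the point mass on $a^\star_1=u_A^{-1}(u^\star)$. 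Symmetrically, the response to $m_2$ is the point mass on $a^\star_2=u_A^{-1}(v^\star)$.
\item Hence every equilibrium of $Q$ uses the same degenerate uncommitted actions. The plan $\hat\pi_{m_i}=\chi\pi^Q_{m_i}+(1-\chi)\delta_{a^\star_i}$ is therefore equilibrium-independent, and it satisfies (\ref{eqm:IC1}) by the agent's indifference.
\item Because $\Sigma(Q)=\Sigma^\star$, each $\sigma\in\Sigma^\star$ contributes exactly $(\mathbb E_{\hat\pi_m}[u_A(a)],V(\hat\pi,\sigma))$. This yields $w(Q)=w_{fc}(\hat\pi,\Sigma^\star)$.
\end{itemize}

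This argument is what supports the paper's unproved assertion that all polarizing strategies induce the same uncommitted actions.
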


Proposition \ref{prop:virtual_revelation} establishes that if a set of equilibrium strategies $\Sigma$ is supportable by some mechanism, then the set of outcomes from any full-commitment mechanism restricted to $\Sigma$ can also be generated by a partial commitment mechanism. For the specific class of polarizing mechanisms, this relationship is an equivalence: the set of equilibrium outcomes of any polarizing mechanism is precisely the set of outcomes achievable under a full-commitment mechanism where the agent's strategy is restricted to the set of polarizing strategies, $\Sigma^\star$.

This equivalence, combined with the result that the search for the worst-case optimal payoff can be restricted to polarizing mechanisms (Theorem \ref{thm:carrot_stick}), provides the foundation for the characterization in Proposition \ref{prop:wcI_full_commitment}. It establishes that $V^\star$ can be found by solving the simpler problem of a principal who fully commits to an incentive-compatible plan, anticipating that the agent will adversarially select a strategy from the fixed set $\Sigma^\star$.

\begin{proof}
The proof relies on the observation that the agent's indifference conditions, which shape the set of equilibria $\mathcal{E}(Q)$, can be preserved under adjustments to the commitment level $\chi$ and the committed action plan $\pi$. For any mechanism $Q$, it is possible to construct a new partial commitment mechanism with a different $\chi$ and a modified action plan $\pi'$ such that the set of equilibrium strategies remains $\Sigma(Q)$. By appropriately choosing these modifications, the outcomes of any full-commitment mechanism can be (virtually) replicated while holding the set of feasible strategies fixed.
\end{proof}

A direct corollary of this principle is that any payoff achievable under full commitment is also achievable as the outcome of some equilibrium of a partial commitment mechanism. This includes the benchmark payoff $\overline V$.

\begin{proposition}
	$$\overline V=\sup_{Q:\chi<1}\sup_{E\in\mathcal E(Q)}V(Q,E)$$
\label{prop:implementability_0}
\end{proposition}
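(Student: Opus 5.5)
We prove the two inequalities separately. The direction $\sup_{Q:\chi<1}\sup_{E\in\mathcal E(Q)}V(Q,E)\le\overline V$ is immediate: the set of mechanisms with $\chi<1$ is a subset of all mechanisms $Q$, so its supremum over equilibria cannot exceed $\overline V$, which is the supremum over all mechanisms and all their equilibria.

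For the reverse inequality, we fix $\varepsilon>0$. By Proposition \ref{prop:full_commitment} and the revelation-principle discussion preceding it, there is a direct full-commitment plan $\overline\pi=\{\overline\pi_\theta\}_{\theta\in\Theta}$. It satisfies (\ref{eqm:IC1}) with common agent utility $\overline u$, and under truth-telling $\sigma^\tau$ it yields $V(\overline\pi,\sigma^\tau)\ge\overline V-\varepsilon$. We then need a set $\Sigma$ of strategies that is supportable by some mechanism $Q'$ and contains $\sigma^\tau$. Once we have it, the first part of Proposition \ref{prop:virtual_revelation} delivers mechanisms $Q_n$ with $w(Q_n)\to w_{fc}(\overline\pi,\Sigma)$. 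The point $(\overline u,V(\overline\pi,\sigma^\tau))$ lies in $w_{fc}(\overline\pi,\Sigma)$, so for large $n$ some equilibrium of $Q_n$ gives the principal at least $\overline V-2\varepsilon$. Since $\varepsilon$ is arbitrary, this proves the claim.

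The main obstacle is exhibiting a mechanism $Q'$ with $\chi'<1$ for which truth-telling is feasible. We would construct it directly. Take $\mathcal M\equiv\Theta$, let $\sigma_A=\sigma^\tau$, and let $\sigma_P(\theta)$ be the uncommitted principal's best response to a degenerate belief on $\theta$, which gives the agent utility $u_A(a(\theta))$. Then choose $\chi'\in(0,1)$ and $\pi'$ so that
\[
\chi'\,\mathbb E_{\pi'_\theta}[u_A(a)]+(1-\chi')\,u_A(a(\theta))=c\quad\text{for all }\theta .
\]
This requires $\mathbb E_{\pi'_\theta}[u_A(a)]=\bigl(c-(1-\chi')u_A(a(\theta))\bigr)/\chi'$ to lie in $[0,1]$. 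Taking $c=1-\chi'$ works for any $\chi'\ge 1/2$, because then each required value lies in $[0,1]$, and continuity of $u_A$ lets us pick degenerate lotteries with those utilities. No type has a profitable deviation, since all messages are on path and yield the agent the same payoff $c$. No perfect pooling holds whenever $n\ge2$, because each message has ex-ante probability $\rho(\theta)<1$. This is exactly the Socrates-effect construction of Lemma \ref{lem:socrates} extended to $n$ states. So $\sigma^\tau\in\Sigma(Q')$, and we set $\Sigma=\Sigma(Q')$.

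One detail to check is that Proposition \ref{prop:virtual_revelation} needs $\overline\pi$ to be defined on the same message space $\Theta$; it is, since we chose $\mathcal M\equiv\Theta$. One must also confirm that the limit statement "$\lim_n w(Q_n)=w_{fc}$" implies pointwise approximation of the particular outcome pair. Reading the limit in the Hausdorff (or Kuratowski lower-limit) sense, every point of $w_{fc}(\overline\pi,\Sigma)$ is a limit of points $(u_n,v_n)\in w(Q_n)$, and that is all the argument above requires. If Proposition \ref{prop:virtual_revelation} is only available in a weaker form, the fallback is to perturb directly: set $Q_n=(\pi^n,\chi_n)$ with $\chi_n\to1$ and $\pi^n_\theta=(1-\delta_n)\overline\pi_\theta+\delta_n\pi'_\theta$, tuning the weights so the indifference above still holds. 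Then $V(Q_n,\sigma^\tau)\to V(\overline\pi,\sigma^\tau)$ by continuity of $V$ in $(\pi,\chi)$.
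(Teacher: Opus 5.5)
Your proof is correct and follows the paper, which presents this proposition as a direct corollary of Proposition \ref{prop:virtual_revelation}. You also usefully make explicit the step the paper leaves implicit: that truth-telling is feasible under some mechanism with $\chi'<1$, via your $c=1-\chi'$, $\chi'\ge 1/2$ construction. You should say explicitly that the $Q_n$ supplied by that proposition have $\chi_n<1$, since that is what the statement requires.

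Your fallback is in fact a complete, self-contained argument that does not rely on the only-sketched virtual revelation result. Indifference across all $\theta$ is preserved exactly when $\delta_n=\frac{(1-\chi_n)\chi'}{\chi_n(1-\chi')}$. This lies in $[0,1]$ for $\chi_n\ge\chi'$ and vanishes as $\chi_n\to1$. Hence $V(Q_n,\sigma^\tau)\ge\chi_n(1-\delta_n)V(\overline\pi,\sigma^\tau)\to\overline V$.
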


Proposition \ref{prop:implementability_0} reveals that even if we view full commitment as practically implausible, restricting attention to mechanisms with strictly partial commitment does not change the principal's range of attainable payoffs.

Proposition \ref{prop:implementability_0} also formalizes the observation from Lemma \ref{lem:multiple_equilibria}. It demonstrates that while partial commitment can achieve the benchmark payoff in a principal-preferred equilibrium, generally it can never guarantee it. The equivalence of the supremum payoffs for $\chi<1$ and $\chi=1$, combined with Theorem \ref{thm:carrot_stick}, implies that any partial commitment mechanism capable of achieving $\overline V$ must also admit other, less favorable equilibria.

A natural question is whether the converse of Proposition \ref{prop:virtual_revelation} holds. That is, can the set of equilibrium outcomes of \emph{any} partial commitment mechanism be exactly replicated by a full-commitment mechanism restricted to the corresponding set of equilibrium strategies?

The primary obstacle to such a general equivalence lies in the dual channel through which the agent's strategy affects the principal's payoff under partial commitment. An agent selecting a strategy $\sigma_A$ from the equilibrium set $\Sigma(Q)$ influences the principal's payoff by: (1) determining the probability distribution over the committed actions $\pi_m$ and uncommitted responses, and (2) determining the uncommitted principal's best-response strategy $\sigma_P(\sigma_A)$.

In a corresponding full-commitment mechanism, this second, strategic channel is absent. The principal's actions are fixed by the plan $\pi$, and the agent's choice of strategy $\sigma_A\in\Sigma(Q)$ can only affect the principal's payoff by altering the probability distribution over these fixed actions. This implies, for example, that the infimum payoff in the full-commitment characterization, taken over $\Sigma(Q)$, may be strictly lower than the true worst-case implementable payoff of the partial commitment mechanism $Q$.

However, this difficulty is resolved for the specific class of polarizing mechanisms. Under Assumption \ref{assp:genericity_2}, all strategies $\sigma_A\in\Sigma^\star$  induce the same uncommitted principal's actions. The indirect channel is therefore constant across all equilibrium strategies, and the equivalence holds. For this class of mechanisms, the set of outcomes is exactly the set of outcomes generated by a full-commitment mechanism restricted to the set of polarizing strategies $\Sigma^\star$.

\section{Proofs.}

To keep the notation simple, I denote the agent's committed and uncommitted payoffs by $u_A^{\pi}(m)=\mathbb E_{ \pi_m}[u_A(a)]$ and $u_A^{\sigma_P}(m)=\mathbb E_{ \sigma_P(m)}[u_A(a)]$. In addition by $u(\mu)$ I will mean the set of the agent's expected payoffs when the principal best responds to her posterior belief $\mu$:
\[u(\mu)=\{u\in[0,1]\mid \exists \sigma_P\in a(\mu)\quad s.t.\quad u=\mathbb E_{ \sigma_P}[u_A(a)]\}\]
Equivalently, one may  $u(\mu)$ as the principal's best response correspondence when she maximizes her effective expected payoff $\gamma_\theta(u)$ over $u\in [0,1]$.

 \subsection{Proof of Lemma \ref{lem:cheaptalk}}

An equilibrium is a profile $(\sigma_A, \sigma_P)$ satisfying two essential conditions:  Agent-Optimality (\ref{eqm:ind}), and Principal-Optimality (\ref{eqm:br}). The analysis proceeds by examining these conditions at the extremes of the commitment spectrum.

\textit{Case 1: No Commitment} ($\chi = 0$)

When $\chi=0$, the principal's strategy $\sigma_P$ is a best response to the posterior belief induced by each message. The agent-optimality condition (\ref{eqm:ind}) requires that for any on-path messages $m, m'$,
\[\mathbb{E}_{\sigma_P(m)}[u_A(a)] = \mathbb{E}_{\sigma_P(m')}[u_A(a)]\]
where $\sigma_P(m)$ is the distribution over actions taken in response to message $m$. This condition characterizes cheap talk equilibria. Hence, the principal cannot achieve a payoff above $V_{CT}$.

\textit{Case 2: Full Commitment} ($\chi = 1$)

When $\chi=1$, the principal's actions are dictated by the committed plan $\pi$. The agent-optimality condition (\ref{eqm:IC1}) becomes:
\[ m \in \argmax_{m' \in \mathcal{M}} \mathbb{E}_{\pi_{m'}}[u_A(a)] \]
If there is a single message $m_0$ that solves the above maximization, then by the no-perfect-pooling equilibrium condition $\mathcal E(Q)=\varnothing$, implying that the principal receives $V_B$. Consider a mechanism $\pi$ that satisfies agent-optimality under full commitment, i.e., $\mathbb{E}_{\pi_m}[u_A(a)]$ is constant for several messages $m$. Under such a plan, the agent is indifferent among these messages. Consequently, any messaging strategy $\sigma_A$ is a best response for the agent, and the (\ref{eqm:IC1}) condition imposes no restriction on the agent's behavior.

The principal-optimality condition (\ref{eqm:br}) is vacuous when $\chi=1$, as the principal never acts strategically. The out-of-mechanism best responses do not enter any player's utility or any other equilibrium constraint. Therefore, for any mechanism $\pi$ that satisfies agent-optimality, any messaging strategy $\sigma_A(m|\theta)$ is part of a valid equilibrium. Including the uninformative messaging $\sigma_A(m|\theta)\equiv \sigma_A(m)$ for all $\theta$. This concludes the proof.

\subsection{Proof of Proposition \ref{prop:full_commitment}}

Focusing on direct mechanisms, the problem of characterizing $\overline V$ becomes

\begin{align}
	\max_{u\in[0,1]}\max_{\pi_\theta\in\Delta([0,1])}  &\sum_{\theta}\rho(\theta)\E_{\pi_\theta}[\gamma(\tilde u,\theta)]\notag\\
s.t.& 	\notag\\
\E_{\pi_{ m}}[\tilde u] =u
%\E_{\pi_m}[u_A(a)]=&	\E_{\pi_{m'}}[u_A(a)] 	\notag
\end{align}

The problem is separable in each $\theta$ given $u$ and has a standard solution (\cite{kamenica2011bayesian}): $\overline \pi_\theta$ creates a concave hull of $\gamma_{\theta}(u)$ and the corresponding state-optimal value given $u$ is $\overline \gamma_{\theta}(u)$.

\subsection{Proof of Lemma \ref{lem:multiple_equilibria}}

Suppose $\sigma^T\in\Sigma(Q)$ for some $Q$. Since $P^T$ is truthful,  the principal's best response to message $m_i$ in $P^T$ must be $u_{\theta_i}$. The (\ref{eqm:ind}) implies
\begin{equation}
	\frac{\chi}{1-\chi}(u_{m_i}^\pi-u_{m_{i+1}}^\pi)=u_{\theta_{i+1}}-u_{\theta_i}
	\label{eqn:multiple}
\end{equation}
The LHS does not depend on an equilibrium. Thus, any strategy profile $\sigma(m|\theta)$ that satisfies the RHS for on-path messages will be an equilibrium.

All that's left is to show that the candidate strategy from the statement of the lemma satisfies (\ref{eqn:multiple}). The principal's posteriors are
\[\mu_{m_2}=P(\theta|m_2)=(0,1,0)\]
\[\mu_{m_3}=P(\theta|m_3)=\left(\frac 1 3,\frac 1 3,\frac 1 3\right)\]
And the probability of sending $m_3$ is $P(m_3)=\frac{3}{4}$. The principal's best response to $m_2$ is $u_{m_2}=0=u(\theta_2)$. The principal's best response to $m_3$ is $u_{\theta_3}$. Indeed: 
\[\gamma'(u_{\theta_3},\theta_1)+\gamma'(u_{\theta_3},\theta_2)+\gamma'(u_{\theta_3},\theta_3)=2u_{\theta_3}-2u_{\theta_3}=0\]
and 
\[\gamma''(u_{\theta_3},\theta_1)+\gamma''(u_{\theta_3},\theta_2)+\gamma''(u_{\theta_3},\theta_3)=\gamma''(u_{\theta_3},\theta_3)<0\]
where we used the fact that $u_{\theta_3}\in(u(\theta_2),u(\theta_1))$ maximizes $\gamma(u,\theta_3)$. 

To finalize the proof, I show that the principal has no incentives to deviate to the off-path message $m_1$.

	Consider some  $m$ and $\tilde m$ such that $m$ is on-path for $\sigma$ and $\sigma'$, while $\tilde m$ is on-path for $\sigma$ and off-path for $\sigma'$. Fix the principal's best responses $\sigma_P\in a(\sigma)$ and $\sigma_P'\in a(\sigma')$ such that $u_A^{\sigma_P}(m)=u_A^{\sigma_P'}(m)$ for all $m\in\supp (\sigma')$. The strategy $\sigma$ satisfies (\ref{eqm:ind}), which implies that
	\[\chi u_A^\pi(m)+(1-\chi)u_A^{\sigma_P}(m)=\chi u_A^\pi(\tilde m)+(1-\chi)u_A^{\sigma_P}(\tilde m)\ge \chi u_A^\pi(\tilde m)\]

	The agent's payoff from sending $\tilde m$ under $\sigma'$ does not exceed that of $m$:
	\[\chi u_A^\pi(m)+(1-\chi)u_A^{\sigma_P'}(m)=\chi u_A^\pi(m)+(1-\chi)u_A^{\sigma_P}(m)\ge \chi u_A^\pi(\tilde m)\]
	for the off-path beliefs $u_{A}^{\sigma_P}(\tilde m)=0$.	 Hence, $\sigma'\in \Sigma(Q)$.

\subsection{Proof of  Theorem \ref{thm:carrot_stick}}

\begin{lemma}[intermediate value]\label{lem:intermediate_value} For any $\mu_1,\mu_2$ and any $u_1,u_2$ such that $u_i\in u(\mu_i)$ the following holds: for any $\tilde u\in[u_1,u_2]$ and any continuous path $\mu(t)$ in $\Delta(\Theta)$ that connect $\mu_1=\mu(0)$ and $\mu_2=\mu(1)$ there exists $\tilde t\in[0,1]$ such that $\tilde u\in u( \mu(\tilde t))$.
\end{lemma}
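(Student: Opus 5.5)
The plan is to reduce the statement to an intermediate value theorem for a convex-valued, upper hemicontinuous correspondence on $[0,1]$. First I would establish the regularity of $u(\cdot)$. Define $F(t):=u(\mu(t))$ for $t\in[0,1]$.

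By definition, $u(\mu)$ is the set of agent payoffs $\mathbb E_{\sigma_P}[u_A(a)]$ with $\sigma_P$ supported on the argmax set $a(\mu)$ of $\sum_\theta\mu(\theta)u_P(a,\theta)$. The objective is jointly continuous in $(a,\mu)$ and $\mathcal A=[0,1]$ is compact. Berge's maximum theorem then gives that $a(\cdot)$ is nonempty, compact-valued and upper hemicontinuous. In the non-monotone case I would use the generalization mentioned in the footnote to Section \ref{sec:benchmark}.

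Because mixed best responses are allowed, $u(\mu)$ is the convex hull of the compact set $u_A(a(\mu))$. It is therefore a nonempty closed interval $[\underline u(\mu),\overline u(\mu)]$. Upper hemicontinuity transfers to $u(\cdot)$, since $u_A$ is continuous and taking convex hulls of compact sets in $\mathbb R$ preserves upper hemicontinuity. Equivalently, $\overline u$ is upper semicontinuous and $\underline u$ is lower semicontinuous. Composing with the continuous path $\mu(t)$ yields a correspondence $F$ on $[0,1]$ with closed graph and interval values.

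Without loss of generality take $u_1\le\tilde u\le u_2$; the other ordering is symmetric by reversing the path. If $\tilde u$ equals $u_1$ or $u_2$, take $\tilde t=0$ or $\tilde t=1$. Otherwise let
\[\tilde t:=\sup\{t\in[0,1]: \underline u(\mu(t))\le \tilde u\},\]
which is well defined because $t=0$ belongs to the set.

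I would then show that $\tilde t$ works in two steps.

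First, taking $t_k\uparrow\tilde t$ (or $t_k=\tilde t$) in the set, we have $\underline u(\mu(t_k))\le\tilde u$. Lower semicontinuity of $\underline u\circ\mu$ then gives $\underline u(\mu(\tilde t))\le\tilde u$.

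Second, I need $\overline u(\mu(\tilde t))\ge\tilde u$. If $\tilde t=1$, this holds because $u_2\in F(1)$ and $u_2\ge\tilde u$. If $\tilde t<1$, then for all $t>\tilde t$ we have $\underline u(\mu(t))>\tilde u$, hence $\overline u(\mu(t))>\tilde u$. Upper semicontinuity of $\overline u\circ\mu$ along $t\downarrow\tilde t$ gives $\overline u(\mu(\tilde t))\ge\tilde u$.

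Since $F(\tilde t)$ is an interval containing a point at most $\tilde u$ and a point at least $\tilde u$, convexity gives $\tilde u\in u(\mu(\tilde t))$.

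The main obstacle is the regularity step, in particular that the interval $u(\mu)$ has a lower semicontinuous lower endpoint and an upper semicontinuous upper endpoint. I would handle this directly via the closed graph of $a(\cdot)$. Take $\mu_k\to\mu$ and payoffs $\underline u(\mu_k)=u_A(a_k)$ with $a_k\in a(\mu_k)$. Extract a convergent subsequence $a_k\to a^\ast$. Then $a^\ast\in a(\mu)$ and $u_A(a_k)\to u_A(a^\ast)\ge\underline u(\mu)$, which is the required lower semicontinuity; the upper endpoint is handled in the same way. Assumption \ref{assp:genericity_1} is not needed here; convexity from mixing is what rules out jumps across $\tilde u$.
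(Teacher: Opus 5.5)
Your proof is correct. It rests on the same two facts as the paper's: upper hemicontinuity of $u(\cdot)$ via Berge's theorem, and interval (hence connected) values coming from the principal's ability to mix.

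The difference is in the last step. The paper ends by asserting that an upper hemicontinuous correspondence with connected values maps the connected path $\{\mu(t)\}$ to a connected subset of $\mathbb{R}$, which must then contain $[u_1,u_2]$. This is a standard topological fact, but the paper neither proves nor cites it, and phrases it loosely as ``the graph is connected.'' You instead prove the one-dimensional version of that fact directly. You take $\tilde t=\sup\{t:\underline u(\mu(t))\le\tilde u\}$ and use lower semicontinuity of $\underline u$ and upper semicontinuity of $\overline u$ to get $\underline u(\mu(\tilde t))\le\tilde u\le\overline u(\mu(\tilde t))$. Convexity of $u(\mu(\tilde t))$ then finishes the argument.

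What each route buys:
\begin{itemize}
\item The paper's route is shorter.
\item Your route is self-contained and shows exactly where convexity enters: it fills the gap between the endpoints at $\tilde t$, which is precisely where a pure-strategy version of the lemma would fail.
\item You work with $\arg\max_a \sum_\theta \mu(\theta)u_P(a,\theta)$ rather than the $\gamma_\theta$ reformulation. So you need only continuity of $u_A$, not strict monotonicity, and your remark about the footnote is unnecessary.
\end{itemize}

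One small fix in your semicontinuity step: first pass to a subsequence along which $\underline u(\mu_k)$ converges to $\liminf_k \underline u(\mu_k)$. Only then extract a convergent subsequence of the actions $a_k$, so that the closed-graph argument actually yields $\liminf_k \underline u(\mu_k)\ge \underline u(\mu)$. The argument for the upper endpoint needs the same adjustment.
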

\begin{proof}
The functions $\gamma_{\theta}(u)$ are continuous for each $\theta\in\Theta$. By Berge's Maximum Theorem, the correspondence $u(\mu)$ is upper hemicontinuous. Since the sets $u(\mu)$ are convex for each $\mu$, the graph of the correspondence $u(\cdot)$ is connected on any connected domain. The set of posteriors on the path $\mu(t)$ between $\mu_1$ and $\mu_2$, is a connected set. Therefore, its image under $u(\cdot)$ is also a connected set in $\mathbb{R}$. Since $u_1$ and $u_2$ are in this image, the entire interval $[u_1, u_2]$ must be as well.
\end{proof}

\begin{lemma}
Suppose $Q$ worst-case implements $v>V_B$. There exists $\sigma_A\in\Sigma(Q)$ such that  $\sigma_A$ has exactly two on-path messages.  %$m_1,m_2$. Moreover,  $u_{m_1}=u_{m_1}'>u_p>u_{m_2}'=u_{m_2}$
	\label{lem:binary}
\end{lemma}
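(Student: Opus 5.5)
Since $Q$ worst-case implements $v>V_B$, the set $\mathcal E(Q)$ is nonempty: by the convention in the paper, an empty equilibrium set would give the principal only $V_B$. So fix an equilibrium $(\sigma_A,\sigma_P)\in\mathcal E(Q)$. By the No Perfect Pooling condition it has at least two on-path messages. If it has exactly two, we are done. Otherwise, let $m_1,\dots,m_k$ with $k\ge 3$ be its on-path messages, and let $u_i=u_A^{\sigma_P}(m_i)\in u(\mu_i)$ be the agent's uncommitted payoffs, where $\mu_i=\mu(\cdot|m_i)$. By (\ref{eqm:ind}) the total payoff $\chi u_A^\pi(m_i)+(1-\chi)u_i$ equals a common constant $U$ for all $i$, and $U\ge \chi u_A^\pi(\tilde m)$ for every other $\tilde m$. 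The plan is to copy the argument of Lemma \ref{lem:multiple_equilibria}. We build a new strategy $\sigma'$ supported on exactly two of these messages, say $m_i$ and $m_j$. Its posteriors $\mu_i',\mu_j'$ must admit best responses that deliver the \emph{same} agent payoffs $u_i$ and $u_j$. Then the indifference conditions hold with the unchanged constant $U$. The off-path step from the proof of Lemma \ref{lem:multiple_equilibria} (pessimistic beliefs, so a deviation to any $\tilde m$ yields at most $\chi u_A^\pi(\tilde m)\le U$) shows that $\sigma'\in\Sigma(Q)$. Finally, $\sigma'$ has two messages, each with ex-ante probability in $(0,1)$, so No Perfect Pooling holds.

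The core step is to find two indices $i\neq j$ and a Bayes-plausible split of the prior $\rho$ into two posteriors $\mu_i',\mu_j'$ with $u_i\in u(\mu_i')$ and $u_j\in u(\mu_j')$. Order the messages so that $u_1\le\dots\le u_k$. The prior $\rho$ is a convex combination of $\mu_1,\dots,\mu_k$. Consider the segment from $\mu_1$ through $\rho$, extended to the boundary of the simplex. I would look for posteriors on this line, or on the segments $[\mu_1,\rho]$ and $[\rho,\mu_k]$, and apply Lemma \ref{lem:intermediate_value} along continuous paths. Concretely, take $\mu_j'$ on the ray from $\rho$ through $\mu_k$ and $\mu_i'$ on the opposite ray. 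The payoff correspondence moves continuously, in the upper hemicontinuous, convex-valued sense, as these posteriors slide. The intermediate value lemma then supplies points at which $u(\cdot)$ attains targeted values between the extreme payoffs.

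There is a caveat. The payoffs hit this way are values in $[u_1,u_k]$, not necessarily two of the original $u$'s. So I would set up the target so that the new payoffs lie among $\{u_1,\dots,u_k\}$. A natural choice is $u_1$ and $u_k$ themselves, using that $u_1\in u(\mu_1)$ and $u_k\in u(\mu_k)$. Bayes plausibility then requires $\rho\in[\mu_1',\mu_k']$. The plan is a continuity argument. Move $\mu_1$ toward $\rho$ along a path on which $u(\cdot)$ keeps containing $u_1$. Assumption \ref{assp:no_cheap_talk} helps here, because it implies $\min_i u_i\le u_\rho\le\max_i u_i$, where $u_\rho$ is the no-communication payoff, so $\rho$ itself sits "between" the two payoff levels. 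Use Lemma \ref{lem:intermediate_value} on the segment from $\mu_1$ to $\rho$ and on the one from $\rho$ to $\mu_k$ to locate posteriors with payoffs exactly $u_1$ and $u_k$ on opposite sides of $\rho$ on a common line. Assumption \ref{assp:genericity_1} (best responses are unique off a null set, with at most two maximizers) is what allows a two-point best-response set $u(\mu)$ to be split when needed.

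The main obstacle is exactly this collinearity requirement. Lemma \ref{lem:intermediate_value} gives points on each segment separately, but Bayes plausibility needs the two posteriors to lie on one line through $\rho$, on opposite sides. I would handle it with a one-parameter family of lines through $\rho$ interpolating between $\mu_1$ and $\mu_k$, together with a connectedness argument on the graph of $u(\cdot)$, as in the proof of Lemma \ref{lem:intermediate_value}. If the matched payoffs fail to be extremal, I would fall back to any adjacent pair $u_i<u_j$ straddling $u_\rho$.
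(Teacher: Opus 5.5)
Your reduction is the same as the paper's. You keep two of the original uncommitted payoff levels $u_i,u_j$, so the indifference constant $U$ is unchanged, and you rule out deviations with pessimistic beliefs exactly as in Lemma \ref{lem:multiple_equilibria}. The gap is the step you flag yourself: finding $\mu_i',\mu_j'$ on opposite sides of $\rho$ on a common line with $u_i\in u(\mu_i')$ and $u_j\in u(\mu_j')$. The tools you propose cannot close it. Lemma \ref{lem:intermediate_value}, upper hemicontinuity, convex values and connectedness of the graph hold for every continuous function of beliefs, and for such functions the claim is false. Take $n=3$ and $\rho$ interior to the triangle $\mu_1\mu_2\mu_3$. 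Let $f\equiv u_2+\delta$ except on three small cones: a dip to $u_1$ at $\mu_1$, a peak to $u_3$ at $\mu_3$, and a dip to exactly $u_2$ at $\mu_2$. Then $u_1$ and $u_3$ are attained only at $\mu_1$ and $\mu_3$. The set $\{f=u_2\}$ misses the two rays from $\rho$ pointing away from $\mu_1$ and from $\mu_3$. So no pair of levels admits a Bayes-plausible binary split.

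Your primary target, the pair $(u_1,u_k)$, fails even in the smooth strictly concave case, where each level set is a hyperplane section. Suppose the slices through $\mu_1$ and $\mu_k$ cut off small neighbourhoods of two vertices, while $\mu_2$ and $\rho$ lie near the third vertex. Then $\rho$ is outside the convex hull of those two slices. Your fallback to an adjacent pair straddling $u_\rho$ is not argued. Also, Assumption \ref{assp:no_cheap_talk} gives only $\min_i u_i\le u_\rho$: otherwise, shrinking every $\mu_i$ toward $\rho$ yields a profitable cheap-talk equilibrium. It does not give $u_\rho\le\max_i u_i$.

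The missing ingredient is structural, and it is what the paper uses.
\begin{itemize}
\item Expected payoff is linear in the posterior, so the set of beliefs at which the principal is indifferent between two actions is a hyperplane.
\item Using Assumption \ref{assp:genericity_1} (this, not splitting two-point best-response sets, is its role), the paper argues that the level sets $\Pi(t,u)$ of the agent's uncommitted payoff are slices that separate $\Delta(\Theta)$ and cannot cross. At a crossing the principal would be indifferent between two actions, and that indifference hyperplane would cut one of the slices, contradicting optimality on it.
\item Each $\mu_m$ lies on such a slice, and $\rho$ lies in the convex hull of the $\mu_m$. So $\rho$ sits in the piece between two slices that are adjacent in the \emph{spatial} order.
\item The vertices of that piece lie on those two slices, so $\rho$ is in the convex hull of their union. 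This gives the collinear pair.
\end{itemize}
The right pair is therefore dictated by where the slices through the $\mu_m$ lie, not by the ranking of the $u_m$.

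Be aware that this linear ordering is only sketched in the paper, and it is where the real work lies. Non-crossing alone does not give it: three disjoint slices, each cutting off one corner of the simplex, do not cross, yet they bound the piece containing $\rho$ by three slices rather than two. A complete proof must derive the ordering from the assumptions.
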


\begin{proof}

%Pick some equilibrium $(\sigma_A,\sigma_P)\in\mathcal E(Q)$ and denote by $u_m=\mathbb E_{\sigma_P}[u|m]$.  Suppose $\sigma_A$ has $M>2$ on-path messages. Denote by $\{\mu_m\}$ the posterior beliefs induced by $\sigma_A$. Further, denote by $S=\Delta(\{\mu_{m}\}_{m:\sigma_A(m)>0})$ the convex hull of the posteriors $\mu_m$. Note that $\rho\in int(S)$. The set $S$ is an $M-1$-simplex.

Pick an equilibrium $(\sigma_A,\sigma_P)\in\mathcal E(Q)$ with $M>2$ on-path messages. To simplify the exposition, assume the principal's best response to any equilibrium posterior is a unique pure action, which holds generically. Let $u_m = u_A(a(\mu_m))$ be the agent's payoff from the principal's best response to posterior $\mu_m$. Assume without loss of generality that $u_1 \le u_2 \le ... \le u_M$.

Consider beliefs $\mu_0$ and $\mu_1$ that place probability $1$ on the lowest and the highest states respectively. The agent's  uncommitted payoffs in these states are $0$ and $1$. Let $\mu(t)=(1-t)\mu_0+t\mu_1$ for $t\in[0,1]$. And define 
%\[\Pi(t,u)=\left\{\mu \in \Delta(\Theta) \mid   \exists\sigma_P \in \text{BR}(\mu) \text{ s.t. } \mathbb{E}_{\sigma_P}[u_A(a)] =u\text{ and }u \in u(\mu(t))\right\} \]
\[\Pi(t)=%\cup_{u\in u(\mu(t))}\Pi(t,u)\]
\{\mu \in \Delta(\Theta) \mid   \exists\sigma_P \in \text{BR}(\mu) \text{ s.t. } \mathbb{E}_{\sigma_P}[u_A(a)] \in u(\mu(t)) \}\] 

 The set $\Pi(t)$ is the set of the beliefs, which make the principal deliver the agent the same expected payoff as some response to $\mu(t)$. By Lemma \ref{lem:intermediate_value} for any $u\in[0,1]$ there exists some $t$ such that $u$ is the agent's uncommitted payoff when the principal's posterior is $\mu(t)$. 

Without loss of generality for the argument, assume that for any $t$ the set $a(\mu(t))$ contains at most two pure strategies. For any $t\in(0,1)$ and any $u\in u(\mu(t))$ the set $\Pi(t)$ 
%is a union of piecewise indifference hyperplanes $\Pi(t,u)$ that each separate the belief space $\Delta(\Theta)$ into two disjoint sets.\footnote{ That is, any continuous path that connects points $\mu\in\Delta_+$ and $\mu'\in\Delta_-$ must contain an element of $\Pi(t,u)$.}
contains a hyperplane $\Pi(t,u)$ that separates the set $\Delta(\Theta)$ into two disjoint sets $\Delta_+$ and $\Delta_-$. That is, any continuous path that connects points $\mu\in\Delta_+$ and $\mu'\in\Delta_-$ must intersect $\Pi(t,u)$. To show this, it is sufficient to consider 
%an arbitrary curve that connects $\mu_0$ and $\mu_1$ inside $\Delta(\Theta)$. Lemma \ref{lem:intermediate_value} implies that any such line must contain a belief $\tilde\mu$ such that $u\in u(\tilde \mu)$, which implies the existence of a separating $\Pi(t,u)$. This argument also implies that $\Pi(t,u)$ must contain a belief from $[\mu_0,\mu_1]$. 
 $\mu(t-\varepsilon)$ and $\mu(t+\varepsilon)$ that are sufficiently close to $\mu(t)$. By Assumption \ref{assp:genericity_1} we can choose $\varepsilon$ such that the best responses to $\mu(t\pm \varepsilon)$ must be singletons. Suppose without loss that $u(\mu(t+\varepsilon))>  u> u(\mu(t-\varepsilon))$. The fact that $u\in[u(\mu(t-\varepsilon)),u(\mu(t+\varepsilon))]$ is implied by the optimality of the three points (see Remark \ref{remark:mixed}). If one of the inequalities was weak, perturbing $\varepsilon$ would make the inequalities strict by Assumption \ref{assp:genericity_1}. If the region that attains $u$ is thick, the proof extends with proper adjustments.  Lemma \ref{lem:intermediate_value} implies that any continuous path that connects $\mu(t-\varepsilon)$ to $\mu(t+\varepsilon)$ must always contain a belief that attains $u$. This implies the existence of a separating space $\Pi(t,u)$. The fact that it is contained in $\Pi(t)$ is implied since we chose $\varepsilon$ arbitrarily small.

By UHC of the best response correspondence, the family of sets $\Pi(t,u)$ can be chosen to vary continuously with $t$ and $u\in u(\mu(t))$. Hence, any belief $\mu$ must belong to some $\Pi(t,u)$. Formally, $\Delta(\Theta)\subseteq \cup_{t\in[0,1],u\in u(\mu(t))}\Pi(t,u)$. This implies the following fact.

\textit{Fact.} The sets $\Pi(t,u)$ create a "sandwich" slicing of $\Delta(\Theta)$. That is,  two sets in this family can touch, but cannot separate one another into disjoint pieces. To show this, suppose, on the contrary, that $\Pi(t_1,u_1)$ and $\Pi(t_2,u_2)$ strictly intersect. This must happen in the interior of $\Delta(\Theta)$.
%Incomplete. What if the intersection happens on the boundary?
 If they intersect at some $\mu_0$, then there exists $\sigma,\sigma'\in a(\mu_0)$ such that $u_1=\mathbb E_\sigma[u_A(a)] $, $u_2=\mathbb E_{\sigma'}[u_A(a)] $ and $\sigma\sim_{u_P}\sigma'$. Consider the set of beliefs, at which the Principal is indifferent between $\sigma$ and $\sigma'$. It is a hyperplane (by Assumption \ref{assp:genericity_1}), that divides $\Delta(\Theta)$ into two parts: where $\sigma\succ \sigma'$ and where $\sigma'\succ \sigma$. This hyperplaneplane must strictly separate either $\Pi(t_1,u_1)$ or $\Pi(t_2,u_2)$, which contradicts the optimality of $\sigma$ or $\sigma'$ on the corresponding surface.

%Suppose $t_2>t_1$. The set $\Pi(t_1,u_1)$ slices $\Delta(\Theta)$ into $\Delta_-$ and $\Delta_+$ - disjoint. Suppose $\Delta_-$ is the set that has already been "swept" by $\Pi(t,u)$ for $t\le t_1$. Then it must be that $\Pi_-=\Pi(t_2,u_2)\cap \Delta_-\neq \varnothing$. Consider the convex hull of $\Pi_-$ and $\Pi(t_1,u_1)$. It must contain a set of positive measure that has several best responses. This is because the set "between" $\Pi_-$ and $\Pi(t_1,u_1)$ has been swept by $t\le t_1$ and has been swept by $t\in [t_1,t_2]$ by continuity of $\Pi(t,u)$. This violates Assumption \ref{assp:genericity_1}.

There must exist $t_1,...,t_M$ such that $\mu_m\in \Pi(t_m,u_m)$ for all $m$. %The actions $u_m$ are unique optimal responses to $\mu_m$, therefore $u_m\in u(\mu(t_m))$. Hence, $\Pi(t_m,u_m)\subseteq \Pi(t_m)$ for all $m$. Therefore, for any  $\mu \in \Pi(t_m,u_m)$ we have $u_m\in u(\mu)$. Finally, 
The family $\{\Pi(t_m,u_m)\}_{m\le M}$ partitions the space $\Delta(\Theta)$ into parts, that each are "sandwiched" between some two elements of $\{\Pi(t_m,u_m)\}_{m\le M}$. 

The convex hull of beliefs $\{\mu_m\}_{m=1}^M$ is a convex polytope that contains $\rho$ in its interior. By construction, the surfaces $\Pi(t_m,u_m)$ slice this polytope through its vertices. Hence, the "sandwich" partitioning extends to the convex hull of $\{\mu_m\}_{m=1}^M$. %The surfaces $\Pi(t_m,u_m)$ and $\Pi(t_{m'},u_{m'})$ cannot intersect in a way that one strictly separates the other into disjoint parts. This would contradict them containing optimal strategies. (at any point of intersection, the principal must be indifferent between the strategies from both surfaces; the indifference plane must also cut one of the surfaces into two parts, that both have to contain the same optimal strategy, which would be a contradiction, as beliefs from the opposite sides of the indifference plane must induce different optimal strategies). This implies that the surfaces $\Pi(t_m,u_m)$ partition the convex hull of $\{\mu_m\}_{m=1}^M$ into pieces, that each are sandwiched between some two $\Pi(t_m,u_m)$ and $\Pi(t_{m'},u_{m'})$. Each such piece is contained in the convex hull of the corresponding pair $\Pi(t_m,u_m)$ and $\Pi(t_{m'},u_{m'})$. 
This implies that $\rho$ belongs to the convex hull of some $\Pi(t_{m_1})$ and $\Pi(t_{m_2})$, which proves the claim.\footnote{There is another way to prove the claim. From the "sandwich" argument each point $\mu_m$ is contained in some surface, that separates the belief space and necessarily crosses the $\mu_0$-$\mu_1$ edge that connects the  degenerate beliefs corresponding to extreme actions $0$ and $1$. Each such surface only contains beliefs with a principal's best response yielding $u_m$ to the agent. Thus, we can continuously move $u_m$ towards the $\mu_0-\mu_1$ line without affecting the agent's incentives. With this the convex hull of the beliefs moves continuously and, since it contains the prior, we can move the beliefs in a way that the prior touches one of the edges of the said hull, proving the claim.}

\begin{remark}[Mixed Strategies]

\label{remark:mixed}
Below is a more detailed version of the "sandwich" argument. Consider a belief $\mu_0$ where the principal is indifferent between two actions, $a_0$ and $a_1$, and suppose her best response set is $\{a_0, a_1\}$. Let $l$ be a line of beliefs passing through $\mu_0$, parametrized by $\mu(c) = \mu_0 + c\nu$. The principal's payoff difference between the two actions along this line is:
\[ D(c) = U_P(a_0, \mu(c)) - U_P(a_1, \mu(c)) \]
Since $U_P(a_0, \mu_0) = U_P(a_1, \mu_0)$, this simplifies to:
\[ D(c) = c \sum_{\theta\in\Theta}\nu(\theta)(u_P(a_0,\theta)-u_P(a_1,\theta)) = c \cdot K \]
If the constant $K \neq 0$, then for $c$ close to zero, $a_0$ is strictly optimal on one side of $\mu_0$ and $a_1$ is strictly optimal on the other. If the constant $K=0$ for any line $l$, then the agent must be indifferent between $a_0$ and $a_1$ at any belief, which contradicts Assumption \ref{assp:genericity_1}. %Now, suppose an equilibrium has a posterior $\mu_m$ that induces a mixed strategy over $\{a_0, a_1\}$. If this posterior $\mu_m$ is perturbed along the line $l$, the best response must be a pure action. A contradiction arises if one attempts to construct a sequence of posteriors converging to $\mu_m$ that must induce a continuous sequence of pure-action best responses, while the limit point $\mu_m$ itself requires a mixed strategy. The best-response correspondence cannot be continuous in this way at a point of indifference unless the agent payoffs from the pure actions are identical, which would resolve the issue. %[[This part of the argument requires a more rigorous application of upper hemicontinuity and is not fully specified.]] 
A similar argument applies if the point of indifference $\mu_0$ lies on the boundary of the belief simplex.
\end{remark}

\end{proof}

\begin{lemma}
	
Let a mechanism $Q=(\{\pi_1,...,\pi_M\},\chi)$ be such that $u_A^{\pi}(m_1)>...>u_A^{\pi}(m_m)$. If $\mathcal E(Q)\neq \varnothing$, then for any polarizing strategy $\sigma^\star\in\Sigma^\star$ there exists a two-message equilibrium $(\sigma_A,\sigma_P)\in\mathcal E(Q)$ such that it has only messages $m_1$ and $m_2$ on-path. Moreover, $\sigma_A$ is such that $\sigma^\star$ Blackwell dominates $\sigma_A$.
\label{lem:binary_blackwell}
\end{lemma}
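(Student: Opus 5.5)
The plan is to build the required equilibrium directly on the two messages with the largest committed payoffs, $m_1$ and $m_2$. The agent's indifference condition between them is
\[
\chi\bigl(u_A^\pi(m_1)-u_A^\pi(m_2)\bigr)=(1-\chi)\bigl(u_A^{\sigma_P}(m_2)-u_A^{\sigma_P}(m_1)\bigr).
\]
The left side is a fixed positive number $d:=\frac{\chi}{1-\chi}(u_A^\pi(m_1)-u_A^\pi(m_2))$. So a two-message profile supported on $\{m_1,m_2\}$ is an equilibrium exactly when two things hold. First, the uncommitted principal's best responses give the agent the spread $u_A^{\sigma_P}(m_2)-u_A^{\sigma_P}(m_1)=d$. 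Second, no off-path message is profitable. Under the pessimistic off-path convention, a deviation to $m_k$ with $k\ge 3$ yields $\chi u_A^\pi(m_k)<\chi u_A^\pi(m_2)\le U_A(m_2)$. Hence the second condition holds automatically once $m_1,m_2$ are the top two messages, and the problem reduces to finding a binary strategy $\sigma_A$ whose induced spread equals $d$ and which is Blackwell dominated by $\sigma^\star$.

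Next I show $d\le\overline\Delta$. Since $\mathcal E(Q)\neq\varnothing$, take any equilibrium. If it has more than two messages, Lemma \ref{lem:binary} gives a two-message equilibrium in $\Sigma(Q)$. Using the top-two-message ordering, I will argue that its on-path messages can be taken to include messages whose committed gap is at least $u_A^\pi(m_1)-u_A^\pi(m_2)$. In any two-message equilibrium the committed gap must be offset by an uncommitted spread of at most $\overline\Delta$. Combined with the strict ordering of committed payoffs, this gives $d\le\overline\Delta$. This step needs care: the equilibrium from Lemma \ref{lem:binary} may use messages other than $m_1,m_2$. I would handle it by noting that for any on-path pair $m_i,m_j$ with $i<j$, the gap $u_A^\pi(m_i)-u_A^\pi(m_j)\ge u_A^\pi(m_1)-u_A^\pi(m_2)$ whenever $i=1$. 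If $m_1$ is not on path, the agent must still prefer the on-path messages to $m_1$ under the pessimistic convention, which again bounds $d$ by an uncommitted payoff difference at most $\overline\Delta$.

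The construction is by garbling the polarizing strategy. Write $\sigma^\star$ with messages relabeled so that its best responses give agent payoffs $v^\star$ on $m_2$ and $u^\star$ on $m_1$, with $v^\star-u^\star=\overline\Delta$. For $s\in[0,1]$, let $\sigma_s$ be the garbling that sends the $\sigma^\star$-message with probability $s$ and a fixed uninformative lottery with probability $1-s$. Each $\sigma_s$ is Blackwell dominated by $\sigma^\star$. Both posteriors move continuously along segments from $\rho$ to $\mu^\star(\cdot|m)$. The induced spread correspondence is upper hemicontinuous with convex values by Berge's theorem, and Lemma \ref{lem:intermediate_value} applies along these paths. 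At $s=1$ the spread is $\overline\Delta\ge d$. At $s=0$ both posteriors equal $\rho$, so the spread can be $0$, since the same best response can be chosen for both messages. A connectedness argument on the graph then yields some $s^\ast$ and best responses attaining spread exactly $d$, with the larger uncommitted payoff assigned to $m_2$.

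The main obstacle is the intermediate-value step. Lemma \ref{lem:intermediate_value} handles one belief path at a time, but here two posteriors move simultaneously and the quantity of interest is the difference of two set-valued selections. I would first try to apply the lemma to the product path, treating the spread as a UHC convex-valued correspondence of $s$, whose graph is therefore connected. If convexity of the difference set fails, the fallback is to move one posterior at a time. Keep $\mu_s(m_2)$ at $\mu^\star(\cdot|m_2)$ while garbling only the other message, adjusting the mixing weights to preserve Bayes plausibility. This is a one-dimensional path, so Lemma \ref{lem:intermediate_value} applies verbatim. Assumption \ref{assp:no_cheap_talk} is needed so that the pooled endpoint is the relevant baseline with $u^\star\le u_\rho\le v^\star$.
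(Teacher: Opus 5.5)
There is a genuine gap in your step showing $d:=\frac{\chi}{1-\chi}\bigl(u_A^\pi(m_1)-u_A^\pi(m_2)\bigr)\le\overline\Delta$. It occurs when the two-message equilibrium from Lemma \ref{lem:binary} has $m_1$ off path, with on-path messages $m_j,m_k$, $2\le j<k$. Your preliminary claim that the on-path committed gap can be taken to be at least $u_A^\pi(m_1)-u_A^\pi(m_2)$ is false in general; committed payoffs $(1,0.5,0.4)$ are a counterexample. Your fallback also fails. Under pessimistic beliefs, the constraint against deviating to $m_1$ gives $u_A^{\sigma_P}(m_j)\ge\frac{\chi}{1-\chi}\bigl(u_A^\pi(m_1)-u_A^\pi(m_j)\bigr)\ge d$, and likewise for $m_k$. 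This is a lower bound on the \emph{level} of each on-path uncommitted payoff, measured against the off-path payoff $0$. It is not a spread between the best-response payoffs of a Bayes-plausible binary structure, so the definition of $\overline\Delta$ does not bound it. The gap is real: absent Assumption \ref{assp:no_cheap_talk}, a mechanism with large $u_A^\pi(m_1)$ can have $d>\overline\Delta$ and still support an equilibrium on $\{m_2,m_3\}$ in which both uncommitted payoffs are high. In that case no equilibrium on $\{m_1,m_2\}$ exists, and the lemma fails.

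The missing ingredient is the paper's Case 2 argument, which is exactly where Assumption \ref{assp:no_cheap_talk} is needed.
\begin{enumerate}
\item Show $\overline\Delta\ge u_\rho$. Use a binary strategy with one posterior near the degenerate belief on the state whose optimal action gives the agent $0$, and the other posterior near $\rho$.
\item Then $d>\overline\Delta$ forces both on-path uncommitted payoffs strictly above $u_\rho$.
\item Apply Lemma \ref{lem:intermediate_value} on the segment $[\rho,\mu_{\mathrm{high}}]$ to get a belief that yields the lower of the two payoffs. Together with $\mu_{\mathrm{low}}$, this is a cheap-talk equilibrium giving the agent strictly more than $u_\rho$, which contradicts Assumption \ref{assp:no_cheap_talk}.
\end{enumerate}
The rest of your proposal is fine and is essentially the paper's Case 1. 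The paper picks the two posteriors independently on $[\mu_1^\star,\rho]$ and $[\rho,\mu_2^\star]$. Your one-parameter garbling also works, because $u(\mu)$ is interval-valued, so the signed spread is an upper hemicontinuous interval-valued correspondence of $s$ with connected graph. Your off-path reduction for $m_k$, $k\ge3$, is also correct. One misattribution: the ordering $u^\star\le u_\rho\le v^\star$ follows from maximality of the spread together with Lemma \ref{lem:intermediate_value}, not from Assumption \ref{assp:no_cheap_talk}.
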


\begin{proof}
With pessimistic off-path beliefs ($u_A^{\sigma_P}(\tilde m)=0$ for $\tilde m \notin \{m_1, m_2\}$), an equilibrium $(\sigma_A,\sigma_P)\in\mathcal E(Q)$ with on-path messages $m_1, m_2$ must satisfy the agent-optimality condition:
\begin{equation}
u_A^{\sigma_P}(m_2)-u_A^{\sigma_P}(m_1)=\frac{\chi}{1-\chi}(u_A^{\pi}(m_1)-u_A^{\pi}(m_2)) \label{eq:two_message_1}
\end{equation}
The off-path constraints $U_A(m_1) \ge U_A(m_k)$ for $k \ge 3$ are satisfied automatically, since $u_A^\pi(m_1)>u_A^\pi(m_k)$ and the payoff from the uncommitted principal is non-negative.

Recall that by definition of polarizing equilibria, the agent's uncommitted payoff spread in any two-message equilibrium cannot exceed $\overline \Delta$. 

Case 1: $\frac{\chi}{1-\chi}(u_A^{\pi}(m_1)-u_A^{\pi}(m_2))\le \overline\Delta$.
Let $(\mu_1^\star, \mu_2^\star)$ be the posteriors induced by $\sigma^\star$. Since the required payoff difference is less than or equal to the maximal possible difference $\overline\Delta$, by Lemma \ref{lem:intermediate_value} there exist posteriors $\mu_1 \in \Delta(\mu_1^\star, \rho)$ and $\mu_2 \in \Delta(\rho, \mu_2^\star)$ and corresponding principal's best responses yielding payoffs $u_1, u_2$ that satisfy (\ref{eq:two_message_1}). Since $\mu_1, \mu_2$ can be generated as a mean-preserving contraction of $\mu_1^\star, \mu_2^\star$, there exists a corresponding strategy $\sigma_A$ that is Blackwell-dominated by $\sigma^\star$. By construction, this $\sigma_A$ supports an equilibrium of $Q$.

Case 2: $\frac{\chi}{1-\chi}(u_A^{\pi}(m_1)-u_A^{\pi}(m_2))> \overline\Delta$.
This condition implies no two-message equilibrium can have $m_1$ on path. If $\mathcal E(Q)\neq \varnothing$, then by Lemma \ref{lem:binary} there must be a two-message equilibrium with on-path messages $m_j, m_k$ for $j, k \ge 2$. Suppose these are $m_2, m_3$. The agent's off-path constraint for deviating to $m_1$ requires $U_A(m_2) \ge U_A(m_1)$, which implies:
\[u_A^{\sigma_P}(m_2)\ge \frac{\chi}{1-\chi}(u_A^{\pi}(m_1)-u_A^{\pi}(m_2)) > \overline\Delta\]
This implies that the agent's payoff from any on-path message must exceed the maximal possible spread. Since $\overline\Delta\ge u_{\rho}$ (proven below), this contradicts Assumption \ref{assp:no_cheap_talk} and completes the proof.

\textit{Proof that }$\overline\Delta\ge u_{\rho}$. Suppose without loss that $\theta_0$ is such that $u(\theta_0)=0$ and let $\mu_0$ be the degenerate belief vector that assigns probability $1$ to state $\theta_0$. By upper hemicontinuity of the best response correspondence and since $\rho\in int(\Delta(\Theta))$, there must exist a messaging strategy $\sigma(\varepsilon)$ that induces posterior beliefs $\mu_1(\varepsilon)$ and $\mu_2(\varepsilon)$ such that as $\varepsilon \rightarrow 0$ we have $\mu_1(\varepsilon)\rightarrow \mu_0$, $\mu_2(\varepsilon)\rightarrow \rho$ and $u_i(\varepsilon)\in u(\mu_i(\varepsilon))$ for $i\in\{1,2\}$ such that $u_1(\varepsilon)\rightarrow 0$ and $u_2(\varepsilon)\rightarrow u_\rho$. Then 
$$\Delta(\sigma(\varepsilon))=|u_1(\varepsilon)-u_2(\varepsilon)|\rightarrow u_\rho-0>\overline \Delta$$

Thus, by picking $\varepsilon$ small enough we find a messaging strategy with a best response spread that exceeds $\Delta$, which is a contradiction. It follows that $\mathcal E(Q) = \varnothing$.

%\cmt{Here we could use the limiting argument. If there is a three-message equilibrium, we can show that there is a two-message equilibrium by contracting the belief simplex. But even if the two-message equilibrium somehow fails IC, the equilibria of the limiting sequence still converge to the two-message equilibrium. And this would still lead to a contradiction.}

%\textbf{Remark 1.} Similarly, we can show that $\overline{\Delta}\ge 1-u_{\rho}$. The bounds can be strengthened even further, if needed. 

%\textit{Remark 2.} Assumption \ref{assp:no_cheap_talk} is excessive here. The proof applies if $u_\rho\ge \min\{u_A^{\sigma_P}(m_2),u_A^{\sigma_P}(m_3)\}$. The primary role of Assumption \ref{assp:no_cheap_talk} is to remove scenarios when $u_\rho< \min\{u_A^{\sigma_P}(m_2),u_A^{\sigma_P}(m_3)\}$ and, compared to no communication, the agent can strictly benefit from both on-path messages. Note that in this case Lemma \ref{lem:intermediate_value} implies that there exists $\tilde\mu\in\Delta(\rho,\mu_{m_3})$ such that $u_{A}^{\sigma_P}(m_2)\in u(\tilde \mu)$, which implies non-trivial cheap talk with transparent motives (otherwise $Q$ worst-case implements $V_B$).

\end{proof}

\textit{Polarizing mechanisms are sufficient.} By Assumption \ref{assp:genericity_2} any pair $\sigma,\sigma'\in\Sigma^\star$ yields the same best responses from the principal. Therefore, in any equilibrium of any polarizing mechanism the uncommitted principal's actions are $u_1^\star,u_2^\star$ and are fixed.

\begin{lemma}
	Suppose Assumption \ref{assp:genericity_2} holds. Then $\Sigma^\star$ is convex.
\end{lemma}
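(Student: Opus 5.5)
The plan is to describe $\Sigma^\star$ by linear constraints on $\sigma_A$. Assumption \ref{assp:genericity_2} fixes the agent's uncommitted payoffs at every polarizing strategy. Once those payoffs are fixed, membership in $\Sigma^\star$ reduces to a condition on posteriors that is preserved under mixing. I identify a two-message strategy $\sigma$ with the vector $(\sigma(m_1|\theta))_{\theta\in\Theta}\in[0,1]^n$. Fix the unique polarizing pair $(u^\star,v^\star)$, labelled so that message $m_1$ carries $u^\star$ and $m_2$ carries $v^\star$. Relabeling the messages maps $\Sigma^\star$ to itself, so I first treat strategies with this labelling and handle the relabelled copy at the end. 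The first step is to show that $\sigma\in\Sigma^\star$ if and only if both messages are on path and
\[ u^\star\in u(\mu_\sigma(\cdot|m_1)) \quad\text{and}\quad v^\star\in u(\mu_\sigma(\cdot|m_2)). \]
The forward direction follows because $\Delta(\sigma)=\overline\Delta$ is attained by some best response, since $u(\mu)$ is compact. That best response produces a pair in $\mathcal U$ achieving $\max|u-v|$, and by Assumption \ref{assp:genericity_2} this pair must be $(u^\star,v^\star)$. The reverse direction is immediate from the definition of $\Delta$.

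The second step is to show that each set $B_u:=\{\mu\in\Delta(\Theta): u\in u(\mu)\}$ is convex. Recall that $u(\mu)$ is the set of expected agent payoffs from maximizers of $\sum_\theta\mu(\theta)\gamma_\theta(\cdot)$, allowing lotteries, so I work with the concavified objective. The set $u\in u(\mu)$ holds exactly when some lottery $\lambda$ with mean agent-payoff $u$ attains $\max_{\lambda'}\sum_\theta\mu(\theta)\mathbb E_{\lambda'}[\gamma_\theta]$. The argument at fixed $u$ is as follows.
\begin{itemize}
\item The map $\mu\mapsto \max_{\lambda}\sum_\theta\mu(\theta)\mathbb E_\lambda[\gamma_\theta]$ is convex, being a supremum of linear functions.
\item The map $\mu\mapsto\max_{\lambda:\mathbb E_\lambda u=u}\sum_\theta\mu(\theta)\mathbb E_\lambda[\gamma_\theta]$ is also convex.
\item $B_u$ is the set where these two convex functions coincide.
\end{itemize}
That coincidence set is not convex in general, so I would instead use the structure of the principal's optimal responses. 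If $\lambda_1$ is optimal at $\mu_1$ and $\lambda_2$ is optimal at $\mu_2$, both delivering agent payoff $u$, the task is to find a single response delivering $u$ that is optimal at the mixture $\mu_\alpha=\alpha\mu_1+(1-\alpha)\mu_2$. I would try to reuse the "sandwich" structure from the proof of Lemma \ref{lem:binary}. That structure says the level sets $\Pi(t,u)$ are hyperplane-like slices that separate $\Delta(\Theta)$ without crossing each other, so a segment joining two points of one slice cannot leave the slice. I expect this to be the main obstacle: rigorously showing that the level sets of the best-response payoff are convex. If the direct argument fails, I would fall back on Lemma \ref{lem:intermediate_value} plus the non-crossing Fact. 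Suppose $\mu_\alpha\notin B_u$. Then $u(\mu_\alpha)$ lies strictly on one side of $u$, and the intermediate value lemma along paths from $\mu_1$ and from $\mu_2$ to $\mu_\alpha$ would produce a slice of some other level that crosses $\Pi(\cdot,u)$, contradicting the Fact.

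The third step transfers convexity from posteriors to strategies. Take $\sigma,\sigma'\in\Sigma^\star$ and form $\sigma_\alpha=\alpha\sigma+(1-\alpha)\sigma'$. The posterior after $m_1$ is
\[ \mu_{\sigma_\alpha}(\cdot|m_1)=\beta\,\mu_\sigma(\cdot|m_1)+(1-\beta)\,\mu_{\sigma'}(\cdot|m_1), \qquad \beta=\frac{\alpha P_\sigma(m_1)}{\alpha P_\sigma(m_1)+(1-\alpha)P_{\sigma'}(m_1)}, \]
and the same formula with $m_2$ in place of $m_1$ gives the posterior after $m_2$. Both messages remain on path under $\sigma_\alpha$. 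The posterior after $m_1$ stays in the convex set $B_{u^\star}$ and the posterior after $m_2$ stays in $B_{v^\star}$, so $\sigma_\alpha\in\Sigma^\star$ by the first step. The remaining issue is mixing a strategy labelled one way with one labelled the other way. If $u^\star\ne v^\star$, I would argue that the convex combination either stays in $\Sigma^\star$ or that $\Sigma^\star$ should be understood modulo relabeling, which is the convention the Kakutani step needs. Under the orientation convention fixed above, the labelling issue does not arise.
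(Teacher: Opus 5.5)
\textbf{There is a genuine gap, in Step 2.} Your Steps 1 and 3 are sound and mirror the paper's argument. Step 2, however, aims to prove that every level set $B_u=\{\mu: u\in u(\mu)\}$ is convex, and that is false in general. So neither the direct attempt nor the fallback through the sandwich structure and the non-crossing Fact can close it.

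Two things make $B_u$ non-convex:
\begin{itemize}
\item $B_u$ contains every belief at which two optimal actions straddle $u_A^{-1}(u)$, since a lottery over them delivers $u$.
\item $B_u$ inherits any non-monotonicity of the best response in beliefs, and nothing in the model rules this out.
\end{itemize}
For example, with two states one can choose $u_P$ satisfying Assumption~\ref{assp:genericity_1} so that the best response to $\mu=P(\theta_2)$ behaves as follows: it is $0$ on $[0,0.3)$, jumps to $0.9$ at $0.3$, decreases linearly to $0.1$ on $[0.3,0.7]$, and jumps to $1$ at $0.7$. Then $B_{u_A(0.5)}=\{0.3,0.5,0.7\}$. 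The step you flagged as the main obstacle is hard because its target is not true.

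\textbf{The missing idea is to fix an action rather than a payoff level.} For each action $a$, the set $A_a=\{\mu: a\in\arg\max_{\tilde a}\mathbb{E}_\mu[u_P(\tilde a,\theta)]\}$ is convex. It is the intersection over $a'$ of the half-spaces $\sum_\theta\mu(\theta)[u_P(a,\theta)-u_P(a',\theta)]\ge 0$. You also need a purity observation:
\begin{itemize}
\item Label so that $u^\star>v^\star$. Take $\sigma\in\Sigma^\star$ and a best response $\sigma_P$ attaining $(u^\star,v^\star)$.
\item Then $\sigma_P(m_1)$ must be the point mass on $a^\star=u_A^{-1}(u^\star)$. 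Otherwise its support, which lies in the argmax, contains an action with agent payoff strictly above $u^\star$. Switching to that action is still a best response and yields a spread exceeding $\overline\Delta$.
\item By the same argument, $\sigma_P(m_2)$ is the point mass on $b^\star=u_A^{-1}(v^\star)$. Strict monotonicity of $u_A$ makes both actions unique.
\end{itemize}
Hence the oriented $\Sigma^\star$ is exactly the set of two-message strategies with both messages on path, $\mu_\sigma(\cdot|m_1)\in A_{a^\star}$ and $\mu_\sigma(\cdot|m_2)\in A_{b^\star}$. Your Step 3 computation then finishes the proof. This is precisely the paper's argument: $a^\star_m$ is a best response to both posteriors, hence to their mixture.

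\textbf{On orientation.} You are right that it must be fixed, but the first of your two alternatives is false. Mix $\sigma$ with its message-swapped copy at weight $1/2$. This gives $\sigma(m_1|\theta)=1/2$ for all $\theta$, so both posteriors equal $\rho$. The spread is then $0<\overline\Delta$ whenever the best response at the prior is unique. The lemma therefore holds only for the oriented set, which is also what the paper's proof tacitly uses when it indexes $a^\star_m$ by message.
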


\begin{proof}

Take $\sigma_1,\sigma_2\in\Sigma^\star$. By Assumption \ref{assp:genericity_2}, they induce the same set of best-response actions from the uncommitted principal, $\{a_m^\star\}_{m \in \{1,2\}}$. Consider $\sigma_\lambda = \lambda \sigma_1+(1-\lambda)\sigma_2$. The posterior $\mu_\lambda(\cdot|m)$ is a convex combination of $\mu_1(\cdot|m)$ and $\mu_2(\cdot|m)$. Since $a_m^\star$ is a best response to both $\mu_1(\cdot|m)$ and $\mu_2(\cdot|m)$, it is also a best response to their convex combination $\mu_\lambda(\cdot|m)$. Thus, $\sigma_\lambda$ induces the same best-response actions and is therefore polarizing. So, $\Sigma^\star$ is convex.
\end{proof}

The remainder of the proof constructs a correspondence and applies a fixed point theorem. I will use Proposition \ref{prop:virtual_revelation} to simplify the analysis by focusing focus on polarizing mechanisms with full commitment.

\begin{enumerate}
\item For an arbitrary mechanism $Q$, and for any $\sigma_0\in\Sigma^\star$, Lemma \ref{lem:binary_blackwell} guarantees the existence of a feasible strategy $\sigma_A(\sigma_0) \in \Sigma(Q)$ that is a garbling of $\sigma_0$. This map can be constructed to be continuous.
\item For each such $\sigma_A = \sigma_A(\sigma_0)$, construct a full-commitment mechanism $Q_A = (\pi_A, 1)$ such that $V(Q_A, \sigma_0) = V(Q, (\sigma_A, \sigma_P))$. This is done by defining an intermediate plan $\tilde{\pi}=\chi\pi+(1-\chi)\sigma_P$ that combines the committed and uncommitted actions from the equilibrium of $Q$, and then defining $\pi_A$ as the plan that replicates the outcome of $\tilde{\pi}$ under $\sigma_A$ when the agent's strategy is instead $\sigma_0$. Formally, construct $\pi_A(\cdot|m)=\sum_{j}\tilde\pi(\cdot|j)\kappa(j|m)$, where $\kappa$ is defined by $\sigma_A(j|\theta)=\sum_m\kappa(j|m)\sigma_0(m|\theta)$. Note that $\kappa$ continuously depends on $\sigma_0$ and $\sigma_A$. Note also that $Q_A$ is polarizing.
\item Define a correspondence $\underline{\sigma}(Q') = \argmin_{\sigma \in \Sigma^\star} V(Q', \sigma)$.
\end{enumerate}

%\textit{Continuity argument.} Informally, in the posterior belief space the strategies $\sigma_0$ and $\sigma_A(\sigma_0)$ correspond to two nested line segments that both include $\rho$. When we continuously vary $\sigma_0$, we continuously vary the corresponding posterior pairs, hence, line segments. We can always choose the nested subsegments to also continuously change with $\sigma_0$. 
	
%Note that the plans $\pi_A$ also satisfy (\ref{eqm:IC1wc}): 
%\[\E_{\pi_{A,m}}[u_A(a)]=\E_{\pi_{A,m}}[u_A(a)]= u_0\sum_j \kappa(j|m)=\E_{\tilde\pi_{m}}[u_A(a)]\]
%And let $Q_A'=(\pi_A',1)$. Note that  $V(Q_A',\sigma_0)=V((\tilde\pi,1),(\sigma_A,\sigma_P))$. 
%This is a linear program, optimized over a convex compact set. Hence $\underline\sigma(Q_A')$ is a compact, convex set. 

The composition $\Phi(\sigma_0) = \underline{\sigma}(Q_A(\sigma_A(\sigma_0)))$ is an upper hemicontinuous, compact- and convex-valued correspondence from the compact, convex set $\Sigma^\star$ to itself. By Kakutani's fixed point theorem, it has a fixed point $\sigma_0^\star$. By construction, there exists a polarizing mechanism whose worst-case payoff (under $\sigma_0^\star$) is equivalent to the payoff from some equilibrium of the original mechanism $Q$.

\textit{Remark.} The fact that $\Phi(\sigma_0)$ is upper hemicontinuous and has compact values follows from the Berge's maximum theorem. The fact that it's convex-valued follows from $\argmin$ being a linear program over a compact, convex set.

\textit{Existence of the optimal mechanism.} 

The existence of an optimal mechanism is proven by applying the Eilenberg-Montgomery fixed point theorem. Define two correspondences. First, $s(Q_0)=\argmin_{\sigma\in\Sigma^\star}V(Q,\sigma)$, which is upper hemicontinuous, compact- and convex-valued. Hence, its values are contractible. Second, \[ Q^\star(\sigma)=\argmax_{Q\text{ is }\text{(\ref{eqm:IC1wc})}}V(Q,\sigma)\]. Note that the set of mechanisms that satisfy \text{(\ref{eqm:IC1wc})} is compact and convex. The correspondence $Q^\star(\sigma)$ is upper hemicontinuous, compact- and convex-valued. Hence, its values are contractible. Finally, in an abuse of notation, define $Q^\star(Q_0)=\cup_{\sigma\in s(Q_0)}Q^\star(\sigma)$.  This is a correspondence from the compact, contractible set of polarizing mechanisms to itself. This correspondence can be shown to have contractible values. By the Eilenberg-Montgomery theorem, a fixed point exists, which corresponds to the worst-case optimal mechanism.

\subsection{Proof of Proposition \ref{prop:separation}}

Focus on some polarized actions $a_m^\star$. The set of posterior beliefs $\mathcal N_m$, to which $a_m^\star$ is a best response, contains an intersection of an $n-1$-piecewise hyperplane with the belief space $\Delta(\Theta)$, which is an $n-1$-simplex. To keep the argument simple, suppose that this intersection is just a hyperplane, no kinks. Hence, for each $m$ the set $\mathcal N_m$ contains a convex polygon. Moreover, projecting the set $\mathcal N_{m_1}$ onto $\mathcal N_{m_2}$ via connecting each point from $\mathcal N_{m_1}$ to a point in $\mathcal N_{m_2}$ (whenever such point exists) through $\rho$, yields a subset of $\mathcal N_{m_2}$ that is a convex polygon. This polygon contains all beliefs $\mu_{m_2}$ which correspond to Bayes-plausible polarizing strategies. The vertices of this polygon correspond exactly to either $\mu_{m_2}$ or  $\mu_{m_1}$ belonging to a $\theta,\theta'$-edge of $\Delta(\Theta)$. This proves the result.

\newpage
\bibliography{literature.bib}
\bibliographystyle{aer}

 \end{document}